\newtheorem{definition}{Definition}
\newtheorem{proposition}{Proposition}
\newtheorem{lemma}[proposition]{Lemma}
\newtheorem{theorem}[proposition]{Theorem}
\newtheorem{remark}{Remark}
\newtheorem{corollary}[proposition]{Corollary}
\newenvironment{proof}{\noindent \textbf{{Proof.}~}}{\hfill $\blacksquare$}
\newcommand{\nc}{\newcommand}
\nc{\bra}[1]{\langle#1|}
\nc{\ket}[1]{|#1\rangle}
\nc{\ketbra}[2]{|#1\rangle\!\langle#2|}
\nc{\braket}[2]{\langle#1|#2\rangle}
\nc{\proj}[1]{| #1\rangle\!\langle #1 |}
\nc{\tr}{\operatorname{Tr}}
\nc{\ox}{\otimes}
\nc{\supp}{{\operatorname{supp}}}
\renewcommand{\dim}{\operatorname{dim}}
\nc{\cA}{{\cal A}}
\nc{\cB}{{\cal B}}
\nc{\cC}{{\cal C}}
\nc{\cD}{{\cal D}}
\nc{\cE}{{\cal E}}
\nc{\cF}{{\cal F}}
\nc{\cG}{{\cal G}}
\nc{\cH}{{\cal H}}
\nc{\cI}{{\cal I}}
\nc{\cJ}{{\cal J}}
\nc{\cK}{{\cal K}}
\nc{\cL}{{\cal L}}
\nc{\cM}{{\cal M}}
\nc{\cN}{{\cal N}}
\nc{\cO}{{\cal O}}
\nc{\cP}{{\cal P}}
\nc{\cQ}{{\cal Q}}
\nc{\cR}{{\cal R}}
\nc{\cS}{{\cal S}}
\nc{\cT}{{\cal T}}
\nc{\cV}{{\cal V}}
\nc{\cX}{{\cal X}}
\nc{\cY}{{\cal Y}}
\nc{\cZ}{{\cal Z}}
\nc{\cW}{{\cal W}}
\def\a{\alpha}
\nc{\id}{{\operatorname{id}}}
\begin{document}
\title{\Large Quantifying the unextendibility of entanglement\footnote{A preliminary version of this paper 
has been published in the Proceedings of the 2020 IEEE International Symposium on Information
Theory (ISIT 2020)~\cite{wang2020quantification}.}}

\author{Kun Wang}
\affiliation{Institute for Quantum Computing, Baidu Research, Beijing 100193, China}

\author{Xin Wang}
\affiliation{Thrust of Artificial Intelligence, Information Hub, 
The Hong Kong University of Science and Technology (Guangzhou), Guangzhou, China}
\affiliation{Institute for Quantum Computing, Baidu Research, Beijing 100193, China}

\author{Mark M. Wilde}
\affiliation{Hearne Institute for Theoretical Physics, Department of Physics and Astronomy,
Center for Computation and Technology, Louisiana State University, Baton Rouge, Louisiana 70803, USA}
\affiliation{School of Electrical and Computer Engineering, Cornell University, Ithaca, New
York 14850, USA}

\begin{abstract}
Entanglement is a striking feature of quantum mechanics, and it has a key property called unextendibility. In this paper, we present a framework for quantifying and investigating the unextendibility of general bipartite quantum states. First, we define the unextendible entanglement, a family of entanglement measures based on the concept of a state-dependent set of free states. The intuition behind these measures is that the more entangled a bipartite state is, the less entangled each of its individual systems is  with a third party. Second, we demonstrate that the unextendible entanglement is an entanglement monotone under two-extendible quantum operations, including local operations and one-way classical communication as a special case. Normalization and faithfulness are two other desirable properties of unextendible entanglement, which we establish here. We further show that the unextendible entanglement provides efficiently computable benchmarks for the rate of exact entanglement or secret key distillation, as well as the overhead of probabilistic entanglement or secret key distillation.

\end{abstract}

\date{\today}
\maketitle

\tableofcontents

\section{Introduction}

Quantum entanglement is a striking phenomenon of quantum mechanics and a key ingredient in many quantum
information processing tasks \cite{Horodecki2009,Hayashi2017b,Wilde2017book}.
Unshareability or unextendibility  is one of the key features of entanglement \cite{W89a,DPS02,DPS04}. Unextendibility asserts that the more entangled a bipartite state is, the less entangled its individual systems can be  with a third party. Given that classical correlations can be shared among many parties, it emphasizes a key distinction between classical and quantum correlations. 
The ``monogamy of quantum entanglement'' \cite{Terhal2004}, which has been described and studied in many ways \cite{Coffman2000,Terhal2004,Koashi2004,Yang2006,Osborne2006,Adesso2007b,Lancien2016,Gour2017a}, is also connected to this special characteristic of unextendibility. 

Unextendibility of entanglement plays a key role in frontier reaseach on Bell inequalities \cite{Terhal2003}, quantum key distribution \cite{Moroder2006a,Myhr2009a,Khatri2017}, and quantum communication \cite{Nowakowski2009,Kaur2018,BBFS18}. Some recent works~\cite{Kaur2018,KDWW21} developed a systematic resource theory of unextendibility to investigate the unextendibility of bipartite quantum states.
The quantifiers of unextendibility defined in \cite{Kaur2018,KDWW21} generalize several known measures introduced in the literature \cite{Nowakowski2009,B08,Moroder2006a,HMW13}. In Remark~\ref{rem:compl-approaches-unext} below, we comment on key differences between the approach of \cite{Kaur2018,KDWW21} and our approach taken here.

In this work, we present a resource theory of unextendible entanglement by taking into account a state-dependent collection of free states, which is motivated by the fundamental importance of unextendibility in understanding quantum entanglement. In particular, our contributions are as follows:
\begin{enumerate}

\item We first introduce the unextendible entanglement, which is a class of entanglement measures that minimizes the generalized
divergence between a given state $\rho_{AB}$ and any possible reduced state $\rho_{AB'}$ of an extension $\rho_{ABB'}$ of $\rho_{AB}$ (Section~\ref{sec:quant-unext}). This method is based on the idea that a bipartite state's separate systems are less entangled with a third party, the more entangled the state is.

\item 
Second, we show that our measures of unextendible entanglement are monotone under two-extendible operations. This also means that they are monotone under local operations and one-way classical communication (1-LOCC) (Section~\ref{sec:generalized divergence}). We also prove faithfulness and normalization of unextendible entanglement, two more desirable characteristics of the family of entanglement measures. We additionally take into account a variety of important R\'enyi relative entropy-based unextendible entanglement examples (Section~\ref{sec:alpha unext}), some of which are efficiently computable (Section~\ref{sec:efficiently-comp-unext}).

\item Third, we show how the unextendible entanglement can be utilized to effectively establish fundamental limitations on both the overhead and rate of secret key and entanglement distillation (Section~\ref{sec:applications}). These most recent findings suggest that the unextendible entanglement provides a useful lens into the virtues of entanglement as a resource.
\end{enumerate}

\begin{remark}[Comparison of our results with \cite{Kaur2018,KDWW21}]
\label{rem:compl-approaches-unext}
Before we delve into the rest of the paper, let us  briefly remark on the similarities and differences between our present paper and \cite{Kaur2018,KDWW21}. The main aspect in which these works are similar is that they both establish ways of quantifying entanglement in terms of the notion of unextendibility. However, beyond this conceptual link, the works are completely different in their approaches and the resulting applications.

The papers \cite{Kaur2018,KDWW21} establish measures of unextendibility in terms of a comparison between the state of interest and the set of $k$-extendible states, with the measure being called the relative entropy of unextendibility. The consequences of this choice are that the relative entropy of unextendibility does not give bounds on the asymptotic one-way distillable entanglement or the asymptotic quantum capacity, mainly due to \cite[Lemma~10]{KDWW21}, which states that the relative entropy of $k$-unextendibility can never be larger than $\log_2 k$, regardless of the system size. Thus, this quantity is only applicable for providing upper bounds on the rates of these tasks in the non-asymptotic regime.

In contrast, as stated above, here we compare the bipartite state $\rho_{AB}$ of interest, on systems $AB$, to another bipartite state $\sigma_{AB'}$ on systems $AB'$, such that there exists a tripartite state on $ABB'$ with marginals given by $\rho_{AB}$ and $\sigma_{AB'}$. Thus, this measure, called unextendible entanglement here, is completely different from the relative entropy of unextendibility from \cite{Kaur2018,KDWW21}. Furthermore, as we show in our paper, the unextendible entanglement is applicable in the asymptotic regime, giving upper bounds on the rate of probabilistic distillation of entanglement and secret key.

Thus, the approach here and the approach taken in \cite{Kaur2018,KDWW21} are complementary approaches for quantifying the unextendibility of entanglement.
\end{remark}
 
\section{Preliminaries}
\label{sec:preliminaries}

We begin in this section by establishing some notation and reviewing some definitions.

\subsection{Notations}

Throughout, quantum
systems are denoted by $A$, $B$, and $C$, have finite dimensions, and are associated with their underlying Hilbert spaces.
Systems described by the same letter are assumed
to be isomorphic: $A_1 \cong A_2$ and $A\cong A'$. The set of linear operators acting on system~$A$ is denoted by $\cL(A)$, the set of
Hermitian operators acting on system~$A$ is denoted by $\operatorname{Herm}(A)$, the set of positive semi-definite operators by
$\cP(A)$, and the set of quantum states (density operators) by $\cS(A)$. For a linear operator $L$, we use $L^T$ to denote its transpose and $L^\dagger$
to denote its conjugate transpose.
The $\alpha$-norm of $L$ is defined as $\left\Vert L\right\Vert_\alpha\coloneqq (\tr[(\sqrt{L^\dagger L})^\alpha])^{1/\alpha}$.
Especially, the trace norm is defined as $\lVert L\rVert_1 \coloneqq  \tr\sqrt{L^\dagger L}$.
The identity operator is denoted by $\1_A$.
An operator of the form $L_A\otimes\1_B$ is written simply $L_A$.
Quantum states of system $A$ are denoted by $\rho_A\in\cS(A)$ and pure (rank-one) quantum states by $\Psi_A$.
A maximally entangled state $\Phi^d_{AB}$ of Schmidt rank~$d$ is
defined as $\Phi^d_{AB} \coloneqq
\frac{1}{d}\sum_{i,j=1}^d \vert i \rangle\!\langle j \vert_A \ox \vert i \rangle\!\langle j \vert_B$,
where $\{\ket{i}_A\}_i$ and $\{\ket{i}_B\}_i$ are orthonormal bases for systems~$A$ and $B$, respectively.
A maximally entangled state $\Phi^2_{AB}$ is also known as an \emph{ebit}.
A purification of a quantum state $\rho_A\in\cS(A)$ is a pure state
$\Psi_{AR}^\rho\in\cS(AR)$ such that $\tr_R\Psi_{AR}^\rho = \rho_A$,
where $R$ is called the purifying system.
An extension of a quantum state $\rho_A\in\cS(A)$ is a quantum state (not necessarily pure)
$\rho_{AE}\in\cS(AE)$ such that $\tr_E\rho_{AE} = \rho_A$,
where $E$ is called the extension system.
Quantum channels are completely positive (CP) and trace preserving (TP) maps
from $\cL(A)$ to $\cL(B)$ and denoted by $\cN_{A\to B}\in\mathcal{Q}(A\to B)$.
The identity channel from $\cL(A)$ to $\cL(A)$ is denoted by~$\id_A$.

\subsection{Quantum divergences}

Let $\mathbb{R}$ denote the field of real numbers.
A functional $\mathbf{D}:\cS(A)\times\cS(A)\to\mathbb{R}\cup\{+\infty\}$ is a generalized
divergence~\cite{polyanskiy2010arimoto} if for arbitrary Hilbert spaces $\cH_A$ and $\cH_B$, arbitrary states
$\rho_A,\sigma_A\in\cS(A)$, and an arbitrary channel $\cN_{A\to B}\in\cQ(A\to B)$, the following data-processing inequality holds
\begin{align}
      \mathbf{D}(\rho_A\|\sigma_A)
\geq  \mathbf{D}(\cN_{A\to B}(\rho_A)\|\cN_{A\to B}(\sigma_A)).
\label{eq:DP-ineq-GD-def}
\end{align}

The data-processing inequality above implies that there is a constant $c \in \mathbb{R}$ such that $\mathbf{D}(\rho_A\|\sigma_A) \geq c$ for all states $\rho$ and $\sigma$. Indeed, we can choose the channel $\cN_{A\to B}$ in \eqref{eq:DP-ineq-GD-def} to be the trace and replace channel $\rho \to \operatorname{Tr}[\rho]\omega$, where $\omega$ is a state. Then applying \eqref{eq:DP-ineq-GD-def}, we conclude that $\mathbf{D}(\rho_A\|\sigma_A) \geq \mathbf{D}(\omega\|\omega)$. Furthermore, considering the trace and replace channel $\rho \to \operatorname{Tr}[\rho]\tau$, where $\tau$ is a state, and applying the data-processing inequality twice in opposite directions implies that $\mathbf{D}(\omega\|\omega) = \mathbf{D}(\tau\|\tau)$ for all states $\omega$ and $\tau$. Thus, we can set $c = \mathbf{D}(\omega\|\omega)$, justifying the claim, and we see that
$\mathbf{D}(\rho_A\|\sigma_A)$
takes its minimal value when the two states are equal, i.e., $\rho = \sigma$. Without loss of generality, we can then redefine the generalized divergence to be $\mathbf{D}(\rho_A\|\sigma_A) - c$, so that the redefined quantity satisfies $\mathbf{D}(\omega\|\omega) =0 $ for every state $\omega$. We make this assumption in what follows.

We then call a generalized divergence $\mathbf{D}$ faithful if $\mathbf{D}(\rho_A\|\sigma_A) =0$ if and only if $\rho_A=\sigma_A$. 
Examples of interest are the quantum relative entropy \cite{Ume62}, the
Petz-R\'{e}nyi relative entropies~\cite{petz1986quasi}, the sandwiched R\'{e}nyi
relative entropies~\cite{muller2013quantum,wilde2014strong}, and the geometric R\'{e}nyi
relative entropies~\cite{matsumoto2015,Fang2019a}.
Let us recall the definitions of these relative  entropies.
The Petz--R\'enyi relative entropy is defined for $\alpha\in(0,1)\cup(1,\infty)$ as follows:
\begin{align}
D_{\alpha}(\omega\Vert\tau) &  \coloneqq \frac{1}{\alpha-1}\log Q_{\alpha}(\omega\Vert\tau),\\
Q_{\alpha}(\omega\Vert\tau) &  \coloneqq \tr[ \omega^{\alpha}%
\tau^{  1-\alpha}],\label{eq:quasi-Petz}
\end{align}
where logarithms are taken to base two throughout this paper and
$D_1(\omega\Vert\tau)$ is defined as the limit of $D_{\alpha}(\omega\Vert\tau)$ for $\alpha\to1$.
The sandwiched R\'enyi relative entropy is defined for $\alpha\in(0,1)\cup(1,\infty)$ as
\begin{align}
\widetilde{D}_{\alpha}(\omega\Vert\tau) &  \coloneqq \frac{1}{\alpha-1}\log\widetilde{Q}_{\alpha}(\omega\Vert\tau),\label{eq:sandwiched-Renyi}\\
\widetilde{Q}_{\alpha}(\omega\Vert\tau) 
&\coloneqq  \left\Vert \omega^{1/2} \tau^{\left(1-\alpha\right)/2\alpha}\right\Vert_{2\alpha}^{2\alpha} = \tr\!\left[\left(\tau^{\left(1-\alpha\right)/2\alpha}\omega
            \tau^{\left(1-\alpha\right)/2\alpha}\right)^{\alpha}\right].
\end{align}
Moreover, $\widetilde{D}_1(\omega\Vert\tau)$ is defined
as the limit of $\widetilde{D}_{\alpha}(\omega\Vert\tau)$ for $\alpha\to1$.
The geometric R\'enyi relative entropy is defined for
$\alpha\in(0,1)\cup(1,\infty)$ as follows:
\begin{align}
\widehat{D}_{\alpha}(\omega\Vert\tau) &  \coloneqq \frac{1}{\alpha-1}\log\widehat{Q}_{\alpha}(\omega\Vert\tau),\\
\widehat{Q}_{\alpha}(\omega\Vert\tau) &  \coloneqq \tr\!\left[ \tau
\left(\tau^{-1/2}\omega\tau^{-1/2}\right)^{\alpha}\right], \label{eq:quasi-geometric}
\end{align}
and $\widehat{D}_1(\omega\Vert\tau)$ is defined
as the limit of $\widehat{D}_{\alpha}(\omega\Vert\tau)$ for $\alpha\to1$.
For all of the above quantities,
 if $\alpha >1$ and $\operatorname{supp}(\omega)\not\subseteq \operatorname{supp}(\tau)$, then
\begin{equation}
D_{\alpha}(\omega\Vert\tau)
=
\widetilde{D}_{\alpha}(\omega\Vert\tau)
=
\widehat{D}_{\alpha}(\omega\Vert\tau)
=+\infty.
\end{equation}
The Petz--R\'enyi relative entropy obeys the data-processing inequality for $\alpha \in [0,2]$ \cite{petz1986quasi},  the sandwiched R\'enyi relative entropy obeys the data-processing inequality for
$\alpha
\in[1/2,\infty]$ \cite{FL13,Bei13} (see also \cite{W18opt}), and the geometric R\'enyi relative entropy obeys the data-processing inequality for $\alpha \in (0,2]$ \cite{matsumoto2015,HM17}.
In the case that $\alpha=1$, we take the limit $\alpha \to 1$ to find that \cite{petz1986quasi,muller2013quantum,wilde2014strong}
\begin{equation}
D(\omega\Vert\tau) \coloneqq  \lim_{\alpha \to 1}
\widetilde{D}_{\alpha}(\omega\Vert\tau) = \lim_{\alpha \to 1}
D_{\alpha}(\omega\Vert\tau)
=\operatorname{Tr}[\omega\left[  \log\omega-\log\tau\right]  ],
\label{eq:rel-ent}
\end{equation}
where $D(\omega\Vert\tau)$ is the well-known quantum relative entropy \cite{Ume62}.
The geometric R\'enyi relative entropy converges to the Belavkin--Staszewski relative entropy \cite{BS82}
in the limit $\alpha \to 1$:
\begin{equation}
\lim_{\alpha \to 1}\widehat{D}_{\alpha}(\omega\Vert \tau)=
\widehat{D}(\omega\Vert \tau) \coloneqq  \tr[\omega\log (\omega^{1/2} \tau^{-1} \omega^{1/2})],
\end{equation}
which is known to obey the data-processing inequality  \cite{HM17}.
In the case that $\alpha=\infty$, we take the limit $\alpha \to \infty$ to find that \cite{muller2013quantum}
\begin{equation}
D_{\max}(\omega\Vert\tau)
\coloneqq  \lim_{\alpha \to \infty}\widetilde{D}_{\alpha}(\omega\Vert\tau)
=\inf\left\{  \lambda\in\mathbb{R}:\omega\leq2^{\lambda}\tau\right\}  ,
\label{eq:max-rel-ent-limit}
\end{equation}
where $D_{\max}(\omega\Vert\tau)$ is the max-relative entropy \cite{Datta2009}. See \cite{KW20book} for a detailed review of all of these quantities.

The following faithfulness property ensures that the above divergences are non-negative for normalized states
and vanishes only if both arguments are equal (proof in Appendix~\ref{appx:faithfulness-renyi-rel-ents}). 
It is useful for later analysis.

\begin{lemma}[Faithfulness]
\label{lem:faithfulness-renyi-rel-ents}
Let $\alpha \in (0,\infty)$, and let $\omega$ and $\tau$ be quantum states.
Then $D_{\alpha}(\omega\Vert\tau) \geq 0$ and $D_{\alpha}(\omega\Vert\tau) = 0$ if and only if $\omega = \tau$.
The same is true for the sandwiched and geometric R\'enyi relative entropies.
\end{lemma}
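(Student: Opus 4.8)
The plan is to derive the statement from a few standard ingredients: the data-processing inequality (DPI) in the ranges recorded above; the monotonicity of $\alpha\mapsto D_\alpha(\omega\Vert\tau)$, $\alpha\mapsto\widetilde D_\alpha(\omega\Vert\tau)$, and $\alpha\mapsto\widehat D_\alpha(\omega\Vert\tau)$ on $(0,\infty)$, which I would cite from \cite{KW20book}; the limits as $\alpha\to1$ recorded above; and Klein's inequality, i.e.\ the strong faithfulness of the Umegaki relative entropy~$D$. For non-negativity, apply the DPI to the trace-out channel $\rho\mapsto\tr[\rho]$, which sends both $\omega$ and $\tau$ to the scalar~$1$; since every generalized divergence evaluated at $(1,1)$ is~$0$, this gives $D_\alpha(\omega\Vert\tau)\geq0$ and $\widehat D_\alpha(\omega\Vert\tau)\geq0$ for $\alpha\in(0,2]$ and $\widetilde D_\alpha(\omega\Vert\tau)\geq0$ for $\alpha\in[1/2,\infty)$. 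The remaining ranges follow from monotonicity in $\alpha$: $D_\alpha\geq D_2\geq0$ and $\widehat D_\alpha\geq\widehat D_2\geq0$ for $\alpha>2$, and $\widetilde D_\alpha\geq\lim_{\beta\to0^+}\widetilde D_\beta(\omega\Vert\tau)\geq0$ for $\alpha\in(0,1/2)$ (this limit equals $-\log\tr[\Pi_\omega\tau]$, which is non-negative since $\Pi_\omega\leq\1$ implies $\tr[\Pi_\omega\tau]\leq\tr[\tau]=1$, with $\Pi_\omega$ the projection onto $\supp(\omega)$; see \cite{KW20book}).

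For the equality characterization, the ``if'' direction is a one-line computation: when $\omega=\tau$ one has $Q_\alpha(\omega\Vert\omega)=\widetilde Q_\alpha(\omega\Vert\omega)=\widehat Q_\alpha(\omega\Vert\omega)=\tr[\omega]=1$, so each divergence equals $\tfrac{1}{\alpha-1}\log 1=0$. For the ``only if'' direction, suppose one of the three divergences, generically written $\mathbf D_{\alpha_0}(\omega\Vert\tau)$, vanishes for some $\alpha_0\in(0,\infty)$. Combining monotonicity in $\alpha$ with non-negativity forces $\mathbf D_\alpha(\omega\Vert\tau)=0$ for every $\alpha\in(0,\min\{\alpha_0,1\})$, i.e.\ on a nonempty open subinterval of $(0,1)$. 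On $(0,1)$ the associated trace quantities $Q_\alpha$, $\widetilde Q_\alpha$, $\widehat Q_\alpha$ are real-analytic in $\alpha$ and positive (unless $\omega$ and $\tau$ have orthogonal supports, in which case the divergence is $+\infty$ and the claim is vacuous): they are built from the operator-valued functions $\omega^\alpha$, $\tau^{1-\alpha}$, $\tau^{(1-\alpha)/2\alpha}$ and the $\alpha$-independent operator $\tau^{-1/2}\omega\tau^{-1/2}$, all analytic on $(0,1)$, and the eigenvalue branches of $\tau^{(1-\alpha)/2\alpha}\omega\tau^{(1-\alpha)/2\alpha}$ are analytic and stay positive there (their number being constant) by analytic perturbation theory. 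Since $\log\mathbf Q_\alpha=(\alpha-1)\mathbf D_\alpha$ vanishes on a subinterval, the identity theorem gives $\log\mathbf Q_\alpha\equiv0$ on $(0,1)$, hence $\mathbf D_\alpha\equiv0$ there; letting $\alpha\to1^-$ yields $D(\omega\Vert\tau)=0$ (for $D_\alpha$ and $\widetilde D_\alpha$) or $\widehat D(\omega\Vert\tau)=0$ (for $\widehat D_\alpha$). The first gives $\omega=\tau$ by Klein's inequality; for the second, $\widehat D(\omega\Vert\tau)\geq D(\omega\Vert\tau)\geq0$ forces $D(\omega\Vert\tau)=0$, so again $\omega=\tau$.

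The step I expect to be the main obstacle is establishing the real-analyticity of $\alpha\mapsto\log\mathbf Q_\alpha(\omega\Vert\tau)$ on $(0,1)$ carefully, especially the non-integer operator powers and the eigenvalue bookkeeping for the sandwiched quantity. To sidestep it one can argue case by case: for the Petz case, $\alpha\mapsto\log Q_\alpha(\omega\Vert\tau)$ is convex on $(0,\infty)$ with value~$0$ and derivative $D(\omega\Vert\tau)$ at $\alpha=1$, so a vanishing interior value forces $D(\omega\Vert\tau)=0$; the geometric case reduces via $\widehat D_\alpha\geq D_\alpha$; and the sandwiched case with $\alpha\geq1/2$ reduces by monotonicity to $\widetilde D_{1/2}(\omega\Vert\tau)=-\log F(\omega,\tau)$ with $F$ the fidelity, leaving only $\widetilde D_\alpha$ with $\alpha\in(0,1/2)$. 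Alternatively, a R\'enyi--Pinsker inequality $\mathbf D_\alpha(\omega\Vert\tau)\geq c_\alpha\lVert\omega-\tau\rVert_1^2$ with $c_\alpha>0$ --- valid for all three divergences on $(0,\infty)$ --- makes the ``only if'' direction immediate.
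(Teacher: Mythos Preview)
Your plan is correct in outline but takes a more involved route than the paper. The paper avoids analyticity, convexity of $\log Q_\alpha$, and Pinsker bounds entirely: after establishing faithfulness on each family's DPI range by reducing (via DPI) to the classical R\'enyi relative entropy and invoking its faithfulness, it handles the remaining ranges by \emph{inter-family} comparison inequalities. Specifically, for the sandwiched case with $\alpha\in(0,1/2)$ it invokes $\alpha\, D_\alpha(\omega\Vert\tau)\le\widetilde D_\alpha(\omega\Vert\tau)$ from \cite{IRS17}, reducing to the Petz case at the same~$\alpha$ (which lies in the Petz DPI range); for the Petz and geometric cases with $\alpha>2$ it invokes $\widetilde D_\alpha\le D_\alpha$ and $\widetilde D_\alpha\le\widehat D_\alpha$ (both via the Araki--Lieb--Thirring inequality), reducing to the sandwiched case at the same~$\alpha$. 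Thus every out-of-range $\alpha$ is transferred in one line to an in-range $\alpha$ for a \emph{different} family. Your approach instead stays within each family, using monotonicity in~$\alpha$ to push the zero toward the $\alpha\to1$ limit and then invoking Klein. This works, and your alternative arguments for Petz (convexity of $\log Q_\alpha$), geometric ($\widehat D_\alpha\ge D_\alpha$), and sandwiched with $\alpha\ge1/2$ (reduction to fidelity) are clean; but the remaining sandwiched case $\alpha\in(0,1/2)$ is exactly where the paper's comparison $\widetilde D_\alpha\ge\alpha D_\alpha$ gives a one-line fix, whereas your analyticity route---while ultimately sound---requires nontrivial care (constancy of $\operatorname{rank}\!\big(\tau^{(1-\alpha)/2\alpha}\omega\,\tau^{(1-\alpha)/2\alpha}\big)$ on $(0,1)$ so that the positive eigenvalue branches stay positive), and your proposed R\'enyi--Pinsker alternative for $\widetilde D_\alpha$ with $\alpha<1/2$ is not a standard off-the-shelf result and would itself most naturally be derived via that same comparison inequality.
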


\section{Quantifying the unextendibility of entanglement}
\label{sec:quant-unext}

This section is structured as follows.
In Section~\ref{sec:state-dependent-free-state},
we define the set of free states determined by a given bipartite quantum state $\rho_{AB}$.
In Section~\ref{sec:generalized divergence}, we introduce the generalized unextendible entanglement.
In Section~\ref{sec:alpha unext}, we consider special cases of the generalized unextendible entanglement,
which include those based on the quantum relative entropy \cite{Ume62},
the Petz--R\'enyi relative entropy \cite{petz1986quasi},
the sandwiched R\'enyi relative entropy \cite{muller2013quantum,wilde2014strong},
and the geometric R\'enyi relative entropy \cite{matsumoto2015} (see also \cite{Fang2019a}).
We establish several desirable properties for these measures,
including the fact that they are monotone under two-extendible operations.

\subsection{State-dependent set of free states}
\label{sec:state-dependent-free-state}

Unlike the usual framework of quantum resource theories~\cite{Chitambar2018} and that which was established for unextendibility in
\cite{Kaur2018}, the free states in our resource theory are \textit{state-dependent}, which is inspired by the definition and meaning of unextendibility of entanglement.
Note that state-dependent resource theories
were previously considered in a different context \cite{Regula2019b}. 
To be specific, given a bipartite state $\rho_{AB}$,
the free states are those bipartite states that are possibly shareable between Alice and a third party $B'$, where the system
$B'$ is isomorphic to $B$. Mathematically, the set of free states corresponding to a state $\rho_{AB}$ is defined as follows: 
\begin{align}
\label{eq:free-states}
\cF_{\rho_{AB}} \coloneqq  \{\sigma_{AB'}\, : \rho_{AB} = \tr_{B'}[ \omega_{ABB'}], \ \sigma_{AB'}
= \tr_B [\omega_{ABB'}],\ \omega_{ABB'}\in\cS(ABB')\}.
\end{align}
Furthermore, if $\rho_{AB}\in\cF_{\rho_{AB}}$, then $\rho_{AB}$ is \textit{two-extendible},
indicating that there exists an extension~$\omega_{ABB'}$ of $\rho_{AB}$ such that
$\rho_{AB} =  \tr_{B'}[ \omega_{ABB'}]= \tr_B [\omega_{ABB'}]$.
Separable states are always two-extendible.
Interestingly, there are entangled states that are also two-extendible;
a concrete example can be found in Eq.~(7) of~\cite{doherty2014entanglement}.
As so, the resource theory under consideration differs from commonly studied 
entanglement quantifiers where separable states are usually regarded as constituting the full set of free states.

We introduce \emph{selective two-extendible operations} as the free operations under consideration,
which generalize the two-extendible channels previously defined and investigated in~\cite{Kaur2018}.
Mathematically, a selective two-extendible operation consists of a set of CP\ maps of the following form:%
\begin{equation}
\left\{  \mathcal{E}_{AB\rightarrow A^{\prime}B^{\prime}}^{y}\right\}  _{y\in \mathcal{Y}},
\end{equation}
such that $\sum_{y\in \mathcal{Y}}\mathcal{E}_{AB\rightarrow A^{\prime}B^{\prime}}^{y}$ is trace-preserving,
each map $\mathcal{E}_{AB\rightarrow A^{\prime}B^{\prime}}^{y}$ is
completely positive and two-extendible, in the sense that there exists
an extension map $\mathcal{E}_{AB_{1}B_{2}\rightarrow A^{\prime}B_{1}^{\prime}B_{2}^{\prime}}^{y}$
satisfying~\cite{Kaur2018}
\begin{enumerate}
\item Channel extension:
\begin{align}
\forall\rho_{AB_1B_2}\in\cS(AB_1B_2):
\mathcal{E}_{AB\rightarrow A^{\prime}B^{\prime}}^{y}(\rho_{AB_1})
= \tr_{B'_2}\!\left[\mathcal{E}_{AB_{1}B_{2}\rightarrow A^{\prime}B_{1}^{\prime
}B_{2}^{\prime}}^{y}(\rho_{AB_1B_2})\right],
\label{eq:marginal}
\end{align}
where $\rho_{AB_1} = \tr_{B_2}\rho_{AB_1B_2}$ is the marginal quantum state of $\rho_{AB_1B_2}$.
\item Permutation covariance:%
\begin{equation}
\mathcal{E}_{AB_{1}B_{2}\rightarrow A^{\prime}B_{1}^{\prime}B_{2}^{\prime}%
}^{y}\circ\mathcal{W}_{B_{1}B_{2}}=\mathcal{W}_{B_{1}^{\prime}%
B_{2}^{\prime}}\circ\mathcal{E}_{AB_{1}B_{2}\rightarrow A^{\prime}%
B_{1}^{\prime}B_{2}^{\prime}}^{y},
\label{eq:perm-cov}
\end{equation}
where $\mathcal{W}_{B_1B_2}$ is the unitary swap channel.
\end{enumerate}
The two conditions above ensure that the two marginal operations%
\begin{align}
\mathcal{E}_{AB_{1}\rightarrow A^{\prime}B_{1}^{\prime}}^{y}(\rho_{AB_1})  &
\coloneqq  \tr_{B^\prime_2}\!\left[\mathcal{E}_{AB_{1}B_{2}\rightarrow A^{\prime}B_{1}^{\prime
}B_{2}^{\prime}}^{y}(\rho_{AB_1B_2})\right],\\
\mathcal{E}_{AB_{2}\rightarrow A^{\prime}B_{2}^{\prime}}^{y}(\rho_{AB_2})  &
\coloneqq  \tr_{B^\prime_1}\!\left[\mathcal{E}_{AB_{1}B_{2}\rightarrow A^{\prime}B_{1}^{\prime
}B_{2}^{\prime}}^{y}(\rho_{AB_1B_2})\right],
\end{align}
where $\rho_{AB_1} = \tr_{B_2}\rho_{AB_1B_2}$ is the marginal quantum state,
are operations that are in fact equal to the original operation:%
\begin{equation}
\mathcal{E}_{AB\rightarrow A^{\prime}B^{\prime}}^{y}=\mathcal{E}%
_{AB_{1}\rightarrow A^{\prime}B_{1}^{\prime}}^{y}=\mathcal{E}_{AB_{2}%
\rightarrow A^{\prime}B_{2}^{\prime}}^{y}.
\end{equation}
We emphasize that the channel extension defined in~\eqref{eq:marginal} 
must have well defined channel marginals, which is a nontrivial property
and commonly referred to as the quantum channel marginal problem.
We refer interested readers to~\cite{hsieh2022quantum}
for comprehensive research on this problem.

A two-extendible channel \cite{Kaur2018} is a special case of
a selective two-extendible operation for which~$|\mathcal{Y}|=1$.

Now we introduce selective 1-LOCC operations as a special case of selective two-extendible operations.
Mathematically, a selective 1-LOCC operation consists of a set of CP maps of the following form:%
\begin{equation}
\left\{  \mathcal{L}_{AB\rightarrow A^{\prime}B^{\prime}}^{y}\equiv\sum
_{x \in \mathcal{X}}\mathcal{F}_{A\rightarrow A^{\prime}}^{x,y}\otimes\mathcal{G}%
_{B\rightarrow B^{\prime}}^{x,y}\right\}  _{y\in \mathcal{Y}}, \label{eq:selective-1W-LOCC}%
\end{equation}
where $\cX$ and $\cY$ are the alphabets,
each map $\mathcal{F}_{A\rightarrow A^{\prime}}^{x,y}$ is completely
positive, the sum map $\sum_{x\in \mathcal{X},y\in \mathcal{Y}}\mathcal{F}_{A\rightarrow A^{\prime}}^{x,y}$
is trace-preserving, and each map $\mathcal{G}_{B\rightarrow B^{\prime}}%
^{x,y}$ is completely positive and trace-preserving.
A 1-LOCC channel is a special case of a selective 1-LOCC operation for which~$| \mathcal{Y}| = 1$.
The fact that selective 1-LOCC operations are a special case of selective
two-extendible operations can be verified by constructing the extension map of
$\mathcal{L}_{AB\rightarrow A^{\prime}B^{\prime}}^{y}$ in
\eqref{eq:selective-1W-LOCC}\ in the following way:%
\begin{equation}
\mathcal{L}_{AB_{1}B_{2}\rightarrow A^{\prime}B_{1}^{\prime}B_{2}^{\prime}%
}^{y}\coloneqq \sum_{x}\mathcal{F}_{A\rightarrow A^{\prime}}^{x,y}\otimes
\mathcal{G}_{B_{1}\rightarrow B_{1}^{\prime}}^{x,y}\otimes\mathcal{G}%
_{B_{2}\rightarrow B_{2}^{\prime}}^{x,y}.
\end{equation}
Such a choice satisfies the channel extension and permutation covariance
properties required of selective two-extendible operations.
We emphasize that similar inclusions were observed in \cite{Kaur2018} for 
two-extendible channels and 1-LOCC channels.

\subsection{Generalized unextendible entanglement}
\label{sec:generalized divergence}

We introduce a family of entanglement measures called \textit{unextendible entanglement}.
Given a bipartite state $\rho_{AB}$, the key idea behind these measures is
to minimize the divergence between $\rho_{AB}$ and any possible reduced state $\rho_{AB'}$ 
of an extension $\rho_{ABB'}$
of $\rho_{AB}$. These measures are intuitively motivated by the fact that the 
more that a bipartite state is entangled, 
the less that each of its individual systems is entangled with a third party.

\begin{definition}[Generalized unextendible entanglement]
\label{def:gen-unext-ent}
The generalized unextendible entanglement of a bipartite state $\rho_{AB}$ is defined as%
\begin{equation}
\label{eq:gd ext}
\mathbf{E}^{u}(\rho_{AB})
\coloneqq \frac{1}{2}\inf_{\sigma_{AB'}\in\cF_{\rho_{AB}}}\mathbf{D}\left(\rho_{AB}\Vert\sigma_{AB'}\right),
\end{equation}
where the infimum ranges over all free states in $\cF_{\rho_{AB}}$, as defined in \eqref{eq:free-states}.
\end{definition}

\begin{remark}[Connection to joinability]
In \cite[Definition~1]{PhysRevA.88.032323}, the concept of joinability of two bipartite states was considered. That is, two states $\sigma_{AB}$ and $\tau_{AB'}$ are said to be joinable if there exists a tripartite state $\omega_{ABB'}$ such that $\operatorname{Tr}_{B'}[\omega_{ABB'}]=\sigma_{AB}$ and $\operatorname{Tr}_{B}[\omega_{ABB'}]=\tau_{AB'}$. With this in mind, the measure in \eqref{eq:gd ext} can be understood as quantifying how unjoinable a bipartite state~$\rho_{AB}$ is with other bipartite states of systems $A$ and $B'$. 
\end{remark}


\begin{remark}[Channel representation]\label{remark:purification-extension}
In~\cite{Christandl2004}, it was established that a state $\rho_{ABB'}$ is an extension of $\rho_{AB}$ if and
only if there exists a quantum channel $\cR_{C\to B'}$ such that $\rho_ {ABB'} =
\cR_{C\to B'}(\Psi_{ABC})$, where  $\Psi_{ABC}$ is a purification of $\rho_{AB}$. Using this correspondence, the generalized unextendible entanglement can be defined in a
dynamical way as
\begin{equation}
\mathbf{E}^{u}(\rho_{AB})
\coloneqq  \frac{1}{2}\inf_{\cR_{C\to B'}}\left\{\mathbf{D}(\rho_{AB}\Vert\rho_{AB'}):
\rho_{ABB'} = \left(\id_{AB}\ox\cR_{C\to B'}\right)\left(\proj{\Psi}_{ABC}\right)\right\},
\end{equation}
where the infimum ranges over every quantum channel $\cR_{C\to B'}$.
\end{remark}

In the following, we show that the generalized unextendible entanglement possesses
the desirable monotonicity property required for a valid entanglement measure~\cite{Horodecki2009}.
Notice that we have only proven the monotonicity for (selective) two-extendible operations,
which include (selective) 1-LOCC operations as special cases.
We do not expect the unextendible entanglement to be monotone under all LOCC operations.

\begin{theorem}
[Two-extendible monotonicity]
\label{thm:gen-div-two-ext-mono}
The generalized unextendible entanglement does not increase
under the action of a two-extendible channel. That is, the following
inequality holds for an arbitrary bipartite quantum state $\rho_{AB}$
and an arbitrary two-extendible channel $\cE_{AB\rightarrow A^{\prime}B^{\prime}}$:
\begin{equation}
\mathbf{E}^{u}(\rho_{AB})\geq\mathbf{E}^{u}(\cE_{AB\rightarrow A^{\prime
}B^{\prime}}(\rho_{AB})).\label{eq:-gen-unext-monotone}%
\end{equation}
\end{theorem}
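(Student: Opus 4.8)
The plan is to exploit the channel (dynamical) representation of $\mathbf{E}^u$ from Remark~\ref{remark:purification-extension} together with the extension structure of a two-extendible channel. First I would fix a purification $\Psi_{ABC}$ of $\rho_{AB}$ and, given an arbitrary free state $\sigma_{AB'}\in\cF_{\rho_{AB}}$, write it as $\sigma_{AB'} = \tr_C[(\id_{AB}\ox\cR_{C\to B'})(\proj{\Psi}_{ABC})]$ for some channel $\cR_{C\to B'}$, equivalently as $\tr_B[\rho_{ABB'}]$ for an extension $\rho_{ABB'}$ of $\rho_{AB}$. The goal is then to produce, for each such extension, an extension $\tau_{A'B'B''}$ of $\cE_{AB\to A'B'}(\rho_{AB})$ whose reduced state $\tau_{A'B''}$ satisfies $\mathbf{D}(\cE_{AB\to A'B'}(\rho_{AB})\Vert\tau_{A'B''}) \le \mathbf{D}(\rho_{AB}\Vert\sigma_{AB'})$; taking infima on both sides then yields the claim after multiplying by $\tfrac12$.

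The key construction is as follows. Let $\cE_{AB_1B_2\to A'B_1'B_2'}$ be the extension map of $\cE_{AB\to A'B'}$ guaranteed by the channel-extension and permutation-covariance conditions. Apply it to $\rho_{AB_1B_2}\coloneqq\rho_{ABB'}$, identifying $B_1\equiv B$ and $B_2\equiv B'$, to obtain $\tau_{A'B_1'B_2'}\coloneqq \cE_{AB_1B_2\to A'B_1'B_2'}(\rho_{ABB'})$. By the channel-extension property~\eqref{eq:marginal}, tracing out $B_2'$ gives $\tau_{A'B_1'} = \cE_{AB\to A'B'}(\rho_{AB})$, so $\tau_{A'B_1'B_2'}$ is a genuine extension of $\cE_{AB\to A'B'}(\rho_{AB})$ and hence $\tau_{A'B_2'}\in\cF_{\cE_{AB\to A'B'}(\rho_{AB})}$ (after relabeling $B_2'$ as the third party). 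Now I would compute the other marginal: tracing out $B_1'$ and using the marginal-equality $\cE_{AB_2\to A'B_2'} = \cE_{AB\to A'B'}$ (the consequence of the two conditions stated just before the theorem), together with $\tr_{B}[\rho_{ABB'}] = \sigma_{AB'}$, gives $\tau_{A'B_2'} = \cE_{AB\to A'B'}(\sigma_{AB'})$. Finally, apply the data-processing inequality for the generalized divergence $\mathbf{D}$ under the channel $\cE_{AB\to A'B'}$:
\begin{align}
\mathbf{D}(\rho_{AB}\Vert\sigma_{AB'})
&\geq \mathbf{D}(\cE_{AB\to A'B'}(\rho_{AB})\Vert\cE_{AB\to A'B'}(\sigma_{AB'}))\\
&= \mathbf{D}(\cE_{AB\to A'B'}(\rho_{AB})\Vert\tau_{A'B_2'})\\
&\geq \inf_{\sigma'_{A'B''}\in\cF_{\cE_{AB\to A'B'}(\rho_{AB})}} \mathbf{D}(\cE_{AB\to A'B'}(\rho_{AB})\Vert\sigma'_{A'B''}) = 2\,\mathbf{E}^u(\cE_{AB\to A'B'}(\rho_{AB})).
\end{align}
Taking the infimum over $\sigma_{AB'}\in\cF_{\rho_{AB}}$ on the left and multiplying by $\tfrac12$ completes the argument.

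The main obstacle is verifying carefully that $\tau_{A'B_1'B_2'}$ is a valid (trace-preserving, positive) extension and correctly computing both marginals from the two defining properties of the extension map — in particular that tracing out $B_1'$ produces $\cE_{AB\to A'B'}$ acting on the $B_2$ system rather than something that mixes the two copies; this is exactly what permutation covariance~\eqref{eq:perm-cov} is there to guarantee, and I would need to spell out why the two marginal operations coincide with the original $\cE_{AB\to A'B'}$. A secondary point worth a sentence is that the extension $\rho_{ABB'}$ of $\rho_{AB}$ used as the input is an arbitrary state on $ABB'$, which is precisely the situation covered by~\eqref{eq:marginal}, so no additional purity or structure assumption is needed. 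Everything else — the reduction to the infimum over free states, the use of data processing, the factor of $\tfrac12$ — is routine bookkeeping.
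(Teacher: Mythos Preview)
Your proposal is correct and follows essentially the same argument as the paper's proof: take an arbitrary extension $\rho_{ABB'}$ of $\rho_{AB}$, push it through the extension map $\cE_{AB_1B_2\to A'B_1'B_2'}$, use the channel-extension and permutation-covariance properties to identify the two marginals as $\cE(\rho_{AB})$ and $\cE(\sigma_{AB'})$, apply data processing, and take the infimum. The initial mention of the purification representation from Remark~\ref{remark:purification-extension} is unnecessary (you never actually use $\Psi_{ABC}$ or $\cR_{C\to B'}$ in the argument), but it does no harm.
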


\begin{proof}
Let $\rho_{AB_{1}B_{2}}$ be an arbitrary extension of $\rho_{AB}$, with $\rho_{AB}=\rho_{AB_{1}%
}$. Since the channel $\cE_{AB\rightarrow AB^{\prime}}$ is two-extendible,
there exists an extending channel $\cE_{AB_{1}B_{2}\rightarrow
AB_{1}^{\prime}B_{2}^{\prime}}$ that satisfies the conditions stated in \eqref{eq:marginal}--\eqref{eq:perm-cov}.
In particular, the marginal channels $\cE_{AB_{1}\rightarrow
AB_{1}^{\prime}}$ and $\cE_{AB_{2}\rightarrow AB_{2}^{\prime}}$ are each
equal to the original channel $\cE_{AB\rightarrow AB^{\prime}}$.
Furthermore, the state $\cE_{AB_{1}B_{2}\rightarrow AB_{1}^{\prime}%
B_{2}^{\prime}}(\rho_{AB_{1}B_{2}})$ is an extension of $\cE
_{AB_{1}\rightarrow AB_{1}^{\prime}}(\rho_{AB_{1}})$ and $\cE
_{AB_{2}\rightarrow AB_{2}^{\prime}}(\rho_{AB_{2}})$. Then we conclude that%
\begin{align}
\mathbf{D}(\rho_{AB_{1}}\Vert\rho_{AB_{2}})  & \geq\mathbf{D}(\cE
_{AB\rightarrow AB^{\prime}}(\rho_{AB_{1}})\Vert\cE_{AB\rightarrow
AB^{\prime}}(\rho_{AB_{2}}))\\
& =\mathbf{D}(\cE_{AB_{1}\rightarrow AB_{1}^{\prime}}(\rho_{AB_{1}}%
)\Vert\cE_{AB_{2}\rightarrow AB_{2}^{\prime}}(\rho_{AB_{2}}))\\
& \geq2\mathbf{E}^{u}(\cE_{AB\rightarrow A^{\prime}B^{\prime}}(\rho
_{AB})).
\end{align}
The first inequality follows from monotonicity of the generalized channel
divergence under the action of the quantum channel $\cE_{AB\rightarrow
AB^{\prime}}$. The equality follows from the observations stated above. The
final inequality follows from the definition of the generalized unextendible
entanglement, by applying the fact that $\cE_{AB_{1}B_{2}\rightarrow
AB_{1}^{\prime}B_{2}^{\prime}}(\rho_{AB_{1}B_{2}})$ is an extension of
$\cE_{AB_{1}\rightarrow AB_{1}^{\prime}}(\rho_{AB_{1}})$ and
$\cE_{AB_{2}\rightarrow AB_{2}^{\prime}}(\rho_{AB_{2}})$. Since the
inequality above holds for an arbitrary extension $\rho_{AB_{1}B_{2}}$ of
$\rho_{AB}$, we can take an infimum over all such extensions and conclude \eqref{eq:-gen-unext-monotone}.
\end{proof}

\bigskip
It turns out that if the underlying generalized divergence is faithful
and continuous, then the generalized unextendible entanglement $\mathbf{E}^{u}$ is faithful,
in the sense that $\mathbf{E}^{u}(\rho_{AB}) = 0$ if and only if $\rho_{AB}$ is two-extendible.

\begin{proposition}[Faithfulness]
\label{prop:faithfulness}
If the underlying generalized divergence is faithful, then the generalized unextendible entanglement $\mathbf{E}^{u}$ is non-negative and $\mathbf{E}^{u}(\rho_{AB}) = 0$ if $\rho_{AB}$ is two-extendible.
If the underlying generalized divergence is also continuous, then the state $\rho_{AB}$ is two-extendible if $\mathbf{E}^{u}(\rho_{AB}) = 0$.
\end{proposition}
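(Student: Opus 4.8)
The plan is to treat the three assertions in turn, with the two forward implications being short and the converse requiring a compactness argument. First, for non-negativity: because $\mathbf{D}$ is a generalized divergence it obeys the data-processing inequality, so applying it to the trace-out channel $X\mapsto\tr[X]$ onto a trivial one-dimensional system gives $\mathbf{D}(\rho_{AB}\Vert\sigma_{AB'})\geq\mathbf{D}(1\Vert 1)$ for every $\sigma_{AB'}\in\cF_{\rho_{AB}}$ (using that both marginals of normalized states have unit trace, and writing $1$ for the unique state of the trivial system). Strong faithfulness forces $\mathbf{D}(1\Vert 1)=0$ since its two arguments coincide, so every term in the infimum defining $\mathbf{E}^{u}$ is non-negative, whence $\mathbf{E}^{u}(\rho_{AB})\geq 0$. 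Next, if $\rho_{AB}$ is two-extendible then by definition $\rho_{AB}\in\cF_{\rho_{AB}}$ (under the identification $B'\cong B$); choosing $\sigma_{AB'}=\rho_{AB}$ in the infimum and invoking strong faithfulness, i.e.\ $\mathbf{D}(\rho_{AB}\Vert\rho_{AB})=0$, yields $\mathbf{E}^{u}(\rho_{AB})\leq 0$, which together with non-negativity gives $\mathbf{E}^{u}(\rho_{AB})=0$.

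For the converse, suppose $\mathbf{E}^{u}(\rho_{AB})=0$ and that $\mathbf{D}$ is in addition continuous. The key structural fact I would establish is that $\cF_{\rho_{AB}}$ is compact: the set of tripartite states $\rho_{ABB'}\in\cS(ABB')$ obeying the closed affine constraint $\tr_{B'}[\rho_{ABB'}]=\rho_{AB}$ is a closed subset of the compact set of density operators (finite dimensions), hence compact, and $\cF_{\rho_{AB}}$ is its image under the continuous partial-trace map $\tr_{B}$. Consequently the infimum in \eqref{eq:gd ext} is attained at some $\sigma^{\star}_{AB'}\in\cF_{\rho_{AB}}$ with $\mathbf{D}(\rho_{AB}\Vert\sigma^{\star}_{AB'})=2\mathbf{E}^{u}(\rho_{AB})=0$, so strong faithfulness forces $\sigma^{\star}_{AB'}=\rho_{AB}$ (under $B\cong B'$); unwinding the membership $\sigma^{\star}_{AB'}\in\cF_{\rho_{AB}}$ then produces an extension $\rho^{\star}_{ABB'}$ of $\rho_{AB}$ with $\rho^{\star}_{AB}=\rho^{\star}_{AB'}=\rho_{AB}$, that is, $\rho_{AB}$ is two-extendible. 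If one prefers not to assert attainment of the infimum directly, one can instead take a minimizing sequence $\sigma^{(n)}_{AB'}$, extract a convergent subsequence by compactness, and apply continuity of $\mathbf{D}$ to its limit.

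The step I expect to be the main obstacle is the continuity and compactness bookkeeping in the converse: one must reconcile the continuity hypothesis on $\mathbf{D}$ with the fact that several R\'enyi divergences equal $+\infty$ off the support-containment region, so ``continuous'' should be read as joint continuity on the relevant domain, or, more safely, the argument should be carried out using only lower semicontinuity of $\mathbf{D}$ together with the minimizing-sequence trick, since we merely need $\mathbf{D}(\rho_{AB}\Vert\sigma^{\star}_{AB'})\leq 0$ for the limiting free state. The two forward implications are routine once strong faithfulness and the definition of $\cF_{\rho_{AB}}$ are available.
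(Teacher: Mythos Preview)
Your proposal is correct and follows essentially the same approach as the paper's proof, which is quite terse. You supply the details the paper omits: the data-processing step to a trivial system for non-negativity, and the compactness argument for $\cF_{\rho_{AB}}$ ensuring the infimum is attained in the converse direction, whereas the paper simply asserts ``by the assumption of continuity, this means that there exists an extension\ldots''.
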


\begin{proof}
Non-negativity of $\mathbf{E}^{u}$ follows from the assumption of faithfulness. 
Now suppose that $\rho_{AB}$ is two-extendible.
Then there exists an extension $\rho_{ABB'}$ such that $\rho_{AB}= \rho_{AB'}$. 
Then it follows that $\mathbf{E}^{u}(\rho_{AB}) = 0$ by definition and 
from the assumption that the underlying divergence is faithful.

Now suppose that $\mathbf{E}^{u}(\rho_{AB}) = 0$. By the assumption of continuity, this means that 
there exists an extension $\rho_{ABB'}$ of $\rho_{AB}$ 
such that $\rho_{AB}= \rho_{AB'}$. Then, by definition, $\rho_{AB}$ is two-extendible.
\end{proof}

\subsection{$\alpha$-unextendible entanglement}
\label{sec:alpha unext}

The \textit{$\alpha$-unextendible entanglement} of a bipartite state $\rho_{AB}$\ is defined by taking the generalized divergence in \eqref{eq:gd ext} to be the Petz--R\'enyi relative entropy \cite{petz1986quasi}, the sandwiched R\'enyi relative entropy $\widetilde{D}_{\alpha}%
(\omega\Vert\tau)$ \cite{muller2013quantum,wilde2014strong},
or the geometric R\'enyi relative entropy \cite{matsumoto2015}.
It is meaningful to consider the families of $\alpha$-unextendible entanglement measures
with different quantum divergences for several reasons. 
First of all, we are able prove their general properties once and for all, without analyzing them case by case.
Second, some special cases of these measures lead to novel applications, as 
we will show in Section~\ref{sec:applications}. 
It is highly possible that more of them find operational interpretations in other tasks, 
using the insights learned from the proved properties.

\begin{definition}[$\alpha$-unextendible entanglement]
The $\alpha$-Petz unextendible entanglement is defined for $\alpha \in [0,\infty)$ as
\begin{equation}
E_{\alpha}^{u}(\rho_{AB})
\coloneqq \frac{1}{2}\inf_{\sigma_{AB'}\in\cF_{\rho_{AB}}}D_{\alpha}\left(\rho_{AB}\Vert\sigma_{AB'}\right),
\end{equation}
the $\alpha$-sandwiched unextendible entanglement is defined for $\alpha
\in(0,\infty]$ as
\begin{equation}
\widetilde{E}_{\alpha}^{u}(\rho_{AB})
\coloneqq \frac{1}{2}\inf_{\sigma_{AB'}\in\cF_{\rho_{AB}}}\widetilde{D}_{\alpha}\left(\rho_{AB}\Vert\sigma_{AB'}\right),
\end{equation}
and the $\alpha$-geometric unextendible entanglement is defined for $\alpha
\in(0,\infty)$ as
\begin{equation}
\widehat{E}_{\alpha}^{u}(\rho_{AB})
\coloneqq \frac{1}{2}\inf_{\sigma_{AB'}\in\cF_{\rho_{AB}}}\widehat{D}_{\alpha}\left(\rho_{AB}\Vert\sigma_{AB'}\right).
\end{equation}
\end{definition}

In the following, we show that the $\alpha$-unextendible entanglement possesses
the desirable monotonicity property required for an entanglement measure~\cite{Horodecki2009}. More specifically, it is an immediate consequence of Theorem~\ref{thm:gen-div-two-ext-mono} and data processing that $\alpha$-unextendible entanglement is monotone under the action of a two-extendible channel for certain values of $\alpha$.

\begin{corollary}[Two-extendible monotonicity]\label{coro:two-extendible-monotonicity}
For $\alpha \in [0,2]$, the $\alpha$-Petz unextendible entanglement does not increase under the action of a two-extendible channel. For $\alpha \in [1/2,\infty]$, the $\alpha$-sandwiched unextendible entanglement does not increase under the action of a two-extendible channel.
For $\alpha \in (0,2]$, the $\alpha$-geometric unextendible entanglement does not increase under the action of a two-extendible channel.
\end{corollary}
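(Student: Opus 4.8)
The plan is to derive this directly from Theorem~\ref{thm:gen-div-two-ext-mono}, which already establishes two-extendible monotonicity of $\mathbf{E}^{u}$ for an arbitrary underlying generalized divergence $\mathbf{D}$. Inspecting the proof of that theorem, the only property of $\mathbf{D}$ that is actually invoked is the data processing inequality $\mathbf{D}(\omega\Vert\tau)\geq\mathbf{D}(\cN(\omega)\Vert\cN(\tau))$, applied once with the two-extendible channel $\cE_{AB\to AB^{\prime}}$ acting on both arguments (the remaining steps are purely structural facts about extensions and the infimum). So the entire task reduces to recording, for each of the three families, the range of $\alpha$ in which the corresponding R\'enyi relative entropy is a bona fide generalized divergence, i.e.\ obeys data processing for all quantum channels.

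Concretely, I would first recall from Section~\ref{sec:preliminaries} that the Petz--R\'enyi relative entropy $D_{\alpha}$ obeys data processing for $\alpha\in[0,2]$, the sandwiched R\'enyi relative entropy $\widetilde{D}_{\alpha}$ for $\alpha\in[1/2,\infty]$, and the geometric R\'enyi relative entropy $\widehat{D}_{\alpha}$ for $\alpha\in(0,2]$. Then, for $\alpha$ in each such range, I would instantiate $\mathbf{D}$ in Theorem~\ref{thm:gen-div-two-ext-mono} as $D_{\alpha}$, $\widetilde{D}_{\alpha}$, or $\widehat{D}_{\alpha}$ respectively, observing that with this choice the generalized unextendible entanglement $\mathbf{E}^{u}$ coincides with $E_{\alpha}^{u}$, $\widetilde{E}_{\alpha}^{u}$, or $\widehat{E}_{\alpha}^{u}$ by definition. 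The conclusion of the theorem then reads off directly as the three claimed inequalities under the action of a two-extendible channel.

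I do not expect a genuine obstacle here; the content is essentially bookkeeping of the admissible ranges of $\alpha$ together with the already-proven observation that Theorem~\ref{thm:gen-div-two-ext-mono} requires nothing beyond data processing of $\mathbf{D}$. The one point warranting care is that the stated $\alpha$-ranges are precisely the data-processing ranges: outside of them (for instance Petz with $\alpha>2$, where $E_{\alpha}^{u}$ is still defined) this argument yields nothing, so the corollary should not and does not assert monotonicity there. A secondary point is the endpoint and limiting cases $\alpha\in\{1,\infty\}$; these cause no trouble because the relevant limits---the quantum relative entropy $D$, the Belavkin--Staszewski relative entropy $\widehat{D}$, and the max-relative entropy $D_{\max}$---are themselves known to satisfy data processing, as recorded in the Preliminaries, so Theorem~\ref{thm:gen-div-two-ext-mono} applies to them verbatim.
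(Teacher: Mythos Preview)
Your proposal is correct and matches the paper's approach exactly: the paper states the corollary as an immediate consequence of Theorem~\ref{thm:gen-div-two-ext-mono} together with the data-processing ranges for the three R\'enyi families, which is precisely what you outline.
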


Notice that we have only asserted the monotonicity for two-extendible operations,
including 1-LOCC operations as special cases.
As already mentioned, we do not expect the unextendible entanglement to be monotone under all LOCC operations.

For certain values of $\alpha \geq 1$, the above monotonicity statement can be strengthened,
in the sense that the $\alpha$-unextendible entanglement is monotone under \textit{selective} two-extendible 
operations. This property is stronger than what we previously proved in Theorem~\ref{thm:gen-div-two-ext-mono}. 
As before, this implies that $\alpha$-unextendible entanglement is monotone under selective 1-LOCC operations 
because these are a particular case of selective two-extendible operations.
The proof of Theorem~\ref{thm:strong-2-ext-mono} can be found in Appendix~\ref{appx:strong-2-ext-mono}.

\begin{theorem}[Selective two-extendible monotonicity]
\label{thm:strong-2-ext-mono}
Let $\rho_{AB}$ be a bipartite state, and let $\left\{  \mathcal{E}%
_{AB\rightarrow A^{\prime}B^{\prime}}^{y}\right\}  _{y}$ be a selective
two-extendible operation. Then the following inequality holds for the $\alpha$-Petz unextendible entanglement
for all
$\alpha\in\lbrack1,2]$:%
\begin{equation}
E_{\alpha}^{u}(\rho_{AB})\geq\sum_{y:p(y)>0}p(y)E_{\alpha}^{u}(\omega
_{A^{\prime}B^{\prime}}^{y}), \label{eq:full-monotone-ineq}%
\end{equation}
where%
\begin{align}
p(y)  &  \coloneqq \operatorname{Tr}[\mathcal{E}_{AB\rightarrow A^{\prime}B^{\prime}%
}^{y}(\rho_{AB})],\label{eq:p(y)-thm}\\
\omega_{A^{\prime}B^{\prime}}^{y}  &  \coloneqq \frac{1}{p(y)}\mathcal{E}%
_{AB\rightarrow A^{\prime}B^{\prime}}^{y}(\rho_{AB}).\label{eq:omega-thm}
\end{align}
The following inequality holds for the
$\alpha$-sandwiched unextendible entanglement
for all
$\alpha\in\lbrack1,\infty]$:%
\begin{equation}
\widetilde{E}_{\alpha}^{u}(\rho_{AB})\geq\sum_{y:p(y)>0}p(y)\widetilde{E}_{\alpha}^{u}(\omega
_{A^{\prime}B^{\prime}}^{y}). \label{eq:full-monotone-ineq-sandwiched}%
\end{equation}
The following inequality holds for the
$\alpha$-geometric unextendible entanglement
for all
$\alpha\in\lbrack1,2]$:%
\begin{equation}
\widehat{E}_{\alpha}^{u}(\rho_{AB})\geq\sum_{y:p(y)>0}p(y)\widehat{E}_{\alpha}^{u}(\omega
_{A^{\prime}B^{\prime}}^{y}). \label{eq:full-monotone-ineq-geometric}%
\end{equation}
\end{theorem}

Since the Petz--, sandwiched, and geometric R\'enyi relative entropies
(and their limits in $\alpha\to1$) satisfy all of the requirements in Lemma~\ref{lem:faithfulness-renyi-rel-ents}, 
we can invoke Proposition~\ref{prop:faithfulness} and conclude the faithfulness property
for all $\alpha$-unextendible entanglement measures.

\begin{proposition}[Faithfulness]
For $\alpha \in (0,\infty)$, the $\a$-Petz unextendible entanglement of a bipartite state $\rho_{AB}$ satisfies $E_{\alpha}^u(\rho_{AB}) \ge 0$, and
 $E_{\alpha}^u(\rho_{AB}) = 0$ if and only if $\rho_{AB}$ is two-extendible.
For $\alpha \in (0,\infty]$, the $\a$-sandwiched unextendible entanglement of a bipartite state $\rho_{AB}$ satisfies $\widetilde{E}_{\alpha}^u(\rho_{AB}) \ge 0$, and
 $\widetilde{E}_{\alpha}^u(\rho_{AB}) = 0$ if and only if $\rho_{AB}$ is two-extendible.
For $\alpha \in (0,\infty)$, the $\a$-geometric unextendible entanglement of a bipartite state $\rho_{AB}$ satisfies $\widehat{E}_{\alpha}^u(\rho_{AB}) \ge 0$, and
 $\widehat{E}_{\alpha}^u(\rho_{AB}) = 0$ if and only if $\rho_{AB}$ is two-extendible.
\end{proposition}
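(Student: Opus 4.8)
The plan is to obtain this proposition as a direct consequence of Proposition~\ref{prop:faithfulness} together with Lemma~\ref{lem:faithfulness-renyi-rel-ents}, since $E_{\alpha}^{u}$, $\widetilde{E}_{\alpha}^{u}$, and $\widehat{E}_{\alpha}^{u}$ are precisely the instances of the generalized unextendible entanglement $\mathbf{E}^{u}$ obtained by choosing the underlying generalized divergence $\mathbf{D}$ to be $D_{\alpha}$, $\widetilde{D}_{\alpha}$, or $\widehat{D}_{\alpha}$, respectively. The two hypotheses of Proposition~\ref{prop:faithfulness} that must be checked are strong faithfulness and continuity of the relevant divergence on the stated range of $\alpha$.

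For strong faithfulness, Lemma~\ref{lem:faithfulness-renyi-rel-ents} already asserts exactly that $D_{\alpha}(\omega\Vert\tau)\geq 0$ with equality if and only if $\omega=\tau$, and likewise for $\widetilde{D}_{\alpha}$ and $\widehat{D}_{\alpha}$, for all $\alpha\in(0,\infty)$; this handles the Petz and geometric cases and the sandwiched case for finite $\alpha$. The only remaining case is $\alpha=\infty$ for the sandwiched family. I would treat $\widetilde{D}_{\infty}=D_{\max}$ directly from \eqref{eq:max-rel-ent-limit}: for states $\omega,\tau$, if $\omega\leq 2^{\lambda}\tau$ then taking traces gives $1\leq 2^{\lambda}$, hence $\lambda\geq 0$, so $D_{\max}(\omega\Vert\tau)\geq 0$; and $D_{\max}(\omega\Vert\tau)=0$ forces $\omega\leq\tau$, which together with $\operatorname{Tr}[\omega]=\operatorname{Tr}[\tau]=1$ yields $\omega=\tau$. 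Thus all three divergences, including $D_{\max}$, are strongly faithful on their stated ranges, and the first part of Proposition~\ref{prop:faithfulness} immediately gives non-negativity of $E_{\alpha}^{u}$, $\widetilde{E}_{\alpha}^{u}$, $\widehat{E}_{\alpha}^{u}$, and the implication that they vanish whenever $\rho_{AB}$ is two-extendible.

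For the converse implication $\mathbf{E}^{u}(\rho_{AB})=0\Rightarrow\rho_{AB}$ two-extendible, the key is that the infimum defining $\mathbf{E}^{u}(\rho_{AB})$ is attained: the set of extensions $\rho_{ABB'}\in\cS(ABB')$ of $\rho_{AB}$ is compact, and $\cF_{\rho_{AB}}$ is its image under the continuous partial-trace map $\rho_{ABB'}\mapsto\operatorname{Tr}_{B}[\rho_{ABB'}]$, hence also compact; since each of $D_{\alpha}$, $\widetilde{D}_{\alpha}$, $\widehat{D}_{\alpha}$, and $D_{\max}$ is lower semicontinuous in its second argument, a minimizing sequence $\sigma^{(n)}_{AB'}\in\cF_{\rho_{AB}}$ with $\mathbf{D}(\rho_{AB}\Vert\sigma^{(n)}_{AB'})\to 0$ has a convergent subsequence with limit $\sigma^{\star}_{AB'}\in\cF_{\rho_{AB}}$ satisfying $\mathbf{D}(\rho_{AB}\Vert\sigma^{\star}_{AB'})=0$. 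Strong faithfulness then gives $\sigma^{\star}_{AB'}=\rho_{AB}$, so the extension $\rho_{ABB'}$ realizing $\sigma^{\star}_{AB'}$ witnesses two-extendibility of $\rho_{AB}$; this is exactly the argument behind the second part of Proposition~\ref{prop:faithfulness}, with lower semicontinuity standing in for full continuity.

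I expect the only mildly delicate point to be this last attainment argument. For $\alpha\in(0,1)$ the divergences $D_{\alpha}$ and $\widetilde{D}_{\alpha}$ are in fact finite and jointly continuous on pairs of states, so there is nothing to verify; for $\alpha\geq 1$ (and for $D_{\max}$ and $\widehat{D}_{\alpha}$ with $\alpha>1$) one must instead use lower semicontinuity together with the observation that, under the hypothesis $\mathbf{E}^{u}(\rho_{AB})=0$, the minimizing sequence has eventually finite divergence, so the fact that $\mathbf{D}$ may equal $+\infty$ on part of $\cF_{\rho_{AB}}$ causes no difficulty. Everything else is a routine invocation of the two earlier results, combining the two directions to obtain the stated ``if and only if.''
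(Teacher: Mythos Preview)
Your proposal is correct and follows essentially the same approach as the paper: invoke Proposition~\ref{prop:faithfulness} after checking its hypotheses via Lemma~\ref{lem:faithfulness-renyi-rel-ents}. Your treatment is in fact more careful than the paper's one-line proof, since you explicitly handle the $\alpha=\infty$ case for $D_{\max}$ (which Lemma~\ref{lem:faithfulness-renyi-rel-ents} does not cover) and replace the continuity hypothesis of Proposition~\ref{prop:faithfulness} by the more appropriate compactness-plus-lower-semicontinuity argument needed when the divergence can take the value $+\infty$.
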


Before stating the next proposition, recall that the Petz--R\'enyi mutual information of a bipartite state $\rho_{AB}$ is defined as \cite[Eq.~(6.3) and Corollary~8]{GW15}
\begin{equation}
I_{\alpha}(A;B)_{\rho} \coloneqq  \min_{\sigma_{B}}
D_{\alpha}(\rho_{AB}\| \rho_A \ox \sigma_{B})
= \frac{\alpha}{\alpha-1}\log \tr\!\left[ \left(\tr_A[\rho_A^{1-\alpha}\rho_{AB}^{\alpha}]\right)^{1/\alpha}\right],
\label{eq:Petz-Renyi-MI}
\end{equation}
and the sandwiched R\'enyi mutual information of a bipartite state $\rho_{AB}$ is defined as \cite{Bei13,GW15}
\begin{equation}
\widetilde{I}_{\alpha}(A;B)_{\rho} \coloneqq  \min_{\sigma_{B}}
\widetilde{D}_{\alpha}(\rho_{AB}\| \rho_A \ox \sigma_{B}),
\label{eq:sandwiched-Renyi-MI}
\end{equation}
where the minimization in both cases is with respect to every quantum state $\sigma_{B}$.
We also define the geometric R\'enyi mutual information of $\rho_{AB}$ in a similar way:
\begin{equation}
\widehat{I}_{\alpha}(A;B)_{\rho} \coloneqq  \min_{\sigma_{B}}
\widehat{D}_{\alpha}(\rho_{AB}\| \rho_A \ox \sigma_{B}).
\label{eq:geometric-Renyi-MI}
\end{equation}

Also, recall that the R\'enyi entropy of a quantum state $\rho_A$ is
defined for $\alpha\in(0,1)\cup(1,\infty)$ as
\begin{equation}
H_{\alpha}(A)_\rho \coloneqq  \frac{1}{1-\alpha} \log  \tr[\rho_A^{\alpha}],
\end{equation}
and it is defined for $\alpha \in \{0,1,\infty\}$ in the limit, so that
\begin{align}
H_0(A)_\rho & = \log  \operatorname{rank}(\rho_A), \\
H(A)_\rho 	&\coloneqq  H_1(A)_\rho = -\tr[\rho_A \log  \rho_A],\\
H_{\min}(A)_{\rho} & \coloneqq  H_{\infty}(A)_{\rho} = -\log  \left\Vert \rho_A\right\Vert_{\infty},
\end{align}
where $\operatorname{rank}(\rho_A)$ is the rank of $\rho_A$.

The following lemmas relate the $\alpha$-mutual information to R\'enyi entropy when the bipartite state is pure. The first was established in the proof of \cite[Proposition~13]{SBW14}.

\begin{lemma}[\cite{SBW14}]
\label{lem:Petz-Renyi-MI-pure}
Let $\alpha \in (0,\infty)$. The $\alpha$-Petz R\'enyi mutual information of a pure bipartite state
$\psi_{AB}$ reduces to twice the $\gamma$-R\'enyi
entropy of its marginal:%
\begin{equation}
I_{\alpha}(A;B)_{\psi}=2H_{\gamma}(A)_{\psi},
\end{equation}
where $\gamma\coloneqq \left[  2-\alpha\right]/\alpha  $.
\end{lemma}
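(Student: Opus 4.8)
The plan is to compute $I_{\alpha}(A;B)_{\psi}$ directly from the closed-form expression in~\eqref{eq:Petz-Renyi-MI}, exploiting the special structure of a pure bipartite state. Write the Schmidt decomposition $\ket{\psi}_{AB} = \sum_i \sqrt{\lambda_i}\,\ket{i}_A\ox\ket{i}_B$, so that the marginal is $\rho_A = \sum_i \lambda_i \proj{i}_A$ and the pure state is $\rho_{AB} = \proj{\psi}_{AB}$. The first step is to evaluate the inner operator $\tr_A[\rho_A^{1-\alpha}\rho_{AB}^{\alpha}]$. Since $\rho_{AB}$ is a rank-one projector, $\rho_{AB}^{\alpha} = \rho_{AB}$ for every $\alpha > 0$, which already simplifies the expression substantially and removes any need to take an $\alpha$-power of a generic state.

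The key computation is then to show that $\tr_A[\rho_A^{1-\alpha}\,\proj{\psi}_{AB}]$ equals $\left(\sum_i \lambda_i^{1-\alpha}\lambda_i\right)$-weighted combination that diagonalizes on the $B$-side, or more precisely that it is an operator on $B$ whose eigenvalues I can read off. Expanding $\proj{\psi}_{AB} = \sum_{i,j}\sqrt{\lambda_i\lambda_j}\,\ketbra{i}{j}_A\ox\ketbra{i}{j}_B$ and acting with $\rho_A^{1-\alpha} = \sum_k \lambda_k^{1-\alpha}\proj{k}_A$ on the left, the partial trace over $A$ forces $i=j$ (after the $A$-trace picks out diagonal terms), leaving $\sum_i \lambda_i^{1-\alpha}\lambda_i\,\proj{i}_B = \sum_i \lambda_i^{2-\alpha}\proj{i}_B$. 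Raising this to the power $1/\alpha$ gives $\sum_i \lambda_i^{(2-\alpha)/\alpha}\proj{i}_B$, and taking the trace yields $\sum_i \lambda_i^{(2-\alpha)/\alpha} = \sum_i \lambda_i^{\gamma}$ with $\gamma = (2-\alpha)/\alpha$. Substituting into~\eqref{eq:Petz-Renyi-MI} gives $I_{\alpha}(A;B)_{\psi} = \frac{\alpha}{\alpha-1}\log\sum_i \lambda_i^{\gamma}$, and the remaining task is the bookkeeping identity $\frac{\alpha}{\alpha-1} = 2\cdot\frac{1}{1-\gamma}$ for $\gamma = (2-\alpha)/\alpha$, which recovers $2H_{\gamma}(A)_{\psi} = \frac{2}{1-\gamma}\log\sum_i\lambda_i^{\gamma}$.

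I would then handle the boundary cases $\alpha = 1$ (where $\gamma = 1$ and both sides are the von Neumann entropy, using $I_1(A;B)_{\psi} = 2H(A)_{\psi}$ for pure states, or a limiting argument) separately, noting that the formula $\gamma = (2-\alpha)/\alpha$ already covers $\alpha\in(0,1)\cup(1,\infty)$ and the $\alpha\to 1$ limit is continuous. The main obstacle, such as it is, is being careful with the partial trace over $A$ in the step $\tr_A[\rho_A^{1-\alpha}\proj{\psi}_{AB}]$: one must correctly track that the cross terms $i\neq j$ vanish and that the surviving operator is diagonal in the Schmidt basis of $B$, so that its $1/\alpha$ power and trace are immediate; everything else is routine algebra with the exponent identities. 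Since this lemma is attributed to the proof of \cite[Proposition~13]{SBW14}, I would also remark that the computation is essentially the one given there.
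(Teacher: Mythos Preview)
Your proposal is correct. The paper does not supply its own proof of this lemma---it simply attributes the result to \cite[Proposition~13]{SBW14}---so there is no in-paper argument to compare against; your direct computation from the closed form in~\eqref{eq:Petz-Renyi-MI}, together with $\psi_{AB}^{\alpha}=\psi_{AB}$ and the Schmidt-basis partial trace, is exactly the natural (and standard) proof, and the prefactor identity $\frac{\alpha}{\alpha-1}=\frac{2}{1-\gamma}$ is verified by $1-\gamma = 2(\alpha-1)/\alpha$.
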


\begin{lemma}
\label{lem:sandwiched-Renyi-MI-pure}
Let $\alpha \in (0,\infty]$. The $\alpha$-sandwiched R\'enyi mutual information of a pure bipartite state
$\psi_{AB}$ reduces to twice the $\beta$-R\'enyi
entropy of its marginal:%
\begin{equation}
\widetilde{I}_{\alpha}(A;B)_{\psi}=2H_{\beta}(A)_{\psi},
\end{equation}
where $\beta\coloneqq \left[  2\alpha-1\right]  ^{-1}$.
\end{lemma}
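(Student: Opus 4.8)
The plan is to compute $\widetilde{I}_{\alpha}(A;B)_{\psi}$ directly from its definition for a pure state, exploiting the fact that the optimal choice of $\sigma_B$ in the minimization admits a closed form. First I would recall the known closed-form expression for the sandwiched R\'enyi mutual information: for $\alpha\in(1/2,\infty)$, the optimal $\sigma_B$ is proportional to a power of a reduced operator, yielding
\begin{equation}
\widetilde{I}_{\alpha}(A;B)_{\rho}=\frac{\alpha}{\alpha-1}\log\tr\!\left[\left(\tr_A\!\left[\left(\rho_A^{(1-\alpha)/2\alpha}\,\rho_{AB}\,\rho_A^{(1-\alpha)/2\alpha}\right)^{\alpha}\right]\right)^{1/\alpha}\right],
\end{equation}
a formula established in \cite{Bei13,GW15}. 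Then I would specialize to $\rho_{AB}=\psi_{AB}=\proj{\psi}_{AB}$ pure, using the Schmidt decomposition $\ket{\psi}_{AB}=\sum_i\sqrt{\lambda_i}\ket{i}_A\ket{i}_B$, so that $\rho_A=\sum_i\lambda_i\proj{i}_A$ and $\rho_B=\sum_i\lambda_i\proj{i}_B$.

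The key computation is to evaluate the inner operator. Writing $\rho_A^{(1-\alpha)/2\alpha}=\sum_i\lambda_i^{(1-\alpha)/2\alpha}\proj{i}_A$ and acting on $\proj{\psi}_{AB}$, one finds
\begin{equation}
\rho_A^{(1-\alpha)/2\alpha}\,\psi_{AB}\,\rho_A^{(1-\alpha)/2\alpha}=\proj{\phi}_{AB},\qquad \ket{\phi}_{AB}=\sum_i\lambda_i^{1/2\alpha}\ket{i}_A\ket{i}_B,
\end{equation}
since each factor contributes $\lambda_i^{(1-\alpha)/2\alpha}$ and $\lambda_i^{1/2}\cdot\lambda_i^{(1-\alpha)/2\alpha}=\lambda_i^{1/2\alpha}$. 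This is a rank-one operator, hence equal to its own $\alpha$-th power up to the scalar $\|\phi\|^{2(\alpha-1)}=\left(\sum_i\lambda_i^{1/\alpha}\right)^{\alpha-1}$; that is, $\proj{\phi}_{AB}^{\alpha}=\left(\sum_i\lambda_i^{1/\alpha}\right)^{\alpha-1}\proj{\phi}_{AB}$. Taking the partial trace over $A$ gives a diagonal operator $\left(\sum_j\lambda_j^{1/\alpha}\right)^{\alpha-1}\sum_i\lambda_i^{1/\alpha}\proj{i}_B$, whose $1/\alpha$-th power is $\left(\sum_j\lambda_j^{1/\alpha}\right)^{(\alpha-1)/\alpha}\sum_i\lambda_i^{1/\alpha^2}\proj{i}_B$, and the remaining trace yields $\left(\sum_j\lambda_j^{1/\alpha}\right)^{(\alpha-1)/\alpha}\sum_i\lambda_i^{1/\alpha^2}$. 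Substituting into the prefactor $\frac{\alpha}{\alpha-1}\log(\cdot)$ and simplifying, the exponents combine so that the result is expressible purely in terms of $\sum_i\lambda_i^{\beta}$ with $\beta=(2\alpha-1)^{-1}$; identifying this with $2H_{\beta}(A)_{\psi}=\frac{2}{1-\beta}\log\tr[\rho_A^{\beta}]$ and checking that $\frac{2}{1-\beta}$ matches the accumulated coefficient completes the interior case.

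Finally I would handle the boundary values $\alpha\in\{1/2,1,\infty\}$ and the range issue. For $\alpha=1$ the claim $\widetilde{I}_1(A;B)_{\psi}=I(A;B)_{\psi}=2H(A)_{\psi}$ is the standard pure-state identity, matching $\beta=1$. For $\alpha=\infty$ one uses $\widetilde{I}_{\infty}(A;B)_{\psi}=2\log\tr\sqrt{\rho_A}=2H_{1/2}(A)_{\psi}$ (e.g.\ via the known formula $\widetilde{D}_{\max}$ optimization or by taking the limit of the interior expression), matching $\beta=(2\cdot\infty-1)^{-1}$ interpreted as $0$; wait, more carefully $\beta\to 0$ as $\alpha\to\infty$, but $H_0$ is the log-rank, so I would instead note $\beta=1/(2\alpha-1)\to 0^+$ and verify the limiting formula gives $H_0$ only if the interior expression degenerates correctly — I expect the main obstacle to be precisely this bookkeeping at $\alpha=\infty$, where one must be careful whether the relevant quantity is $\log\operatorname{rank}(\rho_A)$ or $2\log\tr\sqrt{\rho_A}$; checking against Lemma~\ref{lem:Petz-Renyi-MI-pure} (whose $\gamma=(2-\alpha)/\alpha$ at $\alpha=1$ also gives $1$) and against continuity of both sides in $\alpha$ on $(1/2,\infty)$ should resolve it. All exponent identities in the interior are routine algebra once the rank-one simplification is in place, so the proof reduces to that one computation plus the endpoint checks.
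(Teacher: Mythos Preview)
Your approach has a genuine gap at the very first step: the closed-form ``Sibson identity'' you invoke,
\[
\widetilde{I}_{\alpha}(A;B)_{\rho}=\frac{\alpha}{\alpha-1}\log\tr\!\left[\left(\tr_A\!\left[\left(\rho_A^{(1-\alpha)/2\alpha}\,\rho_{AB}\,\rho_A^{(1-\alpha)/2\alpha}\right)^{\alpha}\right]\right)^{1/\alpha}\right],
\]
is \emph{not} valid for the sandwiched R\'enyi mutual information. A Sibson identity of this shape holds for the Petz--R\'enyi quantity (and is exactly what the paper records in \eqref{eq:Petz-Renyi-MI}), but in the sandwiched case the optimization variable $\sigma_B$ sits inside the outer $\alpha$-th power and does not commute with the rest, so it cannot be eliminated in that form. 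Indeed, carrying your computation through for a pure state gives
\[
\log\sum_i\lambda_i^{1/\alpha}+\frac{\alpha}{\alpha-1}\log\sum_i\lambda_i^{1/\alpha^2},
\]
which is not equal to $\frac{2}{1-\beta}\log\sum_i\lambda_i^{\beta}$ with $\beta=(2\alpha-1)^{-1}$: already at $\alpha=2$ the former involves the power sums with exponents $1/2$ and $1/4$ while the latter involves exponent $1/3$, and one checks (e.g.\ expand $(a^4+b^4)^3$ versus $(a^6+b^6)(a^3+b^3)^2$) that they differ for generic Schmidt coefficients.

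The paper's proof avoids this by reversing the order of operations: it keeps $\sigma_B$ arbitrary and first uses the rank-one structure $\psi_{AB}^{1/2}=\psi_{AB}$ to collapse the sandwiched divergence to $\frac{\alpha}{\alpha-1}\log\tr[\psi_{AB}(\psi_A\otimes\sigma_B)^{(1-\alpha)/\alpha}]$. Writing $\psi_{AB}=X_A\Gamma_{AB}X_A^{\dagger}$ and using the polar decomposition of $X_A$ then reduces this to $\frac{\alpha}{\alpha-1}\log\tr[\rho_A^{1/\alpha}T(\sigma)^{(1-\alpha)/\alpha}]$, an expression in which $\sigma$ and $\rho_A$ now live on the \emph{same} system and the $\alpha$-th power has disappeared. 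Only at this point is the minimization over $\sigma$ performed, via the H\"older-type norm duality of \cite[Lemma~12]{muller2013quantum}, yielding $\frac{\alpha}{\alpha-1}\log\|\rho_A^{1/\alpha}\|_{\alpha/(2\alpha-1)}=2H_{\beta}(A)_\psi$. The rank-one simplification you identified is the right idea, but it must be applied to the full sandwiched expression \emph{before} attempting to minimize over $\sigma_B$, not after substituting an (invalid) pre-minimized formula. Your hesitation at $\alpha=\infty$ is a symptom of the same problem: once the correct reduction is in place, $\beta\to 0$ gives $H_0(A)_\psi=\log\operatorname{rank}(\psi_A)$ directly.
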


\begin{proof}
See Appendix~\ref{app:proof-lemma-sw-Rny-MI-pure}.
\end{proof}

\begin{lemma}
\label{lem:geometric-Renyi-MI-pure}
Let $\alpha \in (0,\infty)$. The $\alpha$-geometric R\'enyi mutual information of a pure bipartite state
$\psi_{AB}$ reduces to twice the zero-R\'enyi
entropy of its marginal:%
\begin{equation}
\widehat{I}_{\alpha}(A;B)_{\psi}=2H_{0}(A)_{\psi}.
\end{equation}
\end{lemma}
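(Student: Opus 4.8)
The plan is to exploit a feature specific to a pure first argument: the geometric R\'enyi relative entropy $\widehat{D}_{\alpha}(\psi_{AB}\Vert\tau)$ does not actually depend on $\alpha$ (indeed it equals $D_{\max}(\psi_{AB}\Vert\tau)$), so that the optimization over $\sigma_{B}$ becomes elementary. Throughout I use the convention, standard for the geometric R\'enyi relative entropy, that $\widehat{D}_{\alpha}(\omega\Vert\tau)=+\infty$ whenever $\supp(\omega)\not\subseteq\supp(\tau)$ for every $\alpha\in(0,\infty)$; consequently the infimum defining $\widehat{I}_{\alpha}(A;B)_{\psi}$ ranges effectively over states $\sigma_{B}$ with $\supp(\psi_{B})\subseteq\supp(\sigma_{B})$.

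First I would fix a Schmidt decomposition $\ket{\psi}_{AB}=\sum_{i=1}^{r}\sqrt{\lambda_{i}}\,\ket{i}_{A}\ket{i}_{B}$ with all $\lambda_{i}>0$ and $r=\operatorname{rank}(\psi_{A})=\operatorname{rank}(\psi_{B})$, and write $\Pi_{B}\coloneqq\sum_{i=1}^{r}\proj{i}_{B}$ for the projection onto $\supp(\psi_{B})$. Fix a state $\sigma_{B}$ with $\supp(\psi_{B})\subseteq\supp(\sigma_{B})$ and set $\tau\coloneqq\psi_{A}\ox\sigma_{B}$. Because $\psi_{AB}$ is rank one, so is $\tau^{-1/2}\psi_{AB}\tau^{-1/2}=\ketbra{\phi}{\phi}$, where $\ket{\phi}\coloneqq(\psi_{A}^{-1/2}\ox\sigma_{B}^{-1/2})\ket{\psi}$; hence $\bigl(\tau^{-1/2}\psi_{AB}\tau^{-1/2}\bigr)^{\alpha}=\braket{\phi}{\phi}^{\alpha-1}\ketbra{\phi}{\phi}$ for all $\alpha>0$. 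Substituting into \eqref{eq:quasi-geometric} and using that $\psi_{A}$ is the $A$-marginal of $\psi_{AB}$ together with $\supp(\psi_{B})\subseteq\supp(\sigma_{B})$ gives $\widehat{Q}_{\alpha}(\psi_{AB}\Vert\tau)=\braket{\phi}{\phi}^{\alpha-1}\bra{\phi}\tau\ket{\phi}=\braket{\phi}{\phi}^{\alpha-1}$, while a short computation in the Schmidt basis yields $\braket{\phi}{\phi}=\bra{\psi}(\psi_{A}^{-1}\ox\sigma_{B}^{-1})\ket{\psi}=\sum_{i=1}^{r}\bra{i}\sigma_{B}^{-1}\ket{i}_{B}=\tr[\Pi_{B}\sigma_{B}^{-1}]$. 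Therefore
\begin{equation}
\widehat{D}_{\alpha}(\psi_{AB}\Vert\psi_{A}\ox\sigma_{B})=\frac{1}{\alpha-1}\log\bigl(\tr[\Pi_{B}\sigma_{B}^{-1}]\bigr)^{\alpha-1}=\log\tr[\Pi_{B}\sigma_{B}^{-1}]
\end{equation}
for all $\alpha\in(0,1)\cup(1,\infty)$; the $\alpha\to1$ limit gives the same value, so this identity (which agrees with $D_{\max}(\psi_{AB}\Vert\psi_{A}\ox\sigma_{B})$) holds for all $\alpha\in(0,\infty)$.

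It remains to minimize $\tr[\Pi_{B}\sigma_{B}^{-1}]$ over states $\sigma_{B}$ with $\supp(\psi_{B})\subseteq\supp(\sigma_{B})$. Applying the Cauchy--Schwarz inequality for the Hilbert--Schmidt inner product to $\sigma_{B}^{-1/2}\Pi_{B}$ and $\sigma_{B}^{1/2}\Pi_{B}$ gives
\begin{equation}
r^{2}=\bigl(\tr[\Pi_{B}]\bigr)^{2}\leq\tr[\Pi_{B}\sigma_{B}^{-1}]\cdot\tr[\Pi_{B}\sigma_{B}]\leq\tr[\Pi_{B}\sigma_{B}^{-1}],
\end{equation}
where the last step uses $\tr[\Pi_{B}\sigma_{B}]\leq\tr[\sigma_{B}]=1$; equality is attained at $\sigma_{B}=\Pi_{B}/r$, for which $\tr[\Pi_{B}\sigma_{B}^{-1}]=r^{2}$. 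Hence $\widehat{I}_{\alpha}(A;B)_{\psi}=\log r^{2}=2\log\operatorname{rank}(\psi_{A})=2H_{0}(A)_{\psi}$.

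The one point that needs care is the support bookkeeping underlying the first paragraph: for $\alpha>1$, states $\sigma_{B}$ failing $\supp(\psi_{B})\subseteq\supp(\sigma_{B})$ contribute $\widehat{D}_{\alpha}=+\infty$ and are harmless, but for $\alpha\in(0,1)$ one genuinely needs the stated convention (equivalently, one can argue via $\sigma_{B}^{\varepsilon}=(1-\varepsilon)\sigma_{B}+\varepsilon\1_{B}/\dim(B)$, using $\widehat{D}_{\alpha}(\psi_{AB}\Vert\psi_{A}\ox\sigma_{B}^{\varepsilon})\leq\widehat{D}_{\alpha}(\psi_{AB}\Vert\psi_{A}\ox\sigma_{B})-\log(1-\varepsilon)$ together with lower semicontinuity to restrict the infimum to full-support $\sigma_{B}$). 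I expect this to be the only subtle step; the rest is a direct calculation.
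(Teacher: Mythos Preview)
Your proof is correct and follows the same overall architecture as the paper's: both reduce $\widehat{D}_{\alpha}(\psi_{AB}\Vert\psi_{A}\otimes\sigma_{B})$ to the $\alpha$-independent quantity $\log\tr[\Pi_{B}\sigma_{B}^{-1}]$ and then minimize over $\sigma_{B}$. The computational tools differ, however. The paper represents $\psi_{AB}=\psi_{A}^{1/2}U_{A}\Gamma_{AB}U_{A}^{\dag}\psi_{A}^{1/2}$ via the unnormalized maximally entangled vector $\ket{\Gamma}_{AB}$ and manipulates traces through that identity, whereas you exploit the rank-one structure of $\tau^{-1/2}\psi_{AB}\tau^{-1/2}$ directly to compute its $\alpha$-th power---this is shorter and avoids the $\Gamma_{AB}$ machinery. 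For the final minimization the paper invokes Lagrange multipliers to locate $\sigma_{B}=\Pi_{B}/r$, while your Cauchy--Schwarz inequality simultaneously yields the lower bound $r^{2}$ and identifies the optimizer, which is more elementary. One small caveat: your claim that $\widehat{D}_{\alpha}(\omega\Vert\tau)=+\infty$ whenever $\supp(\omega)\not\subseteq\supp(\tau)$ is not the standard convention for $\alpha\in(0,1)$ (the geometric mean is well defined there without support inclusion), but the perturbation argument you sketch---using antitonicity of $\widehat{D}_{\alpha}$ in its second argument together with the scaling $\widehat{D}_{\alpha}(\omega\Vert c\tau)=\widehat{D}_{\alpha}(\omega\Vert\tau)-\log c$---correctly reduces the infimum to $\sigma_{B}$ whose support contains $\supp(\psi_{B})$, so the proof stands.
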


\begin{proof}
See Appendix~\ref{app:proof-lemma-geometric-Rny-MI-pure}.
\end{proof}

\bigskip
The following proposition uses the  lemmas above to conclude that the $\alpha$-unextendible
entanglement reduces to R\'enyi entropy of entanglement for pure states.

\begin{proposition}[Reduction for pure states]
\label{prop:reduction-for-pure-states}
Let $\psi_{AB}$ be a pure bipartite state. Then the $\alpha$-Petz unextendible entanglement reduces to the $\gamma$-R\'enyi entropy of entanglement:
\begin{equation}
E_{\alpha}(\psi_{AB}) = H_{\gamma}(\psi_A),
\label{eq:e-alpha-unext-pure-states}
\end{equation}
where $\gamma = [2  - \alpha]/\alpha$. The $\alpha$-sandwiched unextendible entanglement reduces to the $\beta$-R\'enyi entropy of entanglement:
\begin{equation}
\widetilde{E}_{\alpha}(\psi_{AB}) = H_{\beta}(\psi_A),
\end{equation}
where $\beta = [2 \alpha - 1]^{-1}$.
The $\alpha$-geometric unextendible entanglement reduces to the zero-R\'enyi entropy of entanglement:
\begin{equation}
\widehat{E}_{\alpha}(\psi_{AB}) = H_{0}(\psi_A).
\end{equation}
\end{proposition}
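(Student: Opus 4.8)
The plan is to first pin down what the state-dependent free set $\cF_{\psi_{AB}}$ looks like when the input state is pure, and then to recognize the resulting optimization as one half of a R\'enyi mutual information, at which point the three preceding lemmas finish the argument.

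First I would argue that, for a pure state $\psi_{AB}$, every extension $\rho_{ABB'}$ must factor as $\psi_{AB}\otimes\sigma_{B'}$ for some $\sigma_{B'}\in\cS(B')$: since $\tr_{B'}[\rho_{ABB'}]=\psi_{AB}$ is pure, the systems $AB$ and $B'$ cannot be correlated. Equivalently, this is immediate from the channel representation in Remark~\ref{remark:purification-extension}: a purification of the \emph{pure} state $\psi_{AB}$ has a trivial, one-dimensional purifying system $C$, so the extending channel $\cR_{C\to B'}$ is just a state preparation. Hence, after identifying $B'\cong B$,
\begin{equation}
\cF_{\psi_{AB}}=\{\psi_A\otimes\sigma_{B}:\sigma_{B}\in\cS(B)\},
\end{equation}
and therefore, directly from the definitions \eqref{eq:Petz-Renyi-MI}, \eqref{eq:sandwiched-Renyi-MI}, and \eqref{eq:geometric-Renyi-MI} of the R\'enyi mutual informations,
\begin{equation}
E_{\alpha}^{u}(\psi_{AB})=\tfrac12 I_{\alpha}(A;B)_{\psi},\quad
\widetilde{E}_{\alpha}^{u}(\psi_{AB})=\tfrac12 \widetilde{I}_{\alpha}(A;B)_{\psi},\quad
\widehat{E}_{\alpha}^{u}(\psi_{AB})=\tfrac12 \widehat{I}_{\alpha}(A;B)_{\psi}.
\end{equation}

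Second, I would substitute the pure-state evaluations from Lemma~\ref{lem:Petz-Renyi-MI-pure}, Lemma~\ref{lem:sandwiched-Renyi-MI-pure}, and Lemma~\ref{lem:geometric-Renyi-MI-pure}, namely $I_{\alpha}(A;B)_{\psi}=2H_{\gamma}(A)_{\psi}$ with $\gamma=[2-\alpha]/\alpha$, $\widetilde{I}_{\alpha}(A;B)_{\psi}=2H_{\beta}(A)_{\psi}$ with $\beta=[2\alpha-1]^{-1}$, and $\widehat{I}_{\alpha}(A;B)_{\psi}=2H_{0}(A)_{\psi}$. The factors of $2$ cancel and the three claimed identities follow, since $H_{\gamma}(A)_{\psi}=H_{\gamma}(\psi_A)$ and likewise for the other two entropies.

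The argument is essentially bookkeeping once the first step is in hand; the only place warranting care is the reduction of the free set for pure inputs, together with checking that the infimum over $\sigma_{B}$ is exactly the minimization defining the R\'enyi mutual informations. In particular, for $\alpha>1$ one should verify that no support obstruction makes the objective identically $+\infty$: this is handled by noting that $|\psi\rangle_{AB}\in\supp(\psi_A)\otimes\supp(\psi_B)$ by the Schmidt decomposition, so one may restrict the infimum to states $\sigma_B$ with $\supp(\sigma_B)\supseteq\supp(\psi_B)$. I do not anticipate any genuine difficulty beyond this.
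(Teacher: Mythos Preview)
Your proposal is correct and follows essentially the same approach as the paper: identify that every extension of a pure state $\psi_{AB}$ is of the form $\psi_{AB}\otimes\sigma_{B'}$, recognize the resulting infimum as $\tfrac12$ of the corresponding R\'enyi mutual information, and then invoke Lemmas~\ref{lem:Petz-Renyi-MI-pure}, \ref{lem:sandwiched-Renyi-MI-pure}, and \ref{lem:geometric-Renyi-MI-pure}. Your added remarks on the channel representation and the support restriction for $\alpha>1$ are accurate but not needed beyond what the paper already assumes.
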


\begin{proof}
For a pure state $\psi_{AB}$, an arbitrary extension of it has  the form
$\sigma_{AB_1B_2}\coloneqq \psi_{AB_1}\ox\sigma_{B_2}$, where $\sigma_{B_2}$ is a state of system $B_2$.
As such, it follows that
\begin{align}
E_{\alpha}(\psi_{AB})  & = \min_{\sigma_{B_2}}\frac{1}{2}
D_{\alpha}(\psi_{AB}\| \psi_A \ox \sigma_{B_2}) \\
& = \frac{1}{2} I_{\alpha}(A;B)_{\psi} \\
& = H_{\gamma}(\psi_A).
\end{align}
The first equality follows from applying the definition of $E_{\alpha}(\psi_{AB})$,
the second equality follows from the definition in \eqref{eq:Petz-Renyi-MI},
and the final equality follows from Lemma~\ref{lem:Petz-Renyi-MI-pure}.

The conclusions about $\widetilde{E}_{\alpha}(\psi_{AB})$ and $\widehat{E}_{\alpha}(\psi_{AB})$ follow the same line of reasoning but using Lemma~\ref{lem:sandwiched-Renyi-MI-pure} and Lemma~\ref{lem:geometric-Renyi-MI-pure} instead, respectively.
\end{proof}

\bigskip

Following Proposition~\ref{prop:reduction-for-pure-states} and the fact that the R\'enyi 
entropy of the maximally mixed state (reduced state of $\Phi^d_{AB}$) 
is equal to $\log d$ for all values of $\beta$, we conclude the normalization property and state it formally as follows.

\begin{proposition}[Normalization]
\label{prop:normalization}
Let $\Phi^d_{AB}$ be a maximally entangled state of Schmidt rank $d$. Then $E_{\alpha}(\Phi^d_{AB}) = \log d$ for $\alpha \in (0,\infty)$, $\widetilde{E}_{\alpha}(\Phi^d_{AB}) = \log d$ for $\alpha \in (0,\infty]$, and $\widehat{E}_{\alpha}(\Phi^d_{AB}) = \log d$ for $\alpha \in (0,\infty)$.
\end{proposition}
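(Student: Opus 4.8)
The plan is to reduce everything to Proposition~\ref{prop:reduction-for-pure-states} together with the elementary fact that the R\'enyi entropy of a maximally mixed state is independent of its order parameter. Since $\Phi^d_{AB}$ is a pure bipartite state, Proposition~\ref{prop:reduction-for-pure-states} applies directly and yields $E_{\alpha}(\Phi^d_{AB}) = H_{\gamma}(\Phi^d_A)$ with $\gamma = [2-\alpha]/\alpha$, $\widetilde{E}_{\alpha}(\Phi^d_{AB}) = H_{\beta}(\Phi^d_A)$ with $\beta = [2\alpha-1]^{-1}$, and $\widehat{E}_{\alpha}(\Phi^d_{AB}) = H_{0}(\Phi^d_A)$, on the respective ranges of $\alpha$.

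Next I would compute the reduced state. By definition of the maximally entangled state, $\Phi^d_A = \tr_B[\Phi^d_{AB}] = \1_A/d$, the maximally mixed state on a $d$-dimensional system. For any order $\mu \in (0,1)\cup(1,\infty)$ one has $\tr[(\1_A/d)^{\mu}] = d\cdot d^{-\mu} = d^{1-\mu}$, so that $H_{\mu}(\1_A/d) = \frac{1}{1-\mu}\log(d^{1-\mu}) = \log d$; the boundary cases $\mu\in\{0,1,\infty\}$ follow from the limiting definitions of $H_0$, $H_1$, $H_\infty$ (equivalently, $\operatorname{rank}(\1_A/d)=d$ and $\lVert\1_A/d\rVert_\infty = 1/d$). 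The key point is that this value is $\log d$ \emph{regardless} of $\mu$, so it is irrelevant that $\gamma=[2-\alpha]/\alpha$ or $\beta=[2\alpha-1]^{-1}$ may lie outside the interval $(0,\infty)$ for some choices of $\alpha$.

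Combining the two steps gives $E_{\alpha}(\Phi^d_{AB}) = \widetilde{E}_{\alpha}(\Phi^d_{AB}) = \widehat{E}_{\alpha}(\Phi^d_{AB}) = \log d$ on the stated ranges of $\alpha$, which is the claim. There is essentially no obstacle in this argument; the only step deserving a moment's care is confirming that the auxiliary parameters produced by Proposition~\ref{prop:reduction-for-pure-states} cause no trouble, and this is guaranteed precisely because the R\'enyi entropy of the maximally mixed state is constant in its order.
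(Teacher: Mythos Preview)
Your argument is correct and follows essentially the same approach as the paper, which also invokes Proposition~\ref{prop:reduction-for-pure-states} together with the fact that the R\'enyi entropy of the maximally mixed state equals $\log d$ for every order. Your additional remark about the auxiliary parameters $\gamma$ and $\beta$ possibly leaving $(0,\infty)$ is a nice bit of care that the paper leaves implicit.
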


\bigskip
Let $\{\rho^i_{AB}\}_i\subset\cS(AB)$ be a set of bipartite quantum states,
and let $\{p_i\}_i$ be a probability distribution such that $p_i\geq0$ and $\sum_ip_i=1$.
A generalized unextendible entanglement $\mathbf{E}^u$ is \textit{convex} if
\begin{align}
    	\mathbf{E}^u\!\left(\sum_ip_i\rho^i_{AB}\right)
\leq  \sum_ip_i\mathbf{E}^u(\rho^i_{AB}),
\end{align}
and it is \textit{quasi-convex} if
\begin{align}
    	\mathbf{E}^u\!\left(\sum_ip_i\rho^i_{AB}\right)
\leq  \max_i\mathbf{E}^u(\rho^i_{AB}).
\end{align}
If $\mathbf{E}^u$ is convex, then it is also quasi-convex; however, the converse is not necessarily true.

Intuitively, the convexity (quasi-convexity) property implies
that the amount of quantum entanglement cannot be increased by merely mixing the quantum states.
Convexity is a desirable property of a valid entanglement measure from both the mathematical and physical perspectives.
The $\a$-unextendible entanglement satisfies this property for certain parameter ranges:

\begin{proposition}[Convexity and quasi-convexity]
The $\a$-Petz unextendible entanglement $E_{\alpha}^u$ 
is convex when $\alpha\in(0, 1]$ and quasi-convex when $\alpha\in(0,2]$.
The $\a$-sandwiched unextendible entanglement $\widetilde{E}_{\alpha}^u$ 
is convex when $\alpha\in[1/2, 1]$ and quasi-convex when $\alpha\geq1/2$.
The $\a$-geometric unextendible entanglement $\widehat{E}_{\alpha}^u$ 
is convex when $\alpha\in(0, 1]$ and quasi-convex when $\alpha\in(0,2]$.
\end{proposition}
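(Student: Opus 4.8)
The plan is to reduce each claim to a corresponding joint convexity/quasi-convexity property of the underlying R\'enyi relative entropy (or its $Q$-functional), exploiting the fact that the set of free states is ``linear'' in the input state. The crucial structural observation is as follows: given states $\{\rho^i_{AB}\}_i$ with extensions $\{\rho^i_{AB_1B_2}\}_i$ (so $\rho^i_{AB}=\rho^i_{AB_1}$), the convex combination $\sum_i p_i\,\rho^i_{AB_1B_2}$ is an extension of $\sum_i p_i\,\rho^i_{AB}$; hence $\sum_i p_i\,\rho^i_{AB_2}\in\cF_{\sum_i p_i\rho^i_{AB}}$ and is a feasible point in the infimum defining $\mathbf{E}^u(\sum_i p_i\rho^i_{AB})$. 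Therefore
\begin{equation}
\mathbf{E}^{u}\!\left(\textstyle\sum_i p_i\rho^i_{AB}\right)
\le \tfrac{1}{2}\,\mathbf{D}\!\left(\textstyle\sum_i p_i\,\rho^i_{AB_1}\,\big\Vert\,\sum_i p_i\,\rho^i_{AB_2}\right),
\end{equation}
for \emph{any} choice of extensions $\rho^i_{AB_1B_2}$. The right-hand side is a divergence evaluated at two convex combinations, and it remains to bound it by $\sum_i p_i\,\mathbf{D}(\rho^i_{AB_1}\Vert\rho^i_{AB_2})$ (for convexity) or $\max_i$ of the same (for quasi-convexity); then optimizing each $\rho^i_{AB_1B_2}$ over the extensions of $\rho^i_{AB}$ and using the definition of $E^u_\alpha(\rho^i_{AB})$ finishes the argument.

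For the \textbf{convex regimes} ($\alpha\in(0,1]$ for Petz and geometric, $\alpha\in[1/2,1]$ for sandwiched), I would invoke joint convexity of the relative entropy in question: for these ranges $D_\alpha$, $\widetilde D_\alpha$, and $\widehat D_\alpha$ are jointly convex in the pair of arguments (this is standard; it can also be obtained from the data-processing inequality applied to the partial-trace channel on a classical flag register, exactly as in the remark preceding Lemma~\ref{lem:faithfulness-renyi-rel-ents}). Concretely, embed into a flagged system $\sum_i p_i\,\proj{i}_X\otimes\rho^i_{AB_j}$, apply $\mathbf{D}\big(\cdot\Vert\cdot\big)\le\sum_i p_i\,\mathbf{D}(\cdot\Vert\cdot)+D(p\Vert p)$-type decomposition or DPI under discarding $X$, to get
\begin{equation}
\mathbf{D}\!\left(\textstyle\sum_i p_i\rho^i_{AB_1}\,\big\Vert\,\sum_i p_i\rho^i_{AB_2}\right)
\le \textstyle\sum_i p_i\,\mathbf{D}(\rho^i_{AB_1}\Vert\rho^i_{AB_2}),
\end{equation}
and conclude. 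For the \textbf{quasi-convex regimes} ($\alpha>1$ in all three cases, with the $\alpha\le 2$ caveat for Petz and geometric so that DPI is available), the relevant fact is that $D_\alpha(\cdot\Vert\cdot)$, $\widetilde D_\alpha(\cdot\Vert\cdot)$, $\widehat D_\alpha(\cdot\Vert\cdot)$ satisfy a ``joint quasi-convexity'': the corresponding $Q$-functionals are jointly convex, and since $x\mapsto\frac{1}{\alpha-1}\log x$ is monotone increasing for $\alpha>1$, one gets $\mathbf{D}(\sum_i p_i\rho^i\Vert\sum_i p_i\sigma^i)\le\frac{1}{\alpha-1}\log\sum_i p_i\,Q_\alpha(\rho^i\Vert\sigma^i)\le\max_i\mathbf{D}(\rho^i\Vert\sigma^i)$, the last step because a weighted average is at most the maximum and $\log$ is monotone. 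This is the same mechanism already used in the proof of Theorem~\ref{prop:strong-2-ext-mono}.

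The \textbf{main obstacle} is ensuring that the needed joint convexity (for $\alpha<1$) or joint convexity of $Q_\alpha$ (for $\alpha>1$) genuinely holds for each of the three families on the stated intervals, and handling the endpoint $\alpha=1$ (quantum relative entropy / Belavkin--Staszewski) where the log-decomposition identity used in Theorem~\ref{prop:strong-2-ext-mono} gives exact additivity over the flag and hence convexity directly. One should also double-check the sandwiched case at $\alpha=1/2$ (where $\widetilde D_{1/2}$ relates to fidelity and joint convexity still holds) and note that quasi-convexity for all $\alpha>1$ in the sandwiched case does not require an upper cutoff because DPI holds on $[1/2,\infty]$. I expect no difficulties beyond citing these known convexity facts; the reduction via the linear structure of $\cF_{\rho_{AB}}$ is the only genuinely new ingredient and it is immediate.
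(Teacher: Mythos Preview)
Your proposal is correct and follows essentially the same approach as the paper: the paper's proof is a one-line appeal to the known joint convexity (for the low-$\alpha$ ranges) and joint quasi-convexity (for the high-$\alpha$ ranges) of the three R\'enyi relative entropies, and your write-up simply spells out the implicit step---that convex combinations of extensions are extensions of the convex combination, so the feasible set in the infimum is compatible with mixing---which is exactly what makes the ``simple corollary'' go through.
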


\begin{proof}
This follows from the facts that $D_{\alpha}$ is jointly quasi-convex for $\alpha \in(1,2]$ and
jointly convex for $\alpha\in(0, 1]$ \cite{petz1986quasi};
$\widetilde{D}_{\alpha}$ is jointly quasi-convex for $\alpha>1$ and jointly convex for $\alpha\in[1/2, 1]$ \cite{FL13};
and $\widehat{D}_{\alpha}$ is jointly quasi-convex for $\alpha \in(1,2]$ and
jointly convex for $\alpha\in(0, 1]$ \cite{HM17}.
\end{proof}

\bigskip
Aside from the above properties, the $\alpha$-unextendible entanglement is also
subadditive (proof in Appendix~\ref{appx:subadditivity}).

\begin{proposition}[Subadditivity]\label{prop:subadditivity}
Let $\rho_{A_1B_1}\in\cS(A_1B_1)$ and $\rho_{A_2B_2}\in\cS(A_2B_2)$ be two bipartite states. Then the following subadditivity inequality holds for $\a \in [0,\infty)$
\begin{equation}
      E_{\a}^u(\rho_{A_1B_1}\ox\rho_{A_2B_2})
\leq  E_{\a}^u(\rho_{A_1B_1}) + E_{\a}^u(\rho_{A_2B_2}),
\label{eq:petz-subadd}
\end{equation}
for  $\a \in (0,\infty]$
\begin{equation}
      \widetilde{E}_{\a}^u(\rho_{A_1B_1}\ox\rho_{A_2B_2})
\leq  \widetilde{E}_{\a}^u(\rho_{A_1B_1}) + \widetilde{E}_{\a}^u(\rho_{A_2B_2}),
\label{eq:sandwiched-subadd}
\end{equation}
and for  $\a \in (0,\infty)$
\begin{equation}
      \widehat{E}_{\a}^u(\rho_{A_1B_1}\ox\rho_{A_2B_2})
\leq  \widehat{E}_{\a}^u(\rho_{A_1B_1}) + \widehat{E}_{\a}^u(\rho_{A_2B_2}),
\label{eq:geometric-subadd}
\end{equation}
where the entanglement is evaluated across the $A{:}B$ bipartition.
\end{proposition}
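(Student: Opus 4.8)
The plan is to exploit two facts: first, that tensor products of free states are again free states for the tensor-product state (with respect to the $A{:}B$ cut), and second, that each of the Petz, sandwiched, and geometric R\'enyi relative entropies is additive under tensor products, i.e.\ $D_{\alpha}(\omega_1\ox\omega_2\Vert\tau_1\ox\tau_2)=D_{\alpha}(\omega_1\Vert\tau_1)+D_{\alpha}(\omega_2\Vert\tau_2)$, and likewise for $\widetilde{D}_{\alpha}$ and $\widehat{D}_{\alpha}$. Writing $A=A_1A_2$ and $B=B_1B_2$, the idea is then to restrict the infimum in the definition of $E_{\alpha}^u(\rho_{A_1B_1}\ox\rho_{A_2B_2})$ to free states of product form and use additivity to split the resulting divergence into two independent pieces, each of which can be optimized separately.

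Concretely, I would first verify that $\cF_{\rho_{A_1B_1}}\times\cF_{\rho_{A_2B_2}}$ embeds into $\cF_{\rho_{A_1B_1}\ox\rho_{A_2B_2}}$ in the natural way: if $\rho_{A_1B_1B_1'}$ is an extension of $\rho_{A_1B_1}$ with marginal $\sigma_{A_1B_1'}=\tr_{B_1}[\rho_{A_1B_1B_1'}]$, and $\rho_{A_2B_2B_2'}$ is an extension of $\rho_{A_2B_2}$ with marginal $\sigma_{A_2B_2'}$, then $\rho_{A_1B_1B_1'}\ox\rho_{A_2B_2B_2'}$ is an extension of $\rho_{A_1B_1}\ox\rho_{A_2B_2}$ on the system $B'=B_1'B_2'\cong B$, and tracing out $B=B_1B_2$ yields $\sigma_{A_1B_1'}\ox\sigma_{A_2B_2'}$. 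Hence $\sigma_{A_1B_1'}\ox\sigma_{A_2B_2'}\in\cF_{\rho_{A_1B_1}\ox\rho_{A_2B_2}}$.

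Combining these two observations gives, for the sandwiched case,
\begin{align}
\widetilde{E}_{\alpha}^u(\rho_{A_1B_1}\ox\rho_{A_2B_2})
&= \frac{1}{2}\inf_{\tau\in\cF_{\rho_{A_1B_1}\ox\rho_{A_2B_2}}}\widetilde{D}_{\alpha}(\rho_{A_1B_1}\ox\rho_{A_2B_2}\Vert\tau)\nonumber\\
&\leq \frac{1}{2}\inf_{\sigma_{A_1B_1'}\in\cF_{\rho_{A_1B_1}}}\ \inf_{\sigma_{A_2B_2'}\in\cF_{\rho_{A_2B_2}}}\widetilde{D}_{\alpha}(\rho_{A_1B_1}\ox\rho_{A_2B_2}\Vert\sigma_{A_1B_1'}\ox\sigma_{A_2B_2'})\nonumber\\
&= \frac{1}{2}\inf_{\sigma_{A_1B_1'}\in\cF_{\rho_{A_1B_1}}}\widetilde{D}_{\alpha}(\rho_{A_1B_1}\Vert\sigma_{A_1B_1'}) + \frac{1}{2}\inf_{\sigma_{A_2B_2'}\in\cF_{\rho_{A_2B_2}}}\widetilde{D}_{\alpha}(\rho_{A_2B_2}\Vert\sigma_{A_2B_2'})\nonumber\\
&= \widetilde{E}_{\alpha}^u(\rho_{A_1B_1}) + \widetilde{E}_{\alpha}^u(\rho_{A_2B_2}),
\end{align}
where the second-to-last step uses additivity of $\widetilde{D}_{\alpha}$ on product inputs together with the fact that the two infima decouple. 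The arguments for \eqref{eq:petz-subadd} and \eqref{eq:geometric-subadd} are identical, replacing $\widetilde{D}_{\alpha}$ by $D_{\alpha}$ or $\widehat{D}_{\alpha}$ and invoking their respective additivity. The proof is essentially routine; the only points needing a word of care are the embedding of product free states (handled above) and the edge cases where a divergence is $+\infty$ — but there the right-hand side is $+\infty$ and the inequality holds trivially, since $\cF$ is always nonempty (it contains $\rho_{AB}\ox\tau_{B'}$ for any state $\tau_{B'}$), so I do not anticipate a genuine obstacle.
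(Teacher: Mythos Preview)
Your proposal is correct and follows essentially the same approach as the paper: take arbitrary extensions of each factor, observe that their tensor product is an extension of the product state so that $\sigma_{A_1B_1'}\ox\sigma_{A_2B_2'}\in\cF_{\rho_{A_1B_1}\ox\rho_{A_2B_2}}$, apply additivity of the R\'enyi relative entropy on tensor products, and then take the infimum over each extension separately. The paper's proof is slightly terser (it writes the argument for $D_\alpha$ and notes the other two cases follow similarly), but the content is identical.
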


\subsection{Relative-entropy-induced unextendible entanglement}

As mentioned above, the quantum relative entropy is a particular instance of the Petz- and
sandwiched R\'{e}nyi relative entropy, recovered by taking the limit $\alpha \to 1$. The $\alpha$-unextendible entanglement
in terms of these measures has already been defined and investigated above.
However, it is notable enough that we define
the quantum relative entropy induced unextendible entanglement explicitly here.

\begin{definition}[Relative-entropy-induced unextendible entanglement]\label{def:rel-unextendible}
For a bipartite state~$\rho_{AB}$, the quantum relative entropy induced unextendible entanglement is
defined as
\begin{align}\label{eq:rel-unextendible}
E^u(\rho_{AB})
\coloneqq \frac{1}{2}\inf_{\sigma_{AB'}\in\cF_{\rho_{AB}}}D\left(\rho_{AB}\Vert\sigma_{AB'}\right),
\end{align}
where $D$ is the quantum relative entropy defined in~\eqref{eq:rel-ent}.
\end{definition}

As proved above, $E^u$ obeys the following properties: selective two-extendible monotonicity,
faithfulness, reduction to the entropy of entanglement for pure bipartite states, normalization, convexity, and subadditivity. It is also efficiently computable by means of semi-definite programming, using the approach of \cite{fawzi2019semidefinite,Fawzi_2018}.

The third property mentioned above follows directly from
Proposition~\ref{prop:reduction-for-pure-states} for $\alpha=1$,
and it asserts that for pure bipartite states, the relative-entropy-induced unextendible entanglement
evaluates to the von Neumann entropy of the reduced state. Although the proof has already been given,
we can see it with a straightforward
proof consisting of a few steps. Let $\psi_{AB}\equiv\ket{\psi}\!\bra{\psi}_{AB}$ be a pure state. An arbitrary extension of $\psi_{AB}$ is of the form
$\sigma_{ABB'}\coloneqq \psi_{AB}\ox\sigma_{B'}$ for some state $\sigma_{B'}$. As so,
$\tr_{B}[\sigma_{ABB'}] = \psi_A\ox\sigma_{B'}$, where $\psi_A = \tr_B[\psi_{AB}]$. Then
\begin{align}
    E^u(\psi_{AB})
= \inf_{\sigma_{B'}}\frac{1}{2} D(\psi_{AB}\| \psi_A\ox\sigma_{B'})
=  \frac{1}{2}I(A;B)_\psi = H(A)_\psi,
\end{align}
where the second equality follows from the definition of quantum mutual information.


\section{Efficiently computable entanglement measures}

\label{sec:efficiently-comp-unext}

\subsection{Max-unextendible entanglement}

Another interesting instance of the sandwiched R\'{e}nyi relative entropy is the max-relative entropy, 
as recalled in \eqref{eq:max-rel-ent-limit}.
The max-relative entropy was originally defined and studied
in~\cite{Datta2009}. Here we adopt the max-relative entropy to define the max-unextendible entanglement. It turns out that
this measure is additive and can be calculated efficiently by utilizing a semidefinite program.

\begin{definition}[Max-unextendible entanglement]\label{def:max-unextendible}
For a given bipartite state $\rho_{AB}$, the max-unextendible entanglement is defined as
\begin{align}\label{eq:max-unextendible}
E_{\max}^u(\rho_{AB})
\coloneqq \frac{1}{2}\inf_{\sigma_{AB'}\in\cF_{\rho_{AB}}}D_{\max}\left(\rho_{AB}\Vert\sigma_{AB'}\right),
\end{align}
where $D_{\max}$ is the max-relative entropy defined in~\eqref{eq:max-rel-ent-limit}.
\end{definition}

Note that the infimum in~\eqref{eq:max-unextendible} can be replaced with a minimum.

From the definition of max-unextendible entanglement, it follows that it can be computed efficiently by means of a
semidefinite program (SDP). To be more specific, the following two optimization programs satisfy strong duality and
both evaluate to $2^{-2E_{\max}^u(\rho_{AB})}$.

\begin{minipage}[t]{0.45\linewidth}
\begin{equation}\label{eq:max-primal}
\begin{split}
& \text{\underline{\bf Primal Program}} \\
\text{maximize}   &\quad \lambda \\
\text{subject to} &\quad \lambda\rho_{AB}\leq\sigma_{AB'} \\
            &\quad \rho_{AB} = \tr_{B'}[\sigma_{ABB'}] \\
            &\quad \sigma_{ABB'} \geq 0
\end{split}\end{equation}
\end{minipage}
\begin{minipage}[t]{0.45\linewidth}
\begin{equation}\label{eq:max-dual}
\begin{split}
& \text{\underline{\bf Dual Program}} \\
\text{minimize}   &\quad \tr\left[\rho_{AB}Y_{AB}\right] \\
\text{subject to} &\quad \tr\left[\rho_{AB}X_{AB}\right] \geq 1 \\
                &\quad X_{AB'}\ox \1_B \leq Y_{AB}\ox\1_{B'} \\
                &\quad X_{AB}, Y_{AB} \geq 0
\end{split}
\end{equation}
\end{minipage}

\bigskip
The primal SDP follows by considering that
\begin{equation}
D_{\max}(\omega\Vert \tau) = \log \inf\{\lambda : \omega \leq \lambda \tau\} = -\log \sup\{\mu : \mu \omega \leq  \tau\}.
\end{equation}
The dual SDP can be obtained by standard methods (e.g., the Lagrange multiplier method). For completeness, we provide a proof in Appendix~\ref{app:dual-SDP-derivations}.

By using the primal and dual expressions of $2^{-2E_{\max}^u(\rho_{AB})}$ and strong duality, it follows that
$E_{\max}^u(\rho_{AB})$ is additive (proof in Appendix~\ref{appx:max-dditivity}), 
which is an appealing feature that finds use in Section~\ref{sec:applications}.

\begin{proposition}[Additivity]
\label{prop:max-dditivity}
Let $\rho_{A_1B_1}\in\cS(A_1B_1)$ and $\rho_{A_2B_2}\in\cS(A_2B_2)$ be two bipartite states.
It holds that
\begin{align}\label{eq:max-dditivity}
  E_{\max}^u\left(\rho_{A_1B_1}\ox\rho_{A_2B_2}\right)
= E_{\max}^u\left(\rho_{A_1B_1}\right) + E_{\max}^u\left(\rho_{A_2B_2}\right),
\end{align}
where the entanglement is evaluated across the $A{:}B$ bipartition.
\end{proposition}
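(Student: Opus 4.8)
The plan is to prove additivity by establishing the two inequalities separately, using the SDP characterizations just given. The submultiplicativity direction $E_{\max}^u(\rho_{A_1B_1}\ox\rho_{A_2B_2}) \le E_{\max}^u(\rho_{A_1B_1}) + E_{\max}^u(\rho_{A_2B_2})$ is the easy half and in fact already follows from Proposition~\ref{prop:subadditivity} applied with the sandwiched R\'enyi relative entropy at $\alpha = \infty$; alternatively, one argues directly that if $\sigma_{A_1B_1B_1'}$ and $\sigma_{A_2B_2B_2'}$ are optimal extensions achieving $\rho_{A_iB_i} \le t_i \sigma_{A_iB_i'}$, then the tensor product $\sigma_{A_1B_1B_1'} \ox \sigma_{A_2B_2B_2'}$ is a valid extension of the product state and gives $\rho_{A_1B_1}\ox\rho_{A_2B_2} \le t_1 t_2\, \sigma_{A_1B_1'}\ox\sigma_{A_2B_2'}$, hence $2^{-2E_{\max}^u(\rho_{A_1B_1}\ox\rho_{A_2B_2})} \ge (t_1t_2)^{-1} = 2^{-2E_{\max}^u(\rho_{A_1B_1})} \cdot 2^{-2E_{\max}^u(\rho_{A_2B_2})}$.

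The nontrivial direction is supermultiplicativity of $2^{-2E_{\max}^u}$, i.e.\ that the optimal extension of the product state cannot do better than the product of the individual optima. Here I would use the dual SDP in \eqref{eq:max-dual}. Let $(X_{A_1B_1}, Y_{A_1B_1})$ and $(X_{A_2B_2}, Y_{A_2B_2})$ be feasible (ideally optimal) dual solutions for $\rho_{A_1B_1}$ and $\rho_{A_2B_2}$, achieving objective values $\tr[\rho_{A_iB_i}Y_{A_iB_i}] = 2^{-2E_{\max}^u(\rho_{A_iB_i})}$. The natural candidate for the product problem is $X_{AB} \coloneqq X_{A_1B_1}\ox X_{A_2B_2}$ and $Y_{AB} \coloneqq Y_{A_1B_1}\ox Y_{A_2B_2}$ (with $A = A_1A_2$, $B = B_1B_2$). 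I would then verify the three dual constraints: positivity $X_{AB} \ge 0$ is immediate from $X_{A_iB_i}\ge 0$; the normalization $\tr[\rho_{AB}X_{AB}] = \tr[\rho_{A_1B_1}X_{A_1B_1}]\cdot\tr[\rho_{A_2B_2}X_{A_2B_2}] \ge 1$ follows since each factor is $\ge 1$; and the operator inequality $X_{AB'}\ox\1_B \le Y_{AB}\ox\1_{B'}$ should follow by taking the tensor product of the two individual operator inequalities $X_{A_iB_i'}\ox\1_{B_i} \le Y_{A_iB_i}\ox\1_{B_i'}$, using that if $0 \le P_1 \le Q_1$ and $0 \le P_2 \le Q_2$ then $P_1\ox P_2 \le Q_1\ox Q_2$ — after possibly rearranging tensor factors and noting that the relevant operators can be taken positive semidefinite (one can WLOG add a multiple of identity to $Y$, or observe $X_{A_iB_i'}\ox\1_{B_i}\ge 0$ forces the upper bounds to be positive on the support). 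This yields $2^{-2E_{\max}^u(\rho_{A_1B_1}\ox\rho_{A_2B_2})} \le \tr[\rho_{AB}Y_{AB}] = 2^{-2E_{\max}^u(\rho_{A_1B_1})}\cdot 2^{-2E_{\max}^u(\rho_{A_2B_2})}$ by weak duality, and combined with the first half and strong duality gives the claimed equality after taking $-\frac{1}{2}\log$.

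The main obstacle I anticipate is the careful handling of the tensor-factor bookkeeping in the operator inequality $X_{AB'}\ox\1_B \le Y_{AB}\ox\1_{B'}$: the systems $A_1, A_2, B_1, B_2, B_1', B_2'$ must be permuted into the right order, and one must confirm that the monotonicity of the tensor product under the L\"owner order applies here, which genuinely requires both sides to be positive semidefinite (not just Hermitian). A clean way around any subtlety about the sign of $Y$ is to note that feasibility of the dual is unaffected by replacing $Y_{A_iB_i}$ with $Y_{A_iB_i} + c\1$ for large $c>0$ and then sending $c\to 0$, or simply to work with the primal SDP instead: take $\mu_i\omega_{A_iB_i}\le\tau_{A_iB_i}$ type relations — but the dual route is cleanest provided one is careful. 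I would also remark that additivity then propagates to $n$-fold products by induction, which is what is used in Section~\ref{sec:applications}.
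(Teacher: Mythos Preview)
Your proposal is correct and follows essentially the same route as the paper: subadditivity via tensoring optimal primal solutions (or citing Proposition~\ref{prop:subadditivity}), and superadditivity via tensoring optimal dual solutions $(X_{A_iB_i},Y_{A_iB_i})$ and invoking strong duality. You are in fact more careful than the paper about the one genuine subtlety, namely that $0\le P_1\le Q_1$ and $0\le P_2\le Q_2$ are needed to conclude $P_1\ox P_2\le Q_1\ox Q_2$; the paper simply asserts $X_{A_1B_1'}\ox X_{A_2B_2'}\ox\1_{B_1B_2}\le Y_{A_1B_1}\ox Y_{A_2B_2}\ox\1_{B_1'B_2'}$ without comment, whereas you correctly observe that the dual constraint together with $X_{A_iB_i}\ge 0$ forces $Y_{A_iB_i}\ox\1_{B_i'}\ge X_{A_iB_i'}\ox\1_{B_i}\ge 0$ and hence $Y_{A_iB_i}\ge 0$, so no $c\1$-shift is actually needed.
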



\subsection{Min-unextendible entanglement}

In this section, we consider the limit of the Petz--R\'enyi relative entropy as $\alpha \to 0$, which is known as the
min-relative entropy~\cite[Definition 2]{Datta2009}. Let us first recall the definition. Let $\rho\in\cS(A)$ and
$\sigma\in\cP(A)$. Let $\Pi^\rho$ denote the
projection onto the support of $\rho$. Then the min-relative entropy of $\rho$ and $\sigma$ is defined as
\begin{align}
    D_{\min}\left(\rho\|\sigma\right) \coloneqq  -\log \tr\left[\Pi^\rho\sigma\right].
\end{align}
With $D_{\min}$, we define the min-unextendible entanglement as follows.
\begin{definition}[Min-unextendible entanglement]\label{def:min-unextendible}
For a given bipartite state $\rho_{AB}$, the min-unextendible entanglement is defined as
\begin{align}\label{eq:min-unextendible}
E_{\min}^u(\rho_{AB})
\coloneqq \frac{1}{2}\inf_{\sigma_{AB'}\in\cF_{\rho_{AB}}}D_{\min}\left(\rho_{AB}\Vert\sigma_{AB'}\right).
\end{align}
\end{definition}

Note that the infimum in~\eqref{eq:min-unextendible} can be replaced with a minimum.

Much like the max-unextendible entanglement, the min-unextendible entanglement can also be calculated as 
the solution to a semidefinite program. 
The following two optimization programs satisfy strong duality, and both evaluate to
$2^{-2E_{\min}^u(\rho_{AB})}$. We derive the dual SDP in Appendix~\ref{app:dual-SDP-derivations}.

\begin{minipage}[t]{0.45\linewidth}
\begin{equation}\label{eq:min-primal}
\begin{split}
& \text{\underline{\bf Primal Program}} \\
\text{maximize}   &\quad \tr[\Pi^\rho_{AB}\sigma_{AB'}] \\
\text{subject to} &\quad \tr_{B'} [\sigma_{ABB'}] = \rho_{AB}\\
						&\quad \sigma_{ABB'} \geq 0
\end{split}\end{equation}
\end{minipage}
\begin{minipage}[t]{0.45\linewidth}
\begin{equation}\label{eq:min-dual}
\begin{split}
& \text{\underline{\bf Dual Program}} \\
\text{minimize} 	&\quad \tr\left[\rho_{AB}X_{AB}\right] \\
\text{subject to} &\quad X_{AB}\ox \1_{B'} \geq\Pi^\rho_{AB'}\ox\1_{B} \\
						& \quad X_{AB} \geq 0
\end{split}
\end{equation}
\end{minipage}

\bigskip
Following similarly to the proof of Proposition~\ref{prop:max-dditivity},
we can show that the min-unextendible entanglement is additive.

\begin{proposition}[Additivity]
\label{prop:add-min-unext-ent}
Let $\rho_{A_1B_1}\in\cS(A_1B_1)$ and $\rho_{A_2B_2}\in\cS(A_2B_2)$ be two bipartite states.
It holds that
\begin{align}
  E_{\min}^u(\rho_{A_1B_1}\ox\rho_{A_2B_2})
= E_{\min}^u(\rho_{A_1B_1}) + E_{\min}^u(\rho_{A_2B_2}),
\end{align}
where the entanglement is evaluated across the $A{:}B$ bipartition.
\end{proposition}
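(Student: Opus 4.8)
The plan is to mirror the two-sided semidefinite-programming argument already used in the proof of Proposition~\ref{prop:max-dditivity}, now invoking the primal program~\eqref{eq:min-primal} for subadditivity and the dual program~\eqref{eq:min-dual} for superadditivity, together with strong duality and the fact that both programs evaluate to $2^{-2E_{\min}^u(\rho_{AB})}$.

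For subadditivity, I would take optimal primal solutions $\sigma_{A_1B_1B_1'}$ and $\sigma_{A_2B_2B_2'}$ for $\rho_{A_1B_1}$ and $\rho_{A_2B_2}$, and check that $\sigma_{A_1B_1B_1'}\ox\sigma_{A_2B_2B_2'}$ is primal-feasible for $\rho_{A_1B_1}\ox\rho_{A_2B_2}$: it is positive semidefinite, and tracing out $B_1'B_2'$ returns $\rho_{A_1B_1}\ox\rho_{A_2B_2}$. The key observation is that the support projector of a tensor-product state factorizes, $\Pi^{\rho_{A_1B_1}\ox\rho_{A_2B_2}} = \Pi^{\rho_{A_1B_1}}\ox\Pi^{\rho_{A_2B_2}}$, so the primal objective on this product solution equals $\tr[\Pi^{\rho_{A_1B_1}}\sigma_{A_1B_1'}]\cdot\tr[\Pi^{\rho_{A_2B_2}}\sigma_{A_2B_2'}] = 2^{-2E_{\min}^u(\rho_{A_1B_1})}\cdot 2^{-2E_{\min}^u(\rho_{A_2B_2})}$. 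Since the primal is a maximization, a feasible point lower-bounds the optimum, so $2^{-2E_{\min}^u(\rho_{A_1B_1}\ox\rho_{A_2B_2})} \geq 2^{-2E_{\min}^u(\rho_{A_1B_1}) - 2E_{\min}^u(\rho_{A_2B_2})}$, which rearranges to the desired $\leq$.

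For superadditivity, I would take optimal dual solutions $X_{A_1B_1}$ and $X_{A_2B_2}$ and verify that $X_{A_1B_1}\ox X_{A_2B_2}$ is dual-feasible for the product state. First, each $X_{A_iB_i}$ is positive semidefinite, because the dual constraint forces $X_{A_iB_i}\ox\1_{B_i'} \geq \Pi^{\rho_{A_iB_i}}_{A_iB_i'}\ox\1_{B_i} \geq 0$. Next I would use the elementary fact that $P\geq Q\geq 0$ and $R\geq S\geq 0$ imply $P\ox R\geq Q\ox S$ (since $P\ox R - Q\ox S = (P-Q)\ox R + Q\ox(R-S)$) to conclude $X_{A_1B_1}\ox\1_{B_1'}\ox X_{A_2B_2}\ox\1_{B_2'} \geq \Pi^{\rho_{A_1B_1}}_{A_1B_1'}\ox\1_{B_1}\ox\Pi^{\rho_{A_2B_2}}_{A_2B_2'}\ox\1_{B_2}$, which, after reorganizing tensor factors and again using the factorized support projector, is exactly the dual constraint for $\rho_{A_1B_1}\ox\rho_{A_2B_2}$. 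The dual objective on this feasible point factorizes as $\tr[\rho_{A_1B_1}X_{A_1B_1}]\cdot\tr[\rho_{A_2B_2}X_{A_2B_2}]$; since the dual is a minimization, a feasible point upper-bounds the optimum, giving $2^{-2E_{\min}^u(\rho_{A_1B_1}\ox\rho_{A_2B_2})} \leq 2^{-2E_{\min}^u(\rho_{A_1B_1}) - 2E_{\min}^u(\rho_{A_2B_2})}$, i.e.\ the $\geq$ direction. Combining the two inequalities yields the claimed equality.

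The only place that needs care is the bookkeeping of the extension systems and the permutation of tensor factors so that the constraints line up; the substantive ingredients are simply the factorization of the support projector over tensor products and the tensor-monotonicity of the positive semidefinite order. There is no genuine analytic obstacle, since strong duality for both semidefinite programs has already been asserted.
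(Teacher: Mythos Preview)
Your proposal is correct and takes essentially the same approach as the paper, which simply states that the proof follows similarly to that of Proposition~\ref{prop:max-dditivity} using the primal and dual programs together with strong duality. You have in fact filled in more detail than the paper provides, including the factorization of the support projector and the verification that the dual variables are positive semidefinite so that the tensor-product feasibility argument goes through.
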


Furthermore, the min-unextendible entanglement of a pure bipartite state 
can be computed explicitly (proof in Appendix~\ref{appx:e-u-min-pure}).

\begin{proposition}
\label{prop:e-u-min-pure}
Let $\ket{\psi}_{AB}\coloneqq \sum_{i=1}^k\sqrt{\alpha_i}\ket{\psi_i}_A \ket{\psi_i}_B$ be a pure state in
$\cH_{AB}$, with $\alpha_1\geq\cdots \geq \alpha_k>0$. Then the following equality holds
\begin{align}
    E_{\min}^u\left(\psi_{AB}\right) = - \log\alpha_1.
\end{align}
\end{proposition}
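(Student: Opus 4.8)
The plan is to exploit the rigidity of extensions of pure states. Since $\psi_{AB}$ is pure, every extension $\rho_{ABB'}$ of it factorizes as $\psi_{AB}\otimes\rho_{B'}$ --- exactly the observation already used in the proof of Proposition~\ref{prop:reduction-for-pure-states} --- so the set of free states is $\cF_{\psi_{AB}}=\{\psi_A\otimes\sigma_{B'}:\sigma_{B'}\in\cS(B')\}$, where $\psi_A=\tr_B[\psi_{AB}]=\sum_{i=1}^k\alpha_i\proj{\psi_i}_A$. Using the identification $B'\cong B$ and the fact that $\Pi^\psi_{AB}=\proj{\psi}_{AB}$ because $\psi_{AB}$ has rank one, the definition of $E_{\min}^u$ reduces to
\[
E_{\min}^u(\psi_{AB})=-\frac{1}{2}\log\sup_{\sigma_{B}\in\cS(B)}\bra{\psi}\left(\psi_A\otimes\sigma_B\right)\ket{\psi}_{AB}.
\]

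Next I would evaluate the overlap via the Schmidt decomposition. Acting with $\psi_A$ on $\ket{\psi}_{AB}$ and using that the $\ket{\psi_i}_A$ are orthonormal eigenvectors of $\psi_A$ with eigenvalues $\alpha_i$, one obtains $\bra{\psi}(\psi_A\otimes\sigma_B)\ket{\psi}_{AB}=\sum_{i=1}^k\alpha_i^2\bra{\psi_i}\sigma_B\ket{\psi_i}_B$, the cross terms dropping out by orthonormality of the $\ket{\psi_i}_A$. Writing $p_i\coloneqq\bra{\psi_i}\sigma_B\ket{\psi_i}_B\ge 0$, these satisfy $\sum_{i=1}^k p_i\le 1$, so $\sum_i\alpha_i^2 p_i$ is a linear functional to be maximized over (a face of) a probability simplex; since $\alpha_1\ge\alpha_i$ for all $i$, the supremum equals $\alpha_1^2$ and is attained by the admissible choice $\sigma_B=\proj{\psi_1}_B$. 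Substituting back yields $E_{\min}^u(\psi_{AB})=-\tfrac{1}{2}\log\alpha_1^2=-\log\alpha_1$.

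There is no substantive obstacle here; the argument is a short direct computation. The only points requiring a little care are the identification of the isomorphic systems $B$ and $B'$ so that $\sigma_B$ can be paired with the Schmidt vectors $\ket{\psi_i}_B$, the remark that $\Pi^\psi_{AB}$ collapses to $\proj{\psi}_{AB}$ because the state is pure, and the orthonormality of the Schmidt bases, which is precisely what makes the off-diagonal terms vanish and turns the optimization into a trivial linear program over a probability distribution.
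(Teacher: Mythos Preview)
Your proposal is correct and follows essentially the same approach as the paper's direct proof: both exploit that extensions of a pure state factorize as $\psi_{AB}\otimes\sigma_{B'}$, reduce $D_{\min}$ to the overlap $\bra{\psi}(\psi_A\otimes\sigma_B)\ket{\psi}$, evaluate it via the Schmidt decomposition as a weighted sum $\sum_i\alpha_i^2 p_i$, and optimize to obtain $\alpha_1^2$ with the optimizer $\sigma_B=\proj{\psi_1}_B$. The only cosmetic difference is that the paper first writes out a spectral decomposition $\sigma_{B'}=\sum_n p_n\proj{e_n}$ before collapsing the sum, whereas you compute the diagonal elements $p_i=\bra{\psi_i}\sigma_B\ket{\psi_i}$ directly; your version is marginally cleaner but not a different argument.
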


\bigskip
Interestingly, $E_{\min}^u(\psi_{AB})$ has an operational interpretation in terms of
deterministic entanglement transformation~\cite{Duan2005b}, which we briefly introduce as follows. Let
$\ket{\psi_1}$ and $\ket{\psi_2}$ be two pure bipartite states for systems $AB$. Let $m\in\mathbb{N}$ be an integer. We define
$f(m)$ to be the maximum integer $n$ such that $\psi_1^{\ox m}$ can be transformed into $\psi_2^{\ox n}$ by LOCC
deterministically. The \textit{deterministic entanglement transformation rate} from $\psi_1$ to $\psi_2$, written
$D(\psi_1\to\psi_2)$, is defined as
\begin{align}
    D(\psi_1\to\psi_2) \coloneqq  \sup_{m\geq1} \frac{f(m)}{m}.
\end{align}
Intuitively, for sufficiently large $m$, one can transform $m$ copies of $\psi_1$ exactly into $mD(\psi_1\to\psi_2)$
copies of $\psi_2$ by LOCC. We have the following proposition, which is a consequence of Proposition~\ref{prop:e-u-min-pure} and the developments in \cite{Duan2005b}:
\begin{proposition}
Let $\ket{\psi}_{AB}$ be a pure state in $AB$ and $\Phi^2$ be the ebit. Then the following equality holds
\begin{align}
    D\left(\psi\to\Phi^2\right) = E_{\min}^u\left(\psi_{AB}\right).
\end{align}
\end{proposition}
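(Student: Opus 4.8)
The plan is to combine the closed-form evaluation of $E_{\min}^u(\psi_{AB})$ from Proposition~\ref{prop:e-u-min-pure} with Nielsen's majorization criterion for deterministic LOCC transformations and the achievability result of~\cite{Duan2005b}. First I would fix the Schmidt decomposition $\ket{\psi}_{AB} = \sum_{i=1}^k \sqrt{\alpha_i}\,\ket{\psi_i}_A\ket{\psi_i}_B$ with $\alpha_1 \geq \cdots \geq \alpha_k > 0$, so that Proposition~\ref{prop:e-u-min-pure} already gives $E_{\min}^u(\psi_{AB}) = -\log\alpha_1$. It then remains only to prove $D(\psi\to\Phi^2) = -\log\alpha_1$, which I would establish via two matching bounds.

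For the upper bound, I would invoke Nielsen's theorem: $\psi^{\ox m}$ can be transformed into $(\Phi^2)^{\ox n}$ by LOCC with certainty if and only if the vector of squared Schmidt coefficients of $\psi^{\ox m}$ is majorized by that of $(\Phi^2)^{\ox n}$. Since the largest squared Schmidt coefficient of $\psi^{\ox m}$ equals $\alpha_1^m$, whereas every squared Schmidt coefficient of $(\Phi^2)^{\ox n}$ equals $2^{-n}$, the leading-term inequality in the majorization relation forces $\alpha_1^m \leq 2^{-n}$, i.e., $n \leq -m\log\alpha_1$. Hence $f(m) \leq -m\log\alpha_1$ for every $m \geq 1$, and taking the supremum over $m$ gives $D(\psi\to\Phi^2) \leq -\log\alpha_1$. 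I would also remark that $f$ is superadditive, since independent transformations run in parallel, so by Fekete's lemma the supremum defining $D$ coincides with $\lim_{m\to\infty} f(m)/m$.

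For the lower bound I would appeal to~\cite{Duan2005b}, where it is shown that the rate $-\log\alpha_1$ is asymptotically achievable, i.e., there exist $m_j\to\infty$ and $n_j$ with $n_j/m_j\to-\log\alpha_1$ such that $\psi^{\ox m_j}$ can be converted deterministically into $(\Phi^2)^{\ox n_j}$ by LOCC. Together with the upper bound this yields $D(\psi\to\Phi^2) = -\log\alpha_1 = E_{\min}^u(\psi_{AB})$. The main obstacle is exactly this achievability input: for a single copy the transformation $\psi\to(\Phi^2)^{\ox n}$ is extremely restrictive, so the saturation of the leading-term majorization bound only emerges in the many-copy limit and requires the combinatorial construction of~\cite{Duan2005b} rather than a routine i.i.d.\ typicality argument.
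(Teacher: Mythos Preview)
Your proposal is correct and follows essentially the same approach as the paper. The paper does not give a detailed proof at all; it simply states that the proposition ``is a consequence of Proposition~\ref{prop:e-u-min-pure} and the developments in~\cite{Duan2005b},'' and your argument is precisely an unpacking of that sentence: Proposition~\ref{prop:e-u-min-pure} supplies $E_{\min}^u(\psi_{AB}) = -\log\alpha_1$, while the Nielsen-theorem upper bound together with the achievability construction of~\cite{Duan2005b} yields $D(\psi\to\Phi^2) = -\log\alpha_1$.
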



\subsection{Unextendible fidelity}

Let $\rho,\sigma\in\cS(A)$ be two quantum states. The (root) fidelity between $\rho$ and $\sigma$ is defined as \cite{U76}
\begin{align}
  F(\rho,\sigma) \coloneqq  \left\Vert\sqrt{\rho}\sqrt{\sigma}\right\Vert_1
  								= \tr\!\left[\sqrt{\sqrt{\sigma}\rho\sqrt{\sigma}}\right].\label{eq:fidelity}
\end{align}
Here we define the unextendible fidelity of a state $\rho_{AB}$:

\begin{definition}[Unextendible fidelity]
For a given bipartite state $\rho_{AB}$, the unextendible fidelity is defined as
\begin{align}\label{eq:unextendible fidelity}
F^u(\rho_{AB})\coloneqq \sup_{\sigma_{AB'}\in\cF_{\rho_{AB}}}F(\rho_{AB}, \sigma_{AB'}).
\end{align}
\end{definition}

Note that the supremum in~\eqref{eq:unextendible fidelity} can be replaced with a maximum.

Suppose that $\Psi_{ABC}$ is a purification of $\rho_{AB}$. By applying Remark~\ref{remark:purification-extension},
we see that the
unextendible fidelity can be alternatively understood as a measure of how well one can recover the state $\rho_{AB}$ if
system $B$ is lost and a recovery channel is performed on the purification system $C$ alone, due to the following
equivalent formulation:
\begin{align}
  F^u(\rho_{AB}) = \max_{\cR_{C\to B'}}
  \left\{F(\rho_{AB}, \rho_{AB'}): \rho_{ABB'} = \left(\id_{AB}\ox\cR_{C\to B'}\right)\left(\proj{\Psi}_{ABC}\right)
  \right\},
\end{align}
where the maximum ranges over every quantum channel $\cN_{C\to B'}$. The unextendible fidelity is thus similar in spirit to the fidelity of recovery from \cite{SW14}, but one finds that it is a different measure when analyzing it in more detail.

By examining \eqref{eq:sandwiched-Renyi} and \eqref{eq:fidelity},
one immediately finds that $\widetilde{D}_{1/2}\left(\rho\Vert\sigma\right) = - \log [F(\rho,\sigma)]^2$. Thus, we
establish the following equivalence between unextendible fidelity and $1/2$-sandwiched unextendible entanglement:
\begin{align}
    F^u(\rho_{AB})
& \coloneqq  \max_{\sigma_{AB'}\in\cF_{\rho_{AB}}}F(\rho_{AB}, \sigma_{AB'}) \label{eq:relation-unext-fid-sandwiched-1} \\
& =  \max_{\sigma_{AB'}\in\cF_{\rho_{AB}}}
       2^{-\frac{1}{2}\widetilde{D}_{1/2}\left(\rho_{AB}\Vert\sigma_{AB'}\right)} \\
& =   2^{-\widetilde{E}^u_{1/2}(\rho_{AB})}.
\label{eq:relation-unext-fid-sandwiched-last}
\end{align}

Since the fidelity function is SDP computable~\cite{watrous2013}, it follows that the unextendible fidelity can be
computed efficiently by means of a semidefinite program. To be more specific, the following two optimization programs
satisfy strong duality and both evaluate to $F^u(\rho_{AB})$. For completeness,  we show in detail
how to derive the dual program in Appendix~\ref{app:dual-sdp-unextendible-fidelity}.

\begin{minipage}[t]{0.46\linewidth}
\begin{equation}\label{eq:fid-primal}
\begin{split}
& \text{\underline{\bf Primal Program}} \\
\text{maximize}   &\quad \frac{1}{2}\tr[X_{AB}] + \frac{1}{2}\tr[X_{AB}^\dagger]  \\
\text{subject to} &\quad \begin{pmatrix} \rho_{AB} & X_{AB} \\ X_{AB}^\dagger & \tr_{B}[\sigma_{ABB'}] \end{pmatrix} \geq 0 \\
            &\quad \rho_{AB} = \tr_{B'}[\sigma_{ABB'}] \\
            &\quad \sigma_{ABB'} \geq 0 \\
            &\quad X_{AB}\in\cL(AB)
\end{split}\end{equation}
\end{minipage}
\begin{minipage}[t]{0.5\linewidth}
\begin{equation}\label{eq:fid-dual}
\begin{split}
& \text{\underline{\bf Dual Program}} \\
\text{minimize}   &\quad \frac{1}{2}\tr[W_{AB}\rho_{AB}] + \frac{1}{2}\tr[Z_{AB}\rho_{AB}] \\
\text{subject to} &\quad  \begin{pmatrix} W_{AB} & - \1_{AB} \\ -\1_{AB} & Y_{AB'}\end{pmatrix} \geq 0 \\
            &\quad Z_{AB}\ox\1_{B'} \geq Y_{AB'}\ox\1_{B} \\
            &\quad W_{AB}, Y_{AB'}, Z_{AB} \geq 0
            \end{split}\end{equation}
\end{minipage}

\bigskip

We also establish the following equivalent dual representation of $F^u(\rho_{AB})$ in
Appendix~\ref{app:dual-sdp-unextendible-fidelity}:
\begin{equation}
\begin{split}
\text{infimum}   &\quad \sqrt{\tr[Y_{AB'}^{-1}\rho_{AB}]\tr[Z_{AB}\rho_{AB}]}  \\
\text{subject to} &\quad Z_{AB}\ox\1_{B'} \geq Y_{AB'}\ox\1_{B} \\
            &\quad Z_{AB}, Y_{AB'} > 0.
\end{split}
\label{eq:dual-rep-unext-fid}
\end{equation}

As a direct consequence of this equivalent dual representation,
we find that the extendible fidelity is multiplicative (proof in Appendix~\ref{appx:fid-multiplicativity}).

\begin{proposition}[Multiplicativity]
\label{prop:fid-multiplicativity}
Let $\rho_{A_1B_1}\in\cS(A_1B_1)$ and $\rho_{A_2B_2}\in\cS(A_2B_2)$ be two bipartite states.
The following equality holds
\begin{align}\label{eq:fid-multiplicativity}
  F^u(\rho_{A_1B_1}\ox\rho_{A_2B_2}) = F^u(\rho_{A_1B_1})F^u(\rho_{A_2B_2}),
\end{align}
where the entanglement is evaluated across the $A{:}B$ bipartition.
\end{proposition}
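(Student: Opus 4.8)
The plan is to prove the two inequalities separately, following the template of the additivity proof in Proposition~\ref{prop:max-dditivity}: use the primal formulation for the ``$\geq$'' direction and the equivalent dual representation \eqref{eq:dual-rep-unext-fid} for the ``$\leq$'' direction.

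For $F^u(\rho_{A_1B_1}\ox\rho_{A_2B_2}) \geq F^u(\rho_{A_1B_1})\,F^u(\rho_{A_2B_2})$, I would pick free states $\sigma_{A_1B_1'}\in\cF_{\rho_{A_1B_1}}$ and $\sigma_{A_2B_2'}\in\cF_{\rho_{A_2B_2}}$ attaining the respective unextendible fidelities, together with extensions $\sigma_{A_1B_1B_1'}$ and $\sigma_{A_2B_2B_2'}$. Exactly as in the proof of Proposition~\ref{prop:subadditivity}, the state $\sigma_{A_1B_1B_1'}\ox\sigma_{A_2B_2B_2'}$ is an extension of $\rho_{A_1B_1}\ox\rho_{A_2B_2}$, so that $\sigma_{A_1B_1'}\ox\sigma_{A_2B_2'}$ is a valid free state for $\rho_{A_1B_1}\ox\rho_{A_2B_2}$. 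Since the root fidelity is multiplicative under tensor products, $F(\rho_{A_1B_1}\ox\rho_{A_2B_2},\,\sigma_{A_1B_1'}\ox\sigma_{A_2B_2'}) = F(\rho_{A_1B_1},\sigma_{A_1B_1'})\,F(\rho_{A_2B_2},\sigma_{A_2B_2'})$, and taking a supremum on both sides yields the inequality. (Alternatively, this direction is immediate from \eqref{eq:relation-unext-fid-sandwiched-last} together with the subadditivity of $\widetilde{E}^u_{1/2}$ from Proposition~\ref{prop:subadditivity}.)

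For the reverse inequality, I would invoke the dual representation \eqref{eq:dual-rep-unext-fid}. Let $(Y_{A_1B_1'},Z_{A_1B_1})$ and $(Y_{A_2B_2'},Z_{A_2B_2})$ be feasible pairs for $\rho_{A_1B_1}$ and $\rho_{A_2B_2}$ attaining the optima, with all four operators strictly positive. Set $Y\coloneqq Y_{A_1B_1'}\ox Y_{A_2B_2'}$ and $Z\coloneqq Z_{A_1B_1}\ox Z_{A_2B_2}$, which remain strictly positive. From $Z_{A_iB_i}\ox\1_{B_i'}\geq Y_{A_iB_i'}\ox\1_{B_i}\geq 0$ and the elementary fact that $P_1\geq Q_1\geq 0$ and $P_2\geq Q_2\geq 0$ imply $P_1\ox P_2\geq Q_1\ox Q_2$, one obtains, after a unitary permutation of the tensor factors, the constraint $Z\ox\1_{B_1'B_2'}\geq Y\ox\1_{B_1B_2}$, so $(Y,Z)$ is dual feasible for $\rho_{A_1B_1}\ox\rho_{A_2B_2}$. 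Since $(Y_{A_1B_1'}\ox Y_{A_2B_2'})^{-1}=Y_{A_1B_1'}^{-1}\ox Y_{A_2B_2'}^{-1}$ and the trace is multiplicative over tensor products, the objective value of $(Y,Z)$ factors as $\sqrt{\tr[Y_{A_1B_1'}^{-1}\rho_{A_1B_1}]\,\tr[Z_{A_1B_1}\rho_{A_1B_1}]}\cdot\sqrt{\tr[Y_{A_2B_2'}^{-1}\rho_{A_2B_2}]\,\tr[Z_{A_2B_2}\rho_{A_2B_2}]}$. Weak duality for the product problem then bounds $F^u(\rho_{A_1B_1}\ox\rho_{A_2B_2})$ above by this quantity, and invoking strong duality for each of the two individual problems gives $F^u(\rho_{A_1B_1}\ox\rho_{A_2B_2})\leq F^u(\rho_{A_1B_1})\,F^u(\rho_{A_2B_2})$.

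The main obstacle I anticipate is in the ``$\leq$'' direction: one must carefully justify that the operator inequality $Z\ox\1\geq Y\ox\1$ indeed tensorizes (using positivity of the operators involved and the reordering of subsystems), and that the strict positivity $Y_i,Z_i>0$ demanded by \eqref{eq:dual-rep-unext-fid} may be assumed without loss of generality---either because the minimum is attained with positive-definite operators, or by replacing a near-optimal feasible pair with $Y_i+\varepsilon\1$, $Z_i+\varepsilon\1$ and letting $\varepsilon\to 0$ using continuity of the objective. If one prefers to bypass these technicalities, an alternative is to tensor feasible triples $(W_i,Y_i,Z_i)$ of the dual program \eqref{eq:fid-dual} directly and check that the block-matrix and operator constraints are preserved under tensor products, though the representation \eqref{eq:dual-rep-unext-fid} makes the multiplicative structure most transparent.
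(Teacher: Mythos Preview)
Your proposal is correct and follows essentially the same approach as the paper: the primal (tensor of optimal extensions plus multiplicativity of fidelity) for the ``$\geq$'' direction, and the dual representation \eqref{eq:dual-rep-unext-fid} with tensored feasible pairs for the ``$\leq$'' direction. The one minor difference is that the paper sidesteps your attainment/strict-positivity concern by taking \emph{arbitrary} feasible pairs $(Y_i,Z_i)$ rather than optimal ones, deriving the bound $F^u(\rho_{A_1B_1}\ox\rho_{A_2B_2})\leq \sqrt{\tr[Y_1^{-1}\rho_1]\tr[Z_1\rho_1]}\cdot\sqrt{\tr[Y_2^{-1}\rho_2]\tr[Z_2\rho_2]}$ for all such pairs, and only then infimizing on the right-hand side; this is slightly cleaner than your $\varepsilon$-perturbation fix, though both are valid.
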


\bigskip
Recall \eqref{eq:relation-unext-fid-sandwiched-1}--\eqref{eq:relation-unext-fid-sandwiched-last}, which relates the unextendible fidelity and the $1/2$-sandwiched unextendible entanglement.
Proposition~\ref{prop:fid-multiplicativity} demonstrates that the logarithmic unextendible fidelity is an
additive unextendible entanglement measure, different from both the min- and max-unextendible entanglement.
\begin{corollary}
Let $\rho_{A_1B_1}\in\cS(A_1B_1)$ and $\rho_{A_2B_2}\in\cS(A_2B_2)$ be two bipartite states.
The following additivity relation holds
\begin{align}
  \widetilde{E}^u_{1/2}(\rho_{A_1B_1}\ox\rho_{A_2B_2})
= \widetilde{E}^u_{1/2}(\rho_{A_1B_1}) + \widetilde{E}^u_{1/2}(\rho_{A_2B_2}),
\end{align}
where the entanglement is evaluated across the $A{:}B$ bipartition.
\end{corollary}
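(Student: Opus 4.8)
The plan is to obtain this additivity relation as an immediate consequence of Proposition~\ref{prop:fid-multiplicativity} together with the identity already recorded in \eqref{eq:relation-unext-fid-sandwiched-1}--\eqref{eq:relation-unext-fid-sandwiched-last}. Recall that those equations establish
\begin{equation}
F^u(\rho_{AB}) = 2^{-\widetilde{E}^u_{1/2}(\rho_{AB})},
\end{equation}
equivalently $\widetilde{E}^u_{1/2}(\rho_{AB}) = -\log F^u(\rho_{AB})$, which is valid because $\widetilde{D}_{1/2}(\rho\Vert\sigma) = -\log [F(\rho,\sigma)]^2$ and the infimum defining $\widetilde{E}^u_{1/2}$ translates into the supremum defining $F^u$.

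Given this, the argument is a one-line calculation. First I would apply the identity to the product state $\rho_{A_1B_1}\ox\rho_{A_2B_2}$, writing $\widetilde{E}^u_{1/2}(\rho_{A_1B_1}\ox\rho_{A_2B_2}) = -\log F^u(\rho_{A_1B_1}\ox\rho_{A_2B_2})$. Next I would invoke Proposition~\ref{prop:fid-multiplicativity} to replace $F^u(\rho_{A_1B_1}\ox\rho_{A_2B_2})$ by the product $F^u(\rho_{A_1B_1})F^u(\rho_{A_2B_2})$. Finally, using $\log(xy) = \log x + \log y$ for positive reals and applying the identity again in the reverse direction to each factor, I would conclude
\begin{align}
\widetilde{E}^u_{1/2}(\rho_{A_1B_1}\ox\rho_{A_2B_2})
&= -\log\!\left[F^u(\rho_{A_1B_1})F^u(\rho_{A_2B_2})\right] \\
&= -\log F^u(\rho_{A_1B_1}) - \log F^u(\rho_{A_2B_2}) \\
&= \widetilde{E}^u_{1/2}(\rho_{A_1B_1}) + \widetilde{E}^u_{1/2}(\rho_{A_2B_2}).
\end{align}

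There is essentially no obstacle here; the only point requiring a moment's care is well-definedness of the logarithms, which is immediate because the fidelity between two normalized states is strictly positive whenever their supports overlap, and in particular $F^u(\rho_{AB}) \ge F(\rho_{AB},\sigma_{AB'}) > 0$ for any free state $\sigma_{AB'}$ with $\supp(\rho_{AB})\cap\supp(\sigma_{AB'})\neq \{0\}$ (and such a free state always exists, e.g.\ arising from any extension of $\rho_{AB}$), so $\widetilde{E}^u_{1/2}$ is finite and the manipulations above are valid. The corollary thus follows directly.
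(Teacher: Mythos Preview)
Your proposal is correct and follows essentially the same approach as the paper: the corollary is stated immediately after Proposition~\ref{prop:fid-multiplicativity} and the relation \eqref{eq:relation-unext-fid-sandwiched-1}--\eqref{eq:relation-unext-fid-sandwiched-last}, with the implicit proof being precisely the logarithm-of-a-product computation you spell out. Your added remark on positivity of $F^u$ (hence finiteness of $\widetilde{E}^u_{1/2}$) is a nice clarification that the paper leaves unstated.
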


Using the relation in \eqref{eq:relation-unext-fid-sandwiched-1}--\eqref{eq:relation-unext-fid-sandwiched-last} 
between the unextendible fidelity and the $1/2$-sandwiched unextendible entanglement, 
we find that the unextendible fidelity of a pure bipartite state can be computed explicitly.

\begin{proposition}
Let $\psi_{AB}$ be a pure bipartite state. Then the following equality holds
\begin{equation}
F^u(\psi_{AB}) = \lambda_{\max}(\psi_A),
\end{equation}
where $\lambda_{\max}(\rho_A)$ is the maximal eigenvalue of $\psi_A$.
\end{proposition}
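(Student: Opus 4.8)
The plan is to deduce the claim directly from two facts already established: the identity $F^{u}(\rho_{AB}) = 2^{-\widetilde{E}_{1/2}^{u}(\rho_{AB})}$ recorded in \eqref{eq:relation-unext-fid-sandwiched-1}--\eqref{eq:relation-unext-fid-sandwiched-last}, and the evaluation of the $\alpha$-sandwiched unextendible entanglement on pure states given by Proposition~\ref{prop:reduction-for-pure-states}. So the strategy is simply to specialize both to $\alpha = 1/2$ and combine.

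Concretely, I would first recall that for a pure bipartite state $\psi_{AB}$, Proposition~\ref{prop:reduction-for-pure-states} gives $\widetilde{E}_{\alpha}^{u}(\psi_{AB}) = H_{\beta}(\psi_A)$ with $\beta = [2\alpha-1]^{-1}$. Setting $\alpha = 1/2$ forces $\beta = +\infty$, so that $\widetilde{E}_{1/2}^{u}(\psi_{AB}) = H_{\infty}(\psi_A) = H_{\min}(\psi_A) = -\log\lambda_{\max}(\psi_A)$, where the last equality is the limiting formula $H_{\infty}(A)_{\psi} = -\log\lVert\psi_A\rVert_{\infty}$ recalled earlier. Substituting this into $F^{u}(\psi_{AB}) = 2^{-\widetilde{E}_{1/2}^{u}(\psi_{AB})}$ yields $F^{u}(\psi_{AB}) = 2^{\log\lambda_{\max}(\psi_A)} = \lambda_{\max}(\psi_A)$, which is the claim.

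As an independent check I would also sketch a direct argument from the definition of $F^{u}$. Since $\psi_{AB}$ is pure, every extension $\rho_{ABB'}$ with $\tr_{B'}[\rho_{ABB'}] = \psi_{AB}$ must factor as $\rho_{ABB'} = \proj{\psi}_{AB}\ox\sigma_{B'}$ for some state $\sigma_{B'}$, so the associated free state is $\sigma_{AB'} = \psi_A\ox\sigma_{B'}$. Using $F(\proj{\psi},\tau) = \sqrt{\bra{\psi}\tau\ket{\psi}}$ together with the Schmidt decomposition $\ket{\psi}_{AB} = \sum_i\sqrt{\lambda_i}\ket{i}_A\ket{i}_B$ (so $\psi_A = \sum_i\lambda_i\proj{i}_A$ with $\lambda_1 = \lambda_{\max}(\psi_A)$), a short computation gives $F(\psi_{AB},\psi_A\ox\sigma_{B'})^2 = \sum_i\lambda_i^2\bra{i}\sigma_{B'}\ket{i}$. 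Because $\sum_i\bra{i}\sigma_{B'}\ket{i}\leq 1$, this is maximized by the choice $\sigma_{B'} = \proj{1}_{B'}$, giving value $\lambda_1^2$, so that $F^{u}(\psi_{AB}) = \lambda_1 = \lambda_{\max}(\psi_A)$.

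There is essentially no hard step: the only points that require attention are the boundary value $\alpha = 1/2$, which sends $\beta = [2\alpha-1]^{-1}$ to $+\infty$, and the identification $H_{\infty}(\psi_A) = -\log\lambda_{\max}(\psi_A)$. In the direct approach, the only substantive observation is that purity of $\psi_{AB}$ forces all of its extensions to be of product form, after which the maximization over $\sigma_{B'}$ reduces to a trivial linear optimization over the diagonal entries of $\sigma_{B'}$ in the Schmidt basis.
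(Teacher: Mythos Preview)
Your proof is correct and your main argument is essentially identical to the paper's: both invoke the identity $F^{u}=2^{-\widetilde{E}^{u}_{1/2}}$ from \eqref{eq:relation-unext-fid-sandwiched-1}--\eqref{eq:relation-unext-fid-sandwiched-last} and then apply Proposition~\ref{prop:reduction-for-pure-states} at $\alpha=1/2$ to obtain $H_{\min}(\psi_A)=-\log\lambda_{\max}(\psi_A)$. Your additional direct computation from the definition is a nice independent check (mirroring the spirit of Proposition~\ref{prop:e-u-min-pure}), but it is not part of the paper's proof of this particular proposition.
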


\section{Applications for secret key and entanglement distillation}
\label{sec:applications}

\subsection{Private states and unextendible entanglement}

In this section, we review the definition of a private state \cite{horodecki2005secure,horodecki2009general}, and then we
establish a bound on the number of private bits contained in a private state, in terms of the state's unextendible entanglement.
These results find applications in the next two subsections (Sections~\ref{sec:key-dist-overhead} and \ref{sec:exact-key-dist}),
where we investigate the overhead of probabilistic secret key distillation and the rate of exact secret key distillation.

We first review the definition of a private state~\cite{horodecki2005secure,horodecki2009general}.
Let $\rho_{ABA'B'}\in\cS({ABA'B'})$ be a state shared between spatially separated parties Alice and Bob,
where Alice possesses systems~$A$ and $A'$ and Bob possesses systems $B$ and $B'$, such
that
\begin{equation}
K\equiv\dim(\cH_A)=\dim(\cH_B).
\end{equation}
A state $\rho_{ABA'B'}$ is called a private state~\cite{horodecki2005secure,horodecki2009general} if Alice and Bob can
extract a secret key from it by performing local measurements on $A$ and $B$, such that the key is product with an arbitrary purifying
system of $\rho_{ABA'B'}$. That is, $\rho_{ABA'B'}$ is a private state of $\log  K$ private bits if, for every
purification $\varphi^\rho_{ABA'B'E}$ of $\rho_{ABA'B'}$, the following holds:
\begin{align}
    \left(\cM_A\ox\cM_B\ox\tr_{A'B'}\right)(\varphi^\rho_{ABA'B'E})
  = \frac{1}{K}\sum_{k=1}^K \proj{k}_A\ox\proj{k}_B\ox\sigma_E,
\end{align}
where $\cM(\cdot) = \sum_k\proj{k}(\cdot)\proj{k}$ is a projective measurement channel and $\sigma_E$ is some state on
the purifying system. The systems $A$ and $B$ are known as key systems, and $A'$ and $B'$ are known as shield systems. Interestingly, it was shown that a private state of $\log K$ private bits can be written in the
following form~\cite{horodecki2005secure,horodecki2009general}
\begin{align}
\gamma_{ABA^{\prime}B^{\prime}}\coloneqq U_{ABA^{\prime}B^{\prime}}
\left(\Phi^K_{AB}\otimes\sigma_{A^{\prime}B^{\prime}}\right)U_{ABA^{\prime}B^{\prime}%
}^{\dag},\label{eq:gamma-priv-state}%
\end{align}
where $\Phi^K_{AB}$ is a maximally entangled state of Schmidt rank $K$, the state $\sigma_{A^{\prime}%
B^{\prime}}$ is an arbitrary state, and
\begin{align}
U_{ABA^{\prime}B^{\prime}}\coloneqq \sum_{i,j=1}^K \proj{i}_A\otimes\proj{j}_B\otimes U_{A^{\prime}B^{\prime}}^{ij}
\end{align}
is a controlled unitary known as a ``twisting unitary,'' with each $U^{ij}_{A'B'}$ a unitary.

We now establish some bounds on the number of private bits contained in a private state, in terms of its unextendible entanglement.
Recall that the generalized mutual information of a state~$\rho_{AB}$ is defined as follows \cite{SW12}:
\begin{equation}
I_{\mathbf{D}}(A;B)_{\rho}\coloneqq \inf_{\sigma_{B}}\mathbf{D}(\rho_{AB}\Vert\rho
_{A}\otimes\sigma_{B}),
\end{equation}
where $\mathbf{D}$ is the generalized divergence discussed in Section~\ref{sec:generalized divergence} and the infimum is with respect to every density operator $\sigma_{B}$.
We first show that the unextendible entanglement of a private state is bounded from below by 
the generalized mutual information of $\Phi_{AB}$ (proof in Appendix~\ref{appx:lower-bound}). 

\begin{proposition}
\label{prop:lower-bound}
For a $\gamma$-bipartite private state of the form in
\eqref{eq:gamma-priv-state}, the following bound holds%
\begin{equation}
\mathbf{E}^{u}(\gamma_{AA^{\prime}BB^{\prime}})\geq\frac{1}{2}I_{\mathbf{D}%
}(A;B)_{\Phi^K},\label{eq:lower-bnd-priv-state}%
\end{equation}
where $I_{\mathbf{D}%
}(A;B)_{\Phi}$ is evaluated with respect to the state $\Phi^K_{AB}$ in \eqref{eq:gamma-priv-state}.
\end{proposition}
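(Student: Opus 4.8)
The plan is to prove the bound $\mathbf{D}(\gamma_{AA'BB'}\Vert\sigma_{AA'\hat B\hat B'})\geq I_{\mathbf{D}}(A;B)_{\Phi^K}$ for every free state $\sigma_{AA'\hat B\hat B'}\in\cF_{\gamma_{AA'BB'}}$, and then to take the infimum over $\sigma$ and divide by $2$. Here the bipartition is the Alice--Bob cut $AA'\!:\!BB'$, and the extension system $\hat B\hat B'\cong BB'$ is split, via a fixed identification, into a key part $\hat B\cong B$ and a shield part $\hat B'\cong B'$. Only the data-processing inequality for $\mathbf{D}$ will be used, so the argument applies to an arbitrary generalized divergence.

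\emph{Step 1: structure of the free states.} Given $\sigma_{AA'\hat B\hat B'}\in\cF_{\gamma_{AA'BB'}}$, let $\Gamma_{AA'BB'\hat B\hat B'}$ be an extension realizing it, so that $\tr_{\hat B\hat B'}[\Gamma]=\gamma_{AA'BB'}$ and $\tr_{BB'}[\Gamma]=\sigma_{AA'\hat B\hat B'}$. I would apply the inverse twisting unitary $U^{\dagger}_{ABA'B'}$ to the $ABA'B'$ systems of $\Gamma$. Since $U^{\dagger}_{ABA'B'}\gamma_{AA'BB'}U_{ABA'B'}=\Phi^K_{AB}\ox\sigma_{A'B'}$ has the \emph{pure} marginal $\Phi^K_{AB}$ on $AB$, the transformed state must factor as $\Phi^K_{AB}\ox\rho_{A'B'\hat B\hat B'}$, where $\rho_{A'B'\hat B\hat B'}$ is forced to be an extension of $\sigma_{A'B'}$. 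Conjugating back by $U_{ABA'B'}$ and tracing out $BB'$ then shows
\begin{equation}
\sigma_{AA'\hat B\hat B'}=\frac{1}{K}\sum_{p=1}^{K}\proj{p}_A\ox\eta^{p}_{A'\hat B\hat B'},\qquad
\eta^{p}_{A'\hat B\hat B'}\coloneqq \tr_{B'}\!\left[U^{pp}_{A'B'}\,\rho_{A'B'\hat B\hat B'}\,(U^{pp}_{A'B'})^{\dagger}\right],
\end{equation}
and, crucially, $\tr_{A'\hat B'}[\eta^{p}_{A'\hat B\hat B'}]=\rho_{\hat B}$ is independent of $p$, because $U^{pp}_{A'B'}$ acts only on the traced-out systems $A'B'$. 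Thus every free state is block-diagonal in the computational basis of the key system $A$, with uniform weights and a $p$-independent marginal on $\hat B$.

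\emph{Step 2: data-processing chain.} I would then bound $\mathbf{D}(\gamma_{AA'BB'}\Vert\sigma_{AA'\hat B\hat B'})$ by applying three channels to both arguments. First untwist, i.e.\ apply $U^{\dagger}_{ABA'B'}$ (respectively $U^{\dagger}_{A\hat BA'\hat B'}$ on $\sigma$); this sends $\gamma_{AA'BB'}\mapsto\Phi^K_{AB}\ox\sigma_{A'B'}$ and preserves the block form of $\sigma$. Second, discard the shield systems $A',B'$ (respectively $A',\hat B'$), leaving $\Phi^K_{AB}$ on one side and $\frac{1}{K}\sum_{p}\proj{p}_A\ox\mu^{p}_{\hat B}$ on the other, where Step~1 guarantees $\langle l|\mu^{p}_{\hat B}|l\rangle=\langle l|\rho_{\hat B}|l\rangle$ for all $p,l$. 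Third, apply the correlated phase twirl $\cC(\cdot)\coloneqq \frac{1}{K}\sum_{k=0}^{K-1}(Z^{k}_A\ox Z^{-k}_{\hat B})(\cdot)(Z^{k}_A\ox Z^{-k}_{\hat B})^{\dagger}$, with $Z\coloneqq \sum_{l}e^{2\pi i l/K}\proj{l}$: it leaves $\Phi^K_{AB}$ invariant, and it dephases each $\mu^{p}_{\hat B}$ in the computational basis, so that $\frac{1}{K}\sum_{p}\proj{p}_A\ox\mu^{p}_{\hat B}\mapsto\Phi^K_A\ox\Delta(\rho_{\hat B})$ (the same state in every block, since all the diagonals coincide), where $\Delta$ is the completely dephasing channel and I used $\frac{1}{K}\sum_p\proj{p}_A=\Phi^K_A$. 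Data processing now gives
\begin{equation}
\mathbf{D}(\gamma_{AA'BB'}\Vert\sigma_{AA'\hat B\hat B'})\;\geq\;\mathbf{D}(\Phi^K_{AB}\Vert \Phi^K_A\ox\Delta(\rho_{\hat B}))\;\geq\;\inf_{\tau_{B}}\mathbf{D}(\Phi^K_{AB}\Vert \Phi^K_A\ox\tau_{B})\;=\;I_{\mathbf{D}}(A;B)_{\Phi^K} .
\end{equation}
Taking the infimum over $\sigma_{AA'\hat B\hat B'}\in\cF_{\gamma_{AA'BB'}}$ and multiplying by $\tfrac12$ yields \eqref{eq:lower-bnd-priv-state}.

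\emph{Main obstacle.} The subtle part is Step~1. A more naive route---untwisting $\gamma$ and merely discarding the shields, or applying a binary privacy-test measurement---only controls the overlap with $\Phi^K_{AB}$ and yields a bound weaker by a factor of two. What makes the sharp bound go through is that \emph{after untwisting} $\Phi^K_{AB}$ is pure, which rigidly forces every extension to be of product form; conjugating back transfers this rigidity into the $p$-independence of the $\hat B$-marginals of the free state, and it is precisely this feature that lets the phase twirl convert a free state into a product state of the form $\Phi^K_A\ox\tau_B$ without perturbing $\Phi^K_{AB}$.
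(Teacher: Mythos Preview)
Your proof is correct and reaches the same conclusion as the paper's, but the technical route in Step~2 is genuinely different.

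Both proofs share Step~1 exactly: every extension of $\gamma_{AA'BB'}$ is, after untwisting, of the form $\Phi^K_{AB}\otimes\rho_{A'B'\hat B\hat B'}$ because $\Phi^K_{AB}$ is pure. The divergence is in how each argument handles the free state afterward. The paper notices that on states supported on $\operatorname{span}\{|i\rangle_A|i\rangle_B\}$ the full twisting unitary $U_{ABA'B'}$ acts identically to the \emph{$A$-controlled} unitary $V_{ABA'B'}\coloneqq\sum_i|i\rangle\!\langle i|_A\otimes \1_B\otimes U^{ii}_{A'B'}$, so that $\gamma_{ABA'B'}=V(\Phi^K_{AB}\otimes\sigma_{A'B'})V^{\dagger}$. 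Because $V$ is controlled on $A$ alone, conjugating the free state by $V^{\dagger}$ and tracing out the shield systems lets the $U^{ii}$'s cycle away inside the trace, producing $\pi_A\otimes\sigma_{\hat B}$ \emph{directly}---no twirl is needed. You instead conjugate by the original $U^{\dagger}$ (controlled on both $A$ and $\hat B$), which leaves the free state less simplified; you then salvage the argument by observing that the computational-basis diagonals of the resulting $\hat B$-blocks are $p$-independent, and you use the correlated phase twirl to kill the off-diagonals while fixing $\Phi^K_{AB}$.

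What each buys: the paper's $V$-trick is one step shorter and shows that the second argument can be brought to $\pi_A\otimes\sigma_{\hat B}$ (not merely $\pi_A\otimes\Delta(\rho_{\hat B})$), which is marginally stronger though irrelevant for the final bound after minimizing over $\tau_B$. Your route is a perfectly natural alternative and makes explicit the mechanism (the $p$-independence of the $\hat B$-marginal) that the paper's $V$-trick exploits implicitly.
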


\bigskip
As a corollary of Proposition~\ref{prop:lower-bound} and Lemmas~\ref{lem:Petz-Renyi-MI-pure}, \ref{lem:sandwiched-Renyi-MI-pure}, and \ref{lem:geometric-Renyi-MI-pure}, we find that if the generalized divergence is set to be the Petz--, sandwiched, or geometric R\'enyi relative entropy,
then the unextendible entanglement of a $\gamma$-bipartite private state is bounded 
from below by the amount of secret key that can be extracted from the state.
Note that this corollary includes the relative entropy, the min-relative entropy, 
and the max-relative entropy as limiting cases.

\begin{corollary}
\label{cor:lower-bnd-unext-alpha-private}
If the generalized divergence is the Petz--, sandwiched, or geometric R\'enyi relative entropy with $\alpha$ set so that the data-processing inequality is satisfied, then the following bound holds%
\begin{equation}
\mathbf{E}_{\alpha}^u(\gamma_{AA^{\prime}BB^{\prime}})\geq\log  K.
\end{equation}
\end{corollary}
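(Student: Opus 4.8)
The plan is to read this off directly from Proposition~\ref{prop:lower-bound} together with the pure-state reduction lemmas. First I would invoke Proposition~\ref{prop:lower-bound}, which asserts that for a $\gamma$-bipartite private state of the form in~\eqref{eq:gamma-priv-state},
\begin{equation}
\mathbf{E}^{u}(\gamma_{AA^{\prime}BB^{\prime}})\geq\frac{1}{2}I_{\mathbf{D}}(A;B)_{\Phi^K},
\end{equation}
where the generalized mutual information on the right is evaluated on the maximally entangled state $\Phi^K_{AB}$ of Schmidt rank $K$ appearing in~\eqref{eq:gamma-priv-state}. So the whole task reduces to computing $I_{\mathbf{D}}(A;B)_{\Phi^K}$ when $\mathbf{D}$ is one of the three Rényi families.

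Next I would specialize $\mathbf{D}$ to the Petz--, sandwiched--, or geometric R\'enyi relative entropy, with $\alpha$ chosen in the range for which the data-processing inequality holds (so that Proposition~\ref{prop:lower-bound} is applicable), namely $\alpha\in(0,2]$ in the Petz and geometric cases and $\alpha\in[1/2,\infty]$ in the sandwiched case. Then $I_{\mathbf{D}}(A;B)_{\Phi^K}$ becomes, respectively, $I_{\alpha}(A;B)_{\Phi^K}$, $\widetilde{I}_{\alpha}(A;B)_{\Phi^K}$, or $\widehat{I}_{\alpha}(A;B)_{\Phi^K}$ as defined in~\eqref{eq:Petz-Renyi-MI}--\eqref{eq:geometric-Renyi-MI}. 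Since $\Phi^K_{AB}$ is a pure bipartite state, Lemmas~\ref{lem:Petz-Renyi-MI-pure}, \ref{lem:sandwiched-Renyi-MI-pure}, and~\ref{lem:geometric-Renyi-MI-pure} apply and give $I_{\alpha}(A;B)_{\Phi^K}=2H_{\gamma}(\Phi^K_A)$ with $\gamma=[2-\alpha]/\alpha$, $\widetilde{I}_{\alpha}(A;B)_{\Phi^K}=2H_{\beta}(\Phi^K_A)$ with $\beta=[2\alpha-1]^{-1}$, and $\widehat{I}_{\alpha}(A;B)_{\Phi^K}=2H_{0}(\Phi^K_A)$. In every case the marginal $\Phi^K_A$ is the maximally mixed state on a $K$-dimensional system, whose R\'enyi entropy of any order (including $0$, $1$, and $\infty$) equals $\log K$; hence $I_{\mathbf{D}}(A;B)_{\Phi^K}=2\log K$ in all three instances.

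Substituting this value into the bound from Proposition~\ref{prop:lower-bound} yields $\mathbf{E}_{\alpha}^u(\gamma_{AA^{\prime}BB^{\prime}})\geq\tfrac12\cdot 2\log K=\log K$, which is the claim. I do not anticipate a genuine obstacle, as this is a bookkeeping corollary; the only point requiring a moment of care is to confirm that the $\alpha$-ranges for which data processing holds (needed to legitimize the use of $I_{\mathbf{D}}$ inside Proposition~\ref{prop:lower-bound}) are contained in the ranges for which the pure-state mutual-information lemmas were established, which is indeed the case, and to note that the R\'enyi entropy of the maximally mixed state is dimension-independent of the order.
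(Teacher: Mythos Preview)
Your proposal is correct and is exactly the argument the paper indicates: it states the corollary as an immediate consequence of Proposition~\ref{prop:lower-bound} combined with Lemmas~\ref{lem:Petz-Renyi-MI-pure}, \ref{lem:sandwiched-Renyi-MI-pure}, and~\ref{lem:geometric-Renyi-MI-pure}, and your write-up simply spells out those steps. The only cosmetic point is that the paper also remarks that the limiting cases $\alpha\to 0$, $\alpha\to 1$, and $\alpha\to\infty$ (min-, relative, and max-relative entropy) are covered; you may wish to note explicitly that these endpoints are handled by continuity.
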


\subsection{Overhead of probabilistic secret key distillation}

\label{sec:key-dist-overhead}

This section considers the overhead of
probabilistic secret key distillation under selective
two-extendible operations (which include 1-LOCC operations as a special case).
We first formally define the overhead of probabilistic secret key distillation:
\begin{definition}
The overhead of distilling $k$ private bits from several independent copies of a bipartite state $\rho_{AB}$ is given
by the minimum number of copies of $\rho_{AB}$ needed, on average, to produce some private state $\gamma_{ABA'B'}^k$
with $\log  K = k$ using 1-LOCC operations:
\begin{align}
K_{\operatorname{ov}}(\rho_{AB}, k) \coloneqq
\inf\left\{\frac{n}{p}: \Lambda(\rho_{AB}^{\ox n})\to \gamma^k_{ABA'B'} \text{ with prob. } p,\
\Lambda\text{~is selective 1-LOCC}\right\}.
\end{align}
We can also define the overhead when selective two-extendible operations are allowed:
\begin{align}
K_{\operatorname{ov},2}(\rho_{AB}, k) \coloneqq
\inf\left\{\frac{n}{p}: \Lambda(\rho_{AB}^{\ox n})\to \gamma^k_{ABA'B'} \text{ with prob. } p,\
\Lambda\text{~is selective two-extendible}\right\}.
\end{align}
Note that it is not necessary to produce a particular private state $\gamma^k_{ABA'B'}$, but rather just some private state $\gamma^k_{ABA'B'}$ having $k$ private bits.
\end{definition}

The following inequality is a trivial consequence of definitions and the fact that a selective 1-LOCC operation is a special kind of selective two-extendible operation:
\begin{equation}
K_{\operatorname{ov}}(\rho_{AB}, k) \geq K_{\operatorname{ov},2}(\rho_{AB}, k).
\end{equation}

It turns out that the relative-entropy-induced unextendible entanglement $E^u(\rho_{AB})$,
given in Definition~\ref{def:rel-unextendible}, provides a lower
bound on the overhead of probabilistic secret key distillation of a 
bipartite state~$\rho_{AB}$ (proof in Appendix~\ref{appx:overhead-bnd-e-max-private}).

\begin{theorem}
\label{thm:overhead-bnd-e-max-private}
For a bipartite state $\rho_{AB}$, the overhead of distilling $k$ private bits from $\rho_{AB}$ is bounded from below by
\begin{align}
\label{eq:lower-bound-private}
K_{\operatorname{ov},2}(\rho_{AB}, k) \ge \frac{k}{E^u(\rho_{AB})},
\end{align}
where the relative-entropy-induced unextendible entanglement is given in Definition~\ref{def:rel-unextendible}.
\end{theorem}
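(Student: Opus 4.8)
The plan is to exploit the three key ingredients already assembled in the paper: (i) the selective two-extendible monotonicity of $E^u$ (the $\alpha = 1$ case of Theorem~\ref{prop:strong-2-ext-mono}, i.e.\ monotonicity under selective 1-LOCC and more generally selective two-extendible operations, which yields the inequality $E^u(\rho_{AB}) \ge \sum_{y:p(y)>0} p(y) E^u(\omega^y_{A'B'})$); (ii) subadditivity of $E^u$ on tensor powers, so that $E^u(\rho_{AB}^{\ox n}) \le n\, E^u(\rho_{AB})$ (the $\alpha=1$ case of Proposition~\ref{prop:subadditivity}); and (iii) the lower bound $E^u(\gamma^k_{ABA'B'}) \ge \frac{1}{2} I_{\mathbf D}(A;B)_{\Phi^K}$ from Proposition~\ref{prop:lower-bound}, which for the relative entropy gives $E^u(\gamma^k_{ABA'B'}) \ge \frac{1}{2} I(A;B)_{\Phi^K} = \log K = k$ (this is exactly Corollary~\ref{cor:lower-bnd-unext-alpha-private} specialized to $\mathbf D = D$).

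First I would fix an arbitrary selective two-extendible protocol $\Lambda$ that, applied to $\rho_{AB}^{\ox n}$, produces a private state $\gamma^k_{ABA'B'}$ (with $k$ private bits) with some probability $p$. Writing the protocol as a selective two-extendible operation $\{\mathcal{E}^y\}_y$, the successful branches are those $y$ for which the (normalized) output $\omega^y$ is a valid $k$-private state; let $S$ be this set and note $\sum_{y\in S} p(y) = p$. Applying the selective monotonicity inequality and then discarding the non-successful terms (which are nonnegative by faithfulness), I obtain
\begin{equation}
E^u(\rho_{AB}^{\ox n}) \;\ge\; \sum_{y:p(y)>0} p(y)\, E^u(\omega^y) \;\ge\; \sum_{y\in S} p(y)\, E^u(\omega^y) \;\ge\; \sum_{y\in S} p(y)\, k \;=\; p\,k,
\end{equation}
where the last inequality is Corollary~\ref{cor:lower-bnd-unext-alpha-private} applied to each successful output private state. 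Combining with subadditivity, $p\,k \le E^u(\rho_{AB}^{\ox n}) \le n\, E^u(\rho_{AB})$, hence $n/p \ge k / E^u(\rho_{AB})$. Taking the infimum over all valid protocols gives $K_{\operatorname{ov},2}(\rho_{AB}, k) \ge k / E^u(\rho_{AB})$, which is \eqref{eq:lower-bound-private}.

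I do not expect a serious obstacle here, since all three ingredients are already proved; the only point requiring a little care is the bookkeeping of the selective operation — namely that the formal objects in Theorem~\ref{prop:strong-2-ext-mono} (the branch probabilities $p(y)$ and normalized post-measurement states $\omega^y$) match the operational description of a probabilistic distillation protocol, and that restricting the sum to the ``success'' branches $S$ only decreases the right-hand side because each $E^u(\omega^y) \ge 0$ by faithfulness. One should also note that if $E^u(\rho_{AB}) = 0$ (i.e.\ $\rho_{AB}$ is two-extendible) the bound reads $K_{\operatorname{ov},2} \ge +\infty$, consistent with the fact that no private bits can be distilled from a two-extendible state. A brief remark to this effect can be included, but the core argument is the short chain of inequalities above.
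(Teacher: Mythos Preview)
Your proposal is correct and follows essentially the same route as the paper's own proof: subadditivity gives $E^u(\rho_{AB}^{\ox n}) \le n\,E^u(\rho_{AB})$, selective two-extendible monotonicity (Theorem~\ref{prop:strong-2-ext-mono} at $\alpha=1$) together with nonnegativity lets you drop the failure branches, and Corollary~\ref{cor:lower-bnd-unext-alpha-private} bounds the successful outputs below by $k$, yielding $n/p \ge k/E^u(\rho_{AB})$. The only cosmetic difference is that the paper coarse-grains the protocol into a binary success/failure flag rather than tracking a general success set $S$, but the argument is the same.
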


In Theorem~\ref{thm:strong-2-ext-mono} and Proposition~\ref{prop:subadditivity},
we proved that the $\alpha$-Petz unextendible entanglement for
$\alpha\in[1,2]$, the $\alpha$-sandwiched unextendible entanglement for $\alpha\in[1,\infty]$, and the
$\alpha$-geometric unextendible entanglement for $\alpha\in[1,2]$ satisfy selective two-extendible monotonicity
and subadditivity, respectively. That is to say, all these entanglement measures can be used as lower
bounds in~\eqref{eq:lower-bound-private}. Since these divergences are monotonically increasing in
$\alpha$~\cite{tomamichel2015quantum} and $D(\omega\Vert\tau)\leq\widehat{D}(\omega\Vert\tau)$~\cite{hiai1991},
$E^{u}(\rho_{AB})$ is the smallest unextendible entanglement measure among these choices and yields the tightest
lower bound. Furthermore, the authors of~\cite{fawzi2019semidefinite} proposed a method to
accurately approximate the quantum relative entropy via semidefinite programming. This enables us to estimate the lower
bound using available semidefinite programming solvers. See also \cite{Fawzi_2018} in this context.

\subsection{Exact secret key distillation}

\label{sec:exact-key-dist}

We can also consider the setting in which the goal is to distill secret key exactly from a bipartite state
by using two-extendible or 1-LOCC operations. Though exact distillation is less realistic than the above
probabilistic scenario, it is still a core part of zero-error quantum information theory~\cite{GAM16}.

The one-shot 1-LOCC exact distillable key of $\rho_{AB}$ is
defined to be the maximum number of private bits achievable via a 1-LOCC channel; that is,
\begin{align}
K_{\operatorname{1-LOCC}}^{(1)}(\rho_{AB})
\coloneqq   \sup \{ k: \Lambda_{AB\to \hat{A}\hat{B}A'B'}(\rho_{AB}) = \gamma^k_{\hat{A}\hat{B}A'B'}, \ \Lambda \in \text{1-LOCC} \},
\end{align}
where $\gamma^k_{\hat{A}\hat{B}A'B'}$ is any private state with $k$ private bits. The 1-LOCC exact distillable key of a state~$\rho_{AB}$ is then defined as the regularization of $K_{\operatorname{1-LOCC}}^{(1)}(\rho_{AB})$:
\begin{align}
    K_{\operatorname{1-LOCC}}(\rho_{AB})
\coloneqq   \liminf_{n\to \infty} \frac1n K_{\operatorname{1-LOCC}}^{(1)}(\rho_{AB}^{\ox n}).
\end{align}
Note that the following inequality holds as a direct consequence of definitions:
\begin{equation}
K_{\operatorname{1-LOCC}}^{(1)}(\rho_{AB}) \leq
K_{\operatorname{1-LOCC}}(\rho_{AB}) .
\end{equation}

The one-shot two-extendible exact distillable key and two-extendible exact distillable key of~$\rho_{AB}$ are defined similarly:
\begin{align}
K_{\operatorname{2-EXT}}^{(1)}(\rho_{AB})
& \coloneqq   \sup \{ k: \Lambda_{AB\to \hat{A}\hat{B}A'B'}(\rho_{AB}) = \gamma^k_{\hat{A}\hat{B}A'B'}, \ \Lambda \in \text{2-EXT} \},\\
    K_{\operatorname{2-EXT}}(\rho_{AB})
 & \coloneqq   \liminf_{n\to \infty} \frac1n K_{\operatorname{2-EXT}}^{(1)}(\rho_{AB}^{\ox n}),
\end{align}
and the following inequality holds as a direct consequence of definitions:
\begin{equation}
K_{\operatorname{2-EXT}}^{(1)}(\rho_{AB}) \leq
K_{\operatorname{2-EXT}}(\rho_{AB}) .
\end{equation}

Immediate consequences of definitions and the fact that 1-LOCC operations are contained in the set of two-extendible operations are the following inequalities:
\begin{equation}
K_{\operatorname{1-LOCC}}^{(1)}(\rho_{AB}) \leq K_{\operatorname{2-EXT}}^{(1)}(\rho_{AB}),
\qquad
K_{\operatorname{1-LOCC}}(\rho_{AB}) \leq K_{\operatorname{2-EXT}}(\rho_{AB}),
\end{equation}

It turns out that the min-unextendible entanglement $E_{\min}^u(\rho_{AB})$ serves as an upper
bound on the two-extendible exact distillable key of a bipartite state~$\rho_{AB}$
(proof in Appendix~\ref{appx:exact-KD}).

\begin{theorem}\label{thm:exact-KD}
For a bipartite state $\rho_{AB}$, its asymptotic exact distillable key under two-extendible 
operations is bounded from above as
\begin{equation}
K_{\operatorname{2-EXT}}(\rho_{AB})  \le E_{\min}^u(\rho_{AB}).
\end{equation}
\end{theorem}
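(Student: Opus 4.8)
The plan is to deduce the bound from three facts about the min-unextendible entanglement $E_{\min}^u$, all of which are available from the earlier developments: it is additive under tensor products (Proposition~\ref{prop:add-min-unext-ent}), it is monotone under two-extendible channels, and it lower-bounds the number of private bits of a private state. For the monotonicity, the key observation is that $E_{\min}^u$ is precisely the $\alpha\to0$ limit of the $\alpha$-Petz unextendible entanglement $E_{\alpha}^u$, and the Petz--R\'enyi relative entropy satisfies the data-processing inequality on $\alpha\in[0,2]$; hence Corollary~\ref{coro:two-extendible-monotonicity} applies at $\alpha=0$ and gives $E_{\min}^u(\rho_{AB})\geq E_{\min}^u(\cE_{AB\to A'B'}(\rho_{AB}))$ for every two-extendible channel $\cE$. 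For the private-state bound, Corollary~\ref{cor:lower-bnd-unext-alpha-private}, together with the remark that it includes the min-relative entropy as a limiting case, yields $E_{\min}^u(\gamma^k_{\hat A\hat BA'B'})\geq k$ for any private state with $k$ private bits, the cut being $\hat AA':\hat BB'$.

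With these in hand, the argument is a short chain. Fix $n\in\mathbb{N}$ and let $\Lambda_{A^nB^n\to\hat A\hat BA'B'}$ be a two-extendible channel achieving $k_n\coloneqq K_{\operatorname{2-EXT}}^{(1)}(\rho_{AB}^{\ox n})$, so that $\Lambda(\rho_{AB}^{\ox n})=\gamma^{k_n}_{\hat A\hat BA'B'}$ is a private state of $k_n$ private bits. Then
\begin{align}
n\,E_{\min}^u(\rho_{AB})
= E_{\min}^u(\rho_{AB}^{\ox n})
\geq E_{\min}^u(\Lambda(\rho_{AB}^{\ox n}))
= E_{\min}^u(\gamma^{k_n}_{\hat A\hat BA'B'})
\geq k_n,
\end{align}
where the first equality is additivity, the inequality is two-extendible monotonicity, and the final inequality is the private-state bound. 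Dividing by $n$ gives $E_{\min}^u(\rho_{AB})\geq \frac1n K_{\operatorname{2-EXT}}^{(1)}(\rho_{AB}^{\ox n})$ for every $n$, and letting $n\to\infty$ yields $E_{\min}^u(\rho_{AB})\geq K_{\operatorname{2-EXT}}(\rho_{AB})$, as claimed.

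I expect the only genuinely delicate point to be the justification of data processing---and hence two-extendible monotonicity of $E_{\min}^u$---at the boundary value $\alpha=0$, and correspondingly that the private-state lower bound of Corollary~\ref{cor:lower-bnd-unext-alpha-private} persists in the $\alpha\to0$ limit; both are covered by statements already recorded in the excerpt (the data-processing range $[0,2]$ for the Petz--R\'enyi family and the remark that the corollary includes the min-relative entropy). A minor bookkeeping remark worth one line is that $\hat A\hat B$ are isomorphic to the relevant systems, so $E_{\min}^u$ is being applied to a bipartite state of exactly the form appearing in \eqref{eq:gamma-priv-state}; this does not affect the chain of inequalities above.
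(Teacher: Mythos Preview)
Your proof is correct and follows essentially the same approach as the paper: both arguments combine the two-extendible monotonicity of $E_{\min}^u$ (Corollary~\ref{coro:two-extendible-monotonicity} at $\alpha=0$), the private-state lower bound (Corollary~\ref{cor:lower-bnd-unext-alpha-private} in the min-relative-entropy limit), and the additivity of $E_{\min}^u$ (Proposition~\ref{prop:add-min-unext-ent}) into the same chain of inequalities. One minor point: you assume the supremum defining $K_{\operatorname{2-EXT}}^{(1)}(\rho_{AB}^{\otimes n})$ is attained by some $\Lambda$; the paper instead bounds an arbitrary feasible $k$ and then optimizes, which avoids this technicality---a trivial fix is to replace ``achieving $k_n$'' by ``achieving some feasible $k\leq k_n$'' and take the supremum at the end.
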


Actually, it is evident from the proof that in Theorem~\ref{thm:exact-KD}, 
both $E_{\max}^u(\rho_{AB})$ and $\widetilde{E}_{1/2}^u(\rho_{AB})$
are valid upper bounds on $K_{\operatorname{2-EXT}}(\rho_{AB})$, since they are also
monotonic and additive. However, since
\begin{align}\label{eq:inequality-chain}
		E_{\min}(\rho_{AB})
\leq \widetilde{E}_{1/2}^u(\rho_{AB})
\leq E_{\max}^u(\rho_{AB}),
\end{align}
the min-unextendible entanglement $E_{\min}^u(\rho_{AB})$ leads to the tightest upper bound.
Note that the inequalities in~\eqref{eq:inequality-chain} follow from the facts that
the fidelity obeys data processing under the channel $\mathcal{M}(\omega) \coloneqq \operatorname{Tr}[\Pi^\rho\omega] |0\rangle\!\langle 0| + \operatorname{Tr}[(\1-\Pi^\rho)\omega] |1\rangle\!\langle 1|$, so that $\operatorname{Tr}[\Pi^\rho \sigma] = F^2(\mathcal{M}(\rho),\mathcal{M}(\sigma)) \geq F^2(\rho,\sigma)$
and $\widetilde{D}_{\alpha}$ is monotonically increasing in $\alpha$~\cite{muller2013quantum,wilde2014strong}.

\subsection{Overhead of probabilistic entanglement distillation}

\label{sec:overhead-prob-ent-dist}

Entanglement distillation aims at obtaining maximally entangled states from less entangled bipartite states shared between two parties via certain free operations. As a central task in quantum information processing, various approaches \cite{Vedral1998,Rains1999,Vidal2002,Rains2001,Horodecki2000a,Christandl2004,Wang2016c,Leditzky2017,Wang2016m,Fang2017,Kaur2018,Regula2019} have been developed to characterize and
approximate the performance of the rates of deterministic entanglement distillation.

Here, we consider the overhead of
probabilistic entanglement
distillation~\cite{Bennett1996c,Cabrillo1999,Pan2001,Nickerson2014a,Campbell2008,Rozpedek2018} under selective
two-extendible operations, similar to what we considered for probabilistic secret key distillation.
We begin by defining the overhead of probabilistic entanglement distillation.

\begin{definition}
The overhead of distilling $m$ ebits from several independent copies of a bipartite state~$\rho_{AB}$ is equal to the
minimum number of copies of $\rho_{AB}$ needed, on average, to produce $m$~copies of
the ebit $\Phi^2$ using selective 1-LOCC
operations:
\begin{equation}
E_{\operatorname{ov}}(\rho_{AB}, m) \coloneqq
\inf\left\{\frac{n}{p}: \Lambda(\rho_{AB}^{\ox n})\to (\Phi^2)^{\ox m} \text{ with prob. } p,\
\Lambda\text{~is selective 1-LOCC}\right\}.
\end{equation}
We can also define the overhead when selective two-extendible operations are allowed:
\begin{equation}
E_{\operatorname{ov},2}(\rho_{AB}, m) \coloneqq
\inf\left\{\frac{n}{p}: \Lambda(\rho_{AB}^{\ox n})\to (\Phi^2)^{\ox m} \text{ with prob. } p,\
\Lambda\text{~is selective two-extendible}\right\}.
\end{equation}
\end{definition}

The following inequality is a trivial consequence of definitions and the fact that a selective 1-LOCC operation is a special kind
of selective two-extendible operation:
\begin{equation}
E_{\operatorname{ov}}(\rho_{AB}, m) \geq E_{\operatorname{ov},2}(\rho_{AB}, m).
\label{eq:ov-1-locc-2-ext-rel}
\end{equation}

Since secret key can be obtained from ebits~\cite{horodecki2005secure} via local operations,
a direct corollary of Theorem~\ref{thm:overhead-bnd-e-max-private}
is that $E^{u}$ is a lower bound on the overhead of probabilistic entanglement distillation.

\begin{corollary}\label{thm:overhead-bnd-e-max}
For a bipartite state $\rho_{AB}$, the overhead of distilling $m$ ebits from $\rho_{AB}$ is bounded from below as
\begin{align}\label{eq:lower-bound}
E_{\operatorname{ov},2}(\rho_{AB}, m) \ge \frac{m}{E^{u}(\rho_{AB})},
\end{align}
where $E^{u}(\rho_{AB})$ is the relative-entropy-induced unextendible entanglement
from Definition~\ref{def:rel-unextendible}.
\end{corollary}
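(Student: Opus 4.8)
The plan is to deduce this corollary directly from Theorem~\ref{thm:overhead-bnd-e-max-private} by recognizing that $m$ copies of the ebit form a private state with $m$ private bits. First I would observe that $(\Phi^2)^{\otimes m}$, after grouping Alice's $m$ qubits and Bob's $m$ qubits and relabeling, is locally (hence two-extendibly) equivalent to the maximally entangled state $\Phi^{2^m}_{AB}$ of Schmidt rank $2^m$. Setting $K = 2^m$, taking the shield systems $A'B'$ to be trivial, and choosing the twisting unitary to be $\1$ in \eqref{eq:gamma-priv-state}, we see that $\Phi^{2^m}_{AB}$ is a $\gamma$-bipartite private state with $\log K = m$ private bits; equivalently, applying local projective measurements $\cM_A \otimes \cM_B$ in the Schmidt basis of $\Phi^{2^m}_{AB}$ yields exactly the form $\frac{1}{2^m}\sum_k \proj{k}_A \otimes \proj{k}_B$ required of a private state of $m$ private bits. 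This is the precise sense in which ``secret key can be obtained from ebits.''

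Next I would compare the feasible sets of the two overhead quantities. Any selective two-extendible operation $\Lambda$ that maps $\rho_{AB}^{\otimes n}$ to $(\Phi^2)^{\otimes m}$ with probability $p$ is, in view of the previous step, also a selective two-extendible operation producing some private state with $m$ private bits with the same success probability $p$. Hence every pair $(n,p)$ admissible in the definition of $E_{\operatorname{ov},2}(\rho_{AB}, m)$ is admissible in the definition of $K_{\operatorname{ov},2}(\rho_{AB}, m)$, so taking the infimum of $n/p$ over the (possibly larger) key-distillation set can only decrease the value, giving $E_{\operatorname{ov},2}(\rho_{AB}, m) \ge K_{\operatorname{ov},2}(\rho_{AB}, m)$.

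Finally, applying Theorem~\ref{thm:overhead-bnd-e-max-private} with $k = m$ yields $K_{\operatorname{ov},2}(\rho_{AB}, m) \ge m / E^u(\rho_{AB})$, and chaining the two inequalities establishes \eqref{eq:lower-bound}. The only point needing care — a mild one — is the identification in the first step: one must confirm that the notion of private state entering the definition of $K_{\operatorname{ov},2}$ genuinely admits $(\Phi^2)^{\otimes m}$ as a private state with $m$ private bits (a maximally entangled state with trivial shield is the degenerate case of \eqref{eq:gamma-priv-state}), and that the two overhead quantities are indeed phrased over the same class of selective two-extendible operations, which they are. Everything else is a routine set-inclusion argument, so no genuine obstacle arises.
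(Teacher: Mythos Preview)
Your proposal is correct and follows essentially the same approach as the paper: the paper simply states that since secret key can be obtained from ebits via local operations, the corollary is immediate from Theorem~\ref{thm:overhead-bnd-e-max-private}, and you have spelled out the details of this one-line argument carefully. In particular, your identification of $(\Phi^2)^{\otimes m}$ as a private state with trivial shield and the resulting set-inclusion inequality $E_{\operatorname{ov},2}(\rho_{AB},m)\ge K_{\operatorname{ov},2}(\rho_{AB},m)$ is exactly the mechanism the paper is invoking.
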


\subsection{Exact entanglement distillation}

The one-shot 1-LOCC exact distillable entanglement of $\rho_{AB}$ is
defined to be the maximum number of ebits achievable via a 1-LOCC channel; that is,
\begin{align}
E_{\operatorname{1-LOCC}}^{(1)}(\rho_{AB})
\coloneqq   \sup \{ \log  d: \Lambda_{AB\to \hat{A}\hat{B}}(\rho_{AB}) = \Phi^d_{\hat{A}\hat{B}}, \ \Lambda \in \text{1-LOCC} \},
\end{align}
where $\Phi^d_{\hat{A}\hat{B}}$ is a maximally entangled state of Schmidt rank $d$.
The 1-LOCC exact distillable entanglement of a state
$\rho_{AB}$ is then defined as the regularization of $E_{\operatorname{1-LOCC}}^{(1)}(\rho_{AB})$:
\begin{align}
    E_{\operatorname{1-LOCC}}(\rho_{AB})
\coloneqq   \liminf_{n\to \infty} \frac1n E_{\operatorname{1-LOCC}}^{(1)}(\rho_{AB}^{\ox n}).
\end{align}
Note that the following inequality is a direct consequence of the definitions:
\begin{equation}
E_{\operatorname{1-LOCC}}^{(1)}(\rho_{AB}) \leq
E_{\operatorname{1-LOCC}}(\rho_{AB}).
\end{equation}

The one-shot two-extendible exact distillable entanglement and two-extendible exact distillable entanglement of $\rho_{AB}$ are defined similarly:
\begin{align}
E_{\operatorname{2-EXT}}^{(1)}(\rho_{AB})
& \coloneqq   \sup \{ \log  d: \Lambda_{AB\to \hat{A}\hat{B}}(\rho_{AB}) = \Phi^d_{\hat{A}\hat{B}}, \ \Lambda \in \text{2-EXT} \},\\
    E_{\operatorname{2-EXT}}(\rho_{AB})
 & \coloneqq   \liminf_{n\to \infty} \frac1n E_{\operatorname{2-EXT}}^{(1)}(\rho_{AB}^{\ox n}),
\end{align}
and the following inequality is a direct consequence of definitions:
\begin{equation}
E_{\operatorname{2-EXT}}^{(1)}(\rho_{AB}) \leq
E_{\operatorname{2-EXT}}(\rho_{AB}).
\end{equation}

Immediate consequences of definitions and the fact that 1-LOCC operations are contained in the set of two-extendible operations are the following inequalities:
\begin{equation}
E_{\operatorname{1-LOCC}}^{(1)}(\rho_{AB}) \leq E_{\operatorname{2-EXT}}^{(1)}(\rho_{AB}),
\qquad
E_{\operatorname{1-LOCC}}(\rho_{AB}) \leq E_{\operatorname{2-EXT}}(\rho_{AB}),
\end{equation}

As a direct corollary of Theorem~\ref{thm:exact-KD} and the relation between private states and ebits mentioned previously (just below \eqref{eq:ov-1-locc-2-ext-rel}),
the min-unextendible entanglement $E_{\min}^u$ serves as an upper
bound on the exact distillable entanglement via two-extendible operations.

\begin{corollary}\label{thm:exact-ED}
For a bipartite state $\rho_{AB}$, its asymptotic exact distillable entanglement under two-extendible operations is bounded from above by its min-unextendible entanglement:
\begin{equation}
E_{\operatorname{2-EXT}}(\rho_{AB}) \le E_{\min}^u(\rho_{AB}).
\end{equation}
\end{corollary}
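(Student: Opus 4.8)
The plan is to obtain this as an immediate consequence of Theorem~\ref{thm:exact-KD}, by exploiting the elementary fact that a maximally entangled state is itself a private state. First I would recall that Theorem~\ref{thm:exact-KD} already gives $K_{\operatorname{2-EXT}}(\rho_{AB}) \le E_{\min}^u(\rho_{AB})$, so it is enough to establish that exact entanglement distillation under two-extendible operations is no more powerful than exact secret key distillation under two-extendible operations, that is, $E_{\operatorname{2-EXT}}(\rho_{AB}) \le K_{\operatorname{2-EXT}}(\rho_{AB})$.

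The key step is the observation that a maximally entangled state $\Phi^d_{\hat{A}\hat{B}}$ is a private state of $\log d$ private bits with trivial (one-dimensional) shield systems $A'B'$: setting the twisting unitary to $\1$ and $\sigma_{A'B'}$ to a scalar in~\eqref{eq:gamma-priv-state} reproduces $\Phi^d_{\hat{A}\hat{B}}$, and because $\Phi^d_{\hat{A}\hat{B}}$ is pure, every purification of it factorizes as $\Phi^d_{\hat{A}\hat{B}}\ox\sigma_E$, so a Schmidt-basis measurement on the two halves returns a uniformly random key that is in tensor product with the purifying system, exactly as demanded by the private-state definition. Hence any two-extendible channel $\Lambda_{AB\to\hat{A}\hat{B}}$ achieving $\Lambda_{AB\to\hat{A}\hat{B}}(\rho_{AB}) = \Phi^d_{\hat{A}\hat{B}}$ is simultaneously a two-extendible channel producing a private state with $\log d$ private bits. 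Taking suprema over protocols yields $E_{\operatorname{2-EXT}}^{(1)}(\rho_{AB}) \le K_{\operatorname{2-EXT}}^{(1)}(\rho_{AB})$, and applying the same reasoning to $\rho_{AB}^{\ox n}$ followed by the regularization limit $n\to\infty$ gives $E_{\operatorname{2-EXT}}(\rho_{AB}) \le K_{\operatorname{2-EXT}}(\rho_{AB})$.

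Chaining the two inequalities then produces $E_{\operatorname{2-EXT}}(\rho_{AB}) \le K_{\operatorname{2-EXT}}(\rho_{AB}) \le E_{\min}^u(\rho_{AB})$, which is the assertion. I do not anticipate any genuine difficulty here: the only point that warrants a moment's care is verifying that an ebit really satisfies the private-state conditions with trivial shields, which is standard and traces back to~\cite{horodecki2005secure}; everything else is bookkeeping with the definitions of the one-shot quantities and passing to the regularized limit.
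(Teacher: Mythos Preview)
Your proposal is correct and matches the paper's own approach: the corollary is stated as a direct consequence of Theorem~\ref{thm:exact-KD} together with the observation (cited from~\cite{horodecki2005secure}) that secret key can be obtained from ebits, i.e., that a maximally entangled state is a private state. Your write-up simply makes this reduction explicit at the one-shot level and then regularizes, which is exactly the intended argument.
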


\subsection{Examples}\label{sec:examples}

In this section, we apply our bounds on the overhead of probabilistic entanglement or secret key distillation to three classes of states: isotropic states, S~states, and erased states. We compare our lower bounds and other known estimations of the overhead to upper bounds derived from known distillation protocols. In particular, we show that our lower bound on the overhead of distillation is tight for erased states.

To begin with, recall the relative entropy of entanglement \cite{Vedral1998}:
\begin{equation}
E_R(\rho_{AB}) \coloneqq \min_{\sigma_{AB} \in \operatorname{SEP}} D(\rho_{AB}\Vert \sigma_{AB}),
\end{equation}
where SEP denotes the set of separable states. It is known that the relative entropy of entanglement is monotone under selective LOCC \cite{Vedral1998}. Thus, the relative entropy of entanglement $E_R(\rho_{AB})$ can be used to estimate the overhead of distillation under selective LOCC via an approach similar
to that given in the proof of Corollary~\ref{thm:overhead-bnd-e-max}.

\textbf{Isotropic states:} Let us first investigate the class of isotropic states $\rho_r$, defined as~\cite{PhysRevA.59.4206}
\begin{equation}
\rho_{r}\coloneqq r \cdot \Phi^d + (1-r) \frac{\1 -\Phi^d}{d^{2}-1},
\end{equation}
where $r \in[0,1]$, so that $\rho_r$ is a convex mixture of a maximally entangled state
$\Phi^d$ of Schmidt rank~$d$ and its orthogonal complement.
Numerous works have been carried out to study the distillation rate of isotropic states
under various scenarios~\cite{Fang2017,wang2018semidefinite,Leditzky2017,Rozpedek2018}.
Here we investigate probabilistic entanglement distillation under 1-LOCC operations and show the advantage of our method in estimating the overhead.
We consider the case $d=2$ for simplicity. Figure~\ref{fig:isotropic} plots the unextendible
entanglement and relative entropy of entanglement for the overhead of probabilistic entanglement or
secret key distillation for this set of states.\footnote{All Matlab codes used to generate these plots and calculate various unextendible entanglement measures are available with the arXiv posting of this paper as arXiv ancillary files.} Note that the relative entropy of entanglement for qubit-qubit isotropic states ($d=2$) was calculated in \cite{VPRK97}.

\begin{figure}
    \centering
    \includegraphics[width=7.5cm]{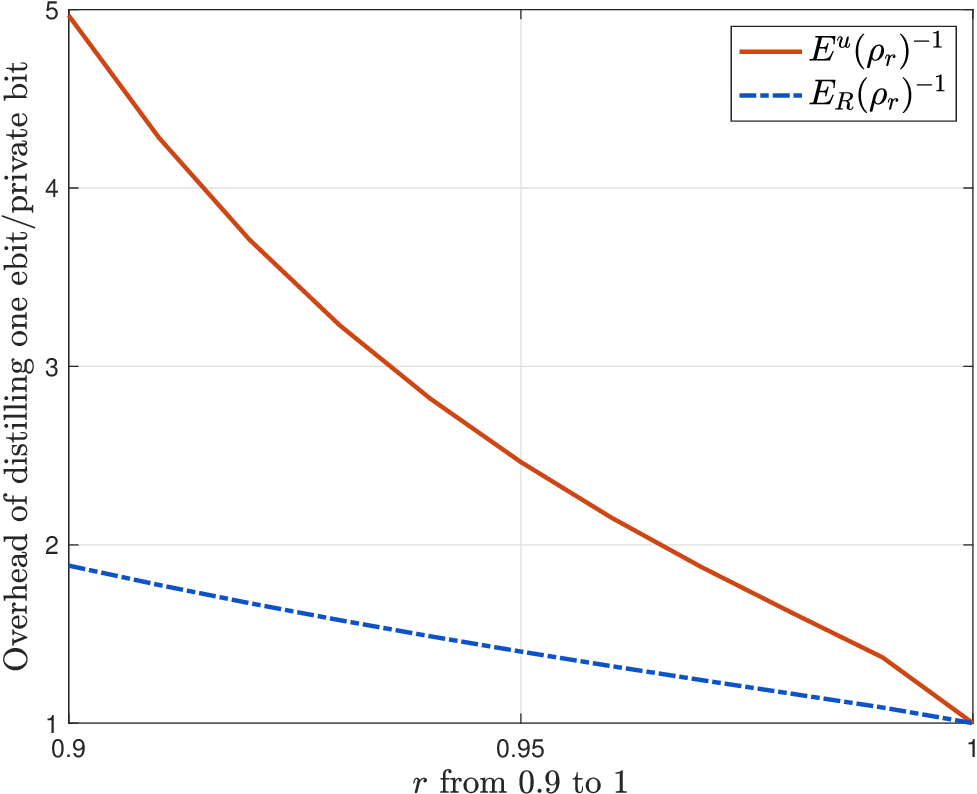}
    \caption{Lower bounds on the overhead of distilling one ebit or one private bit from isotropic states. For the overhead of probabilistic distillation of entanglement or secret key under 1-LOCC operations, our lower bound (solid line) outperforms the previous lower bound (dashed line) based on relative entropy of entanglement.}
    \label{fig:isotropic}
\end{figure}

\textbf{S~states:} An S~state is a mixture of the Bell state and non-orthogonal product noise, whose distillation protocols were studied in \cite{Rozpedek2018,Zhao2021}. Here, we define it as
\begin{equation}
S_p = p \Phi^2 + (1-p)\ketbra{00}{00}.
\end{equation}
Figure~\ref{fig:S state} plots the unextendible
entanglement and relative entropy of entanglement for the overhead of probabilistic entanglement or
secret key distillation for this set of states. The result shows that our result establishes a better lower bound on the overhead of probabilistic distillation under 1-LOCC operations.

\begin{figure}
    \centering
    \includegraphics[width=7.5cm]{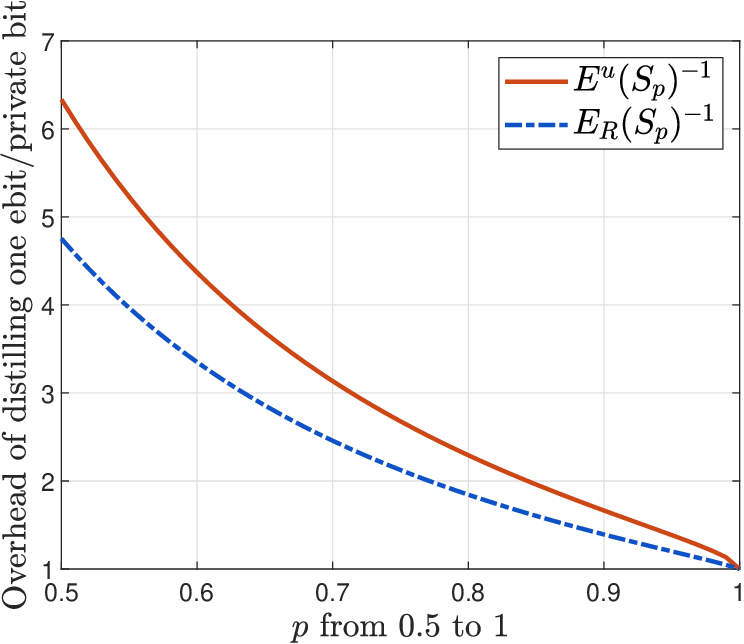}
    \caption{Bounds on the overhead of distilling one ebit or one private bit from S~states.  For the overhead of probabilistic distillation of entanglement or secret key under 1-LOCC operations, our lower bound (solid line) outperforms the previous lower bound (dashed line) based on relative entropy of entanglement.}
    \label{fig:S state}
\end{figure}

\textbf{Erased states:} We also consider the class of erased states, which are the Choi states of quantum
erasure channels \cite{GBP97}:
\begin{align}
    \rho_{A'B}^\varepsilon = (1 - \varepsilon)\Phi^2_{AB} + \varepsilon\proj{e}_{A'}\ox\pi_B,
\end{align}
where $\varepsilon\in[0,1]$, $\Phi^2_{AB}$ is an ebit, and $\ket{e}$ is some state that is
orthogonal to $\pi_A$. For simplicity, we choose $d_A = d_B = 2$ (qubit system), and $d_{A'} = d_A + 1 = 3$.
This state can be obtained as follows.
Alice and Bob share a two-qubit maximally entangled state $\Phi_{AB}$ and
Alice transmits her local copy through an erasure channel, so that with probability $1-\varepsilon$ it is unaffected
and with probability $\varepsilon$ it is ``erased'' (replaced with the erasure flag state $e$) .

Interestingly, there is a simple formula for the relative-entropy-induced 
unextendible entanglement of the erased state (proof in Appendix~\ref{appx:unext-erased-state}).

\begin{proposition}\label{prop:unext-erased-state}
The unextendible entanglement of the erased state $\rho_{A'B}^\varepsilon$ is $E^{u}(\rho_{A'B}^\varepsilon) = 1 - \varepsilon$.
\end{proposition}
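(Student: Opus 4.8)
The plan is to prove the matching bounds $1-\varepsilon\le E^u(\rho^\varepsilon_{A'B})\le 1-\varepsilon$ (I assume $\varepsilon\in(0,1)$, the endpoints being immediate since $\rho^{0}_{A'B}=\Phi^2_{AB}$ and $\rho^{1}_{A'B}$ is a product state). For the upper bound, first note that for \emph{any} bipartite state $\rho_{AB}$ the product of marginals $\rho_A\ox\rho_B$ lies in $\cF_{\rho_{AB}}$, because $\rho_{AB}\ox\rho_{B'}$ (with $B'\cong B$ carrying $\rho_B$) is an extension of $\rho_{AB}$ whose complementary marginal is $\rho_A\ox\rho_B$; hence $E^u(\rho_{AB})\le\tfrac12 D(\rho_{AB}\Vert\rho_A\ox\rho_B)=\tfrac12 I(A;B)_{\rho}$. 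It then remains to compute the quantum mutual information of $\rho^\varepsilon_{A'B}$. Here $(1-\varepsilon)\Phi^2_{AB}$ and $\varepsilon\proj{e}_{A'}\ox\pi_B$ have orthogonal supports --- distinguished by whether the $A'$ part lies in the qubit subspace $A=\operatorname{span}\{\ket{0}_{A'},\ket{1}_{A'}\}$ or equals $\ket{e}$ --- so the nonzero eigenvalues of $\rho^\varepsilon_{A'B}$ are $\{1-\varepsilon,\tfrac\varepsilon2,\tfrac\varepsilon2\}$ and those of $\rho^\varepsilon_{A'}$ are $\{\tfrac{1-\varepsilon}2,\tfrac{1-\varepsilon}2,\varepsilon\}$, while $\rho^\varepsilon_{B}=\pi_B$. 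A direct computation gives $H(A'B)_{\rho^\varepsilon}=h_2(\varepsilon)+\varepsilon$, $H(A')_{\rho^\varepsilon}=h_2(\varepsilon)+(1-\varepsilon)$, and $H(B)_{\rho^\varepsilon}=1$, with $h_2$ the binary entropy, whence $I(A';B)_{\rho^\varepsilon}=2(1-\varepsilon)$ and therefore $E^u(\rho^\varepsilon_{A'B})\le 1-\varepsilon$.

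For the lower bound, fix an arbitrary free state $\sigma_{A'B''}\in\cF_{\rho^\varepsilon_{A'B}}$ (with $B''\cong B$) together with a witnessing extension $\rho_{A'BB''}$, so that $\tr_{B''}\rho_{A'BB''}=\rho^\varepsilon_{A'B}$ and $\tr_{B}\rho_{A'BB''}=\sigma_{A'B''}$. I would apply the data-processing inequality for the quantum relative entropy under the pinching channel $\cP_{A'}$ that dephases $A'$ between the two orthogonal subspaces spanned by $A$ and by $\ket{e}$. Since $\rho^\varepsilon_{A'B}$ is already block-diagonal with respect to this decomposition, $\cP_{A'}$ acts as the identity on it, so $D(\rho^\varepsilon_{A'B}\Vert\sigma_{A'B''})\ge D(\rho^\varepsilon_{A'B}\Vert\cP_{A'}(\sigma_{A'B''}))$. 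The crucial observation is that the ``$A$-block'' $\omega_{ABB''}\coloneqq(P_A\ox\1_{BB''})\,\rho_{A'BB''}\,(P_A\ox\1_{BB''})$ of the extension satisfies $\tr_{B''}\omega_{ABB''}=(P_A\ox\1_B)\,\rho^\varepsilon_{A'B}\,(P_A\ox\1_B)=(1-\varepsilon)\Phi^2_{AB}$, so its normalization is an extension of the \emph{pure} state $\Phi^2_{AB}$ and is hence of product form: $\omega_{ABB''}=(1-\varepsilon)\,\Phi^2_{AB}\ox\mu_{B''}$ for some state $\mu_{B''}$. Tracing out $B$ then gives $(P_A\ox\1_{B''})\,\sigma_{A'B''}\,(P_A\ox\1_{B''})=(1-\varepsilon)\,\pi_A\ox\mu_{B''}$; in particular the $A$-block of $\cP_{A'}(\sigma_{A'B''})$ carries weight exactly $1-\varepsilon$, hence the $\ket{e}$-block carries weight $\varepsilon$, matching the block weights of $\rho^\varepsilon_{A'B}$.

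To conclude, expand the relative entropy of the two block-diagonal states via the direct-sum identity $D\!\left(\bigoplus_j p_j\rho_j\,\big\Vert\,\bigoplus_j q_j\sigma_j\right)=\sum_j p_j\log(p_j/q_j)+\sum_j p_j D(\rho_j\Vert\sigma_j)$. With the blocks indexed by $j\in\{A,e\}$ and matching weights, the classical term vanishes and one obtains $D(\rho^\varepsilon_{A'B}\Vert\cP_{A'}(\sigma_{A'B''}))=(1-\varepsilon)\,D(\Phi^2_{AB}\Vert\pi_A\ox\mu_{B''})+\varepsilon\,D(\pi_B\Vert\nu_{B''})$, where $\nu_{B''}$ is the normalized $\ket{e}$-block of $\sigma_{A'B''}$ on $B''$. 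The second term is nonnegative, and expanding $\log(\pi_A\ox\mu_{B''})=\log\pi_A\ox\1+\1\ox\log\mu_{B''}$ yields $D(\Phi^2_{AB}\Vert\pi_A\ox\mu_{B''})=\log 2+H(\pi_B)+D(\pi_B\Vert\mu_{B''})\ge 2$. Hence $D(\rho^\varepsilon_{A'B}\Vert\sigma_{A'B''})\ge 2(1-\varepsilon)$ for every free $\sigma_{A'B''}$, so $E^u(\rho^\varepsilon_{A'B})\ge 1-\varepsilon$, and combined with the upper bound this gives $E^u(\rho^\varepsilon_{A'B})=1-\varepsilon$. The mutual-information computation is routine; I expect the main obstacle to be the lower bound, specifically recognizing that pinching $A'$ into its two natural sectors is the right data-processing step and then exploiting the rigidity of extensions of the maximally entangled state $\Phi^2_{AB}$ to simultaneously pin down the block weights of every free state and force its $A$-sector to be $\pi_A\ox\mu_{B''}$ --- which is exactly what makes the estimate $D(\Phi^2_{AB}\Vert\pi_A\ox\mu_{B''})\ge 2$ available.
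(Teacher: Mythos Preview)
Your proof is correct and shares the paper's core mechanism: exploit the direct-sum structure of $\rho^\varepsilon_{A'B}$ together with the rigidity of extensions of the pure state $\Phi^2_{AB}$ to force the $A$-block of any free state to be $(1-\varepsilon)\,\pi_A\otimes\mu_{B''}$, and then evaluate the block-wise relative entropy. Where you differ is in packaging. The paper asserts that \emph{every} extension $\rho_{A'BB'}$ is block-diagonal and then optimizes directly, obtaining both bounds at once. You instead split into matching bounds: the upper bound via the pleasant general observation $E^u(\rho)\le\tfrac12 I(A;B)_\rho$ (choosing the product of marginals as the free state), and the lower bound via pinching on $A'$ plus data processing before invoking the block identity. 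Your pinching step is in fact the more careful treatment: the marginal constraint $\tr_{B'}\rho_{A'BB'}=\rho^\varepsilon_{A'B}$ only forces the $B'$-partial trace of the off-diagonal block to vanish, not the block itself, so non-block-diagonal extensions do exist, and it is precisely your data-processing argument that justifies why they can be discarded in the infimum.
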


\bigskip
Corollary~\ref{thm:overhead-bnd-e-max} together with Proposition~\ref{prop:unext-erased-state}
shows that $1/E^u(\rho^\varepsilon_{A'B})= 1/(1-\varepsilon)$ is a
lower bound on the overhead of probabilistically distilling one ebit from $\rho_{A'B}^\varepsilon$.
Indeed, we can show the tightness of the lower bound by considering the following
one-way LOCC protocol that achieves this bound. 
Given $\rho_{A'B}^\varepsilon$, Alice detects whether her local
system is erased or not by performing a binary measurement $\{\1_A, \proj{e}\}$. 
With probability $1-\varepsilon$, she
finds that the system is not erased. Then she sends this information to Bob, and the protocol 
finishes with one ebit shared between them.
As such, we conclude the following.

\begin{theorem}
For the erased state $\rho^\varepsilon_{A'B}$, we have
\begin{align}
E_{\operatorname{ov}}(\rho^\varepsilon_{A'B}, 1) = E_{\operatorname{ov},2}(\rho^\varepsilon_{A'B}, 1) = \frac{1}{E^u(\rho^\varepsilon_{A'B})} = \frac{1}{1-\varepsilon}.
\end{align}
\end{theorem}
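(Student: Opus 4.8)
The plan is to establish the chain of equalities by a sandwich argument. The equality $\frac{1}{E^u(\rho^\varepsilon_{A'B})} = \frac{1}{1-\varepsilon}$ is immediate from Proposition~\ref{prop:unext-erased-state}, so it remains to pin down the two overheads. First I would record the lower bound: applying Corollary~\ref{thm:overhead-bnd-e-max} with $m=1$ gives $E_{\operatorname{ov},2}(\rho^\varepsilon_{A'B},1) \ge \frac{1}{E^u(\rho^\varepsilon_{A'B})} = \frac{1}{1-\varepsilon}$, and the trivial inequality \eqref{eq:ov-1-locc-2-ext-rel} yields $E_{\operatorname{ov}}(\rho^\varepsilon_{A'B},1) \ge E_{\operatorname{ov},2}(\rho^\varepsilon_{A'B},1)$. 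Thus both overheads are at least $\frac{1}{1-\varepsilon}$.

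Next I would prove the matching upper bound $E_{\operatorname{ov}}(\rho^\varepsilon_{A'B},1) \le \frac{1}{1-\varepsilon}$ by exhibiting an explicit one-shot probabilistic 1-LOCC protocol operating on a single copy. Using the direct-sum structure $\rho^\varepsilon_{A'B} = (1-\varepsilon)\Phi^2_{AB} \oplus \varepsilon\pi_B$, where $\ket{e}_{A'}$ is orthogonal to $\supp(\pi_A)$, Alice performs the binary projective measurement with elements $\Pi_A \coloneqq \1_{A'} - \proj{e}_{A'}$ and $\proj{e}_{A'}$, mapping the two-dimensional subspace $\supp(\pi_A)$ onto a qubit $\hat A$, and broadcasts the outcome as a flag. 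Since $\Pi_A\Phi^2_{AB}\Pi_A = \Phi^2_{AB}$ and $\Pi_A\proj{e}_{A'}\Pi_A = 0$, the first outcome occurs with probability $p = \tr[(\Pi_A \ox \1_B)\rho^\varepsilon_{A'B}] = 1-\varepsilon$ and leaves the normalized state equal to $\Phi^2_{\hat A B}$. Hence this protocol distills one ebit from $n=1$ copy with success probability $p = 1-\varepsilon$, giving $E_{\operatorname{ov}}(\rho^\varepsilon_{A'B},1) \le n/p = \frac{1}{1-\varepsilon}$. Combining with the lower bound and \eqref{eq:ov-1-locc-2-ext-rel} gives $\frac{1}{1-\varepsilon} \le E_{\operatorname{ov},2}(\rho^\varepsilon_{A'B},1) \le E_{\operatorname{ov}}(\rho^\varepsilon_{A'B},1) \le \frac{1}{1-\varepsilon}$, so all three quantities equal $\frac{1}{1-\varepsilon}$, which in turn equals $1/E^u(\rho^\varepsilon_{A'B})$.

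The main point to check carefully is that the above instrument genuinely qualifies as a selective 1-LOCC operation in the sense of \eqref{eq:selective-1W-LOCC}: one takes $|\cX|=1$, $\cY$ the binary outcome set, the completely positive maps $\cF_{A'\to\hat A}$ to be $\Pi_A(\cdot)\Pi_A$ and $\proj{e}(\cdot)\proj{e}$ (whose sum is trace-preserving), and $\cG_{B\to B}=\id_B$; one must also confirm that the success-branch output is literally a Schmidt-rank-$2$ maximally entangled state on a qubit--qubit system, so that the ``$\to(\Phi^2)^{\ox 1}$'' requirement in the definition of $E_{\operatorname{ov}}$ is met exactly rather than approximately. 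Both verifications are routine, so I do not anticipate a genuine obstacle; the only real content of the statement is the coincidence of the operational overhead with the simple closed form $1/E^u$, which is already supplied by Corollary~\ref{thm:overhead-bnd-e-max} together with Proposition~\ref{prop:unext-erased-state} on one side and by the trivial erasure-detection protocol on the other.
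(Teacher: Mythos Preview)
Your proposal is correct and follows essentially the same approach as the paper: the lower bound comes from Corollary~\ref{thm:overhead-bnd-e-max} combined with Proposition~\ref{prop:unext-erased-state}, and the matching upper bound comes from the obvious erasure-detection protocol in which Alice projectively tests for $\proj{e}$ and succeeds with probability $1-\varepsilon$. Your write-up is in fact somewhat more careful than the paper's, as you explicitly invoke \eqref{eq:ov-1-locc-2-ext-rel} to close the sandwich for both $E_{\operatorname{ov}}$ and $E_{\operatorname{ov},2}$ and verify that the instrument fits the form \eqref{eq:selective-1W-LOCC}.
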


\bigskip
Therefore, our estimation on the overhead of distilling ebits from erased states under selective one-way LOCC is
optimal in the sense that the upper bound on the  overhead of the above protocol matches our lower bound from
Corollary~\ref{thm:overhead-bnd-e-max}, thus
characterizing the ability of probabilistic distillation for erased states.
Note that this result can be generalized to multiple copies of the erased state.

We further consider the overhead of distillation under LOCC operations for the erased states 
in the following proposition (proof in Appendix~\ref{appx:erasure-state-bounds}), 
indicating an interesting fact that the optimal protocol for
distilling one ebit from erased states under one-way LOCC operations matches the lower bound
$1/E_R(\rho_{A'B}^\varepsilon)\equiv1/(1-\varepsilon)$ on the overhead under LOCC operations.
It reveals that one-way LOCC operations have the
same power and performance as LOCC operations in probabilistically distilling ebits from erased states.

\begin{proposition}
\label{prop:erasure-state-bounds}
The relative entropy of entanglement of the erased state $\rho_{A'B}^\varepsilon$ is $E_R(\rho_{A'B}^\varepsilon) = 1 - \varepsilon$, which implies that the overhead of distillation under selective LOCC for $\rho_{A'B}^\varepsilon$ is given by $(1-\varepsilon)^{-1}$.
\end{proposition}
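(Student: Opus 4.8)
The plan is to first establish $E_R(\rho_{A'B}^{\varepsilon})=1-\varepsilon$ via matching upper and lower bounds that exploit the same direct-sum structure used in Proposition~\ref{prop:unext-erased-state}, and then to deduce the overhead statement from the generic overhead argument behind Theorem~\ref{thm:overhead-bnd-e-max-private} and Corollary~\ref{thm:overhead-bnd-e-max}, together with the explicit one-way protocol already exhibited for the erased state.

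For the upper bound, I would choose the trial separable state $\sigma_{A'B}\coloneqq (1-\varepsilon)\,\sigma^{\mathrm{sep}}_{AB}\oplus\varepsilon\,\proj{e}_{A'}\ox\pi_B$ with $\sigma^{\mathrm{sep}}_{AB}\coloneqq \tfrac12(\proj{00}_{AB}+\proj{11}_{AB})$, where $\sigma^{\mathrm{sep}}_{AB}$ is the standard closest separable state to the ebit, satisfying $D(\Phi^2_{AB}\Vert\sigma^{\mathrm{sep}}_{AB})=1=E_R(\Phi^2_{AB})$. Writing $\cH_{A'}=\cH_A\oplus\mathbb{C}\ket{e}$, the state $\sigma_{A'B}$ is block diagonal in this decomposition of $A'$ with each block separable across $A'{:}B$, hence separable. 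Applying the block-diagonal identity for quantum relative entropy (the one used in the proof of Theorem~\ref{prop:strong-2-ext-mono}) with classical weights $(1-\varepsilon,\varepsilon)$ on \emph{both} arguments gives $D(\rho^{\varepsilon}_{A'B}\Vert\sigma_{A'B})=(1-\varepsilon)D(\Phi^2_{AB}\Vert\sigma^{\mathrm{sep}}_{AB})+\varepsilon D(\pi_B\Vert\pi_B)=1-\varepsilon$, so $E_R(\rho^{\varepsilon}_{A'B})\le 1-\varepsilon$.

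For the lower bound, let $\sigma_{A'B}$ be an arbitrary separable state and let $\mathcal P_{A'}$ denote the local pinching channel on $A'$ with projectors $P_A$ (onto $\cH_A$) and $\proj{e}_{A'}$. Since $\rho^{\varepsilon}_{A'B}$ is fixed by $\mathcal P_{A'}$, data processing gives $D(\rho^{\varepsilon}_{A'B}\Vert\sigma_{A'B})\ge D(\rho^{\varepsilon}_{A'B}\Vert\mathcal P_{A'}(\sigma_{A'B}))$; moreover $\mathcal P_{A'}$ is local, so $\mathcal P_{A'}(\sigma_{A'B})$ remains separable and has the form $t_0\,\widetilde\sigma^{(0)}_{AB}\oplus t_1\,\proj{e}_{A'}\ox\widetilde\sigma^{(1)}_{B}$ with $\widetilde\sigma^{(0)}_{AB}$ separable across $A{:}B$. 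The block-diagonal identity then yields $D(\rho^{\varepsilon}_{A'B}\Vert\mathcal P_{A'}(\sigma_{A'B}))\ge (1-\varepsilon)D(\Phi^2_{AB}\Vert\widetilde\sigma^{(0)}_{AB})\ge(1-\varepsilon)E_R(\Phi^2_{AB})=1-\varepsilon$, where the degenerate cases ($t_0=0$ or support mismatch) only make the divergence $+\infty$. Taking the infimum over $\sigma_{A'B}$ gives $E_R(\rho^{\varepsilon}_{A'B})\ge 1-\varepsilon$, hence $E_R(\rho^{\varepsilon}_{A'B})=1-\varepsilon$.

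For the overhead claim, I would invoke that $E_R$ is monotone under selective LOCC~\cite{Vedral1998}, is normalized in the sense that $E_R\big((\Phi^2)^{\ox m}\big)=E_R(\Phi^{2^m})=m$, and is subadditive under tensor products; then the argument in the proof of Theorem~\ref{thm:overhead-bnd-e-max-private} applies verbatim with $\mathbf E^u$ replaced by $E_R$, giving $n/p\ge m/E_R(\rho^{\varepsilon}_{A'B})$ for every selective-LOCC protocol distilling $m$ ebits, i.e.\ a lower bound of $(1-\varepsilon)^{-1}$ on the overhead of distilling one ebit. The matching upper bound is immediate, since selective $1$-LOCC operations are a special case of selective LOCC, so the selective-LOCC overhead is at most the value $(1-\varepsilon)^{-1}$ already obtained in the preceding theorem for selective $1$-LOCC (Alice's measurement $\{P_A,\proj{e}\}$ succeeds with probability $1-\varepsilon$, leaving $\Phi^2_{AB}$). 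The only delicate points are verifying that pinching $A'$ preserves separability and cleanly handling the boundary cases of the block-diagonal relative-entropy identity; the identity itself, subadditivity and normalization of $E_R$, and the value $E_R(\Phi^2_{AB})=1$ with its optimal separable state are all standard and can be quoted.
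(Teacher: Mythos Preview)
Your proof is correct. The upper bound construction (the separable trial state $\tfrac{1-\varepsilon}{2}(\proj{00}+\proj{11})\oplus\varepsilon\proj{e}\ox\pi_B$ and the block-diagonal relative-entropy computation) matches the paper exactly.

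The lower bound on $E_R$ is where you diverge from the paper. You argue directly: pinch $A'$ locally, note that this fixes $\rho^\varepsilon_{A'B}$ and preserves separability, and then the block-diagonal identity reduces everything to $(1-\varepsilon)E_R(\Phi^2_{AB})=1-\varepsilon$. The paper instead deduces $E_R(\rho^\varepsilon_{A'B})\ge 1-\varepsilon$ \emph{from the operational result already established in the preceding theorem}: since $1/E_R$ lower-bounds the selective-LOCC overhead, and the selective-LOCC overhead is at most the selective-$1$-LOCC overhead $=(1-\varepsilon)^{-1}$, one gets $1/(1-\varepsilon)\ge 1/E_R(\rho^\varepsilon_{A'B})$. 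Your route is self-contained and independent of the earlier overhead theorem, at the cost of a short pinching/data-processing argument; the paper's route is a one-line appeal to the preceding operational statement, but is not self-contained as a computation of $E_R$. The overhead conclusion is then argued the same way in both.
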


\section{Concluding remarks}

Our work introduces a family of entanglement measures called unextendible entanglement to quantify and investigate the unextendibility of entanglement. The crucial technical contribution was intuitively motivated by the fact that the more entangled a bipartite state is, the less each of its constituent systems can be entangled with a third party. A key distinction to previous works is that our proposed measures restrict the free set to be state dependent. These entanglement measures have desirable properties, including monotonicity under selective two-extendible operations, faithfulness, and normalization. The unextendible entanglement in our work has also found direct operational applications in evaluating the overhead and rate of entanglement or secret key distillation. As a notable application example, we characterized the optimal overhead of distilling one ebit from the erased states under one-way LOCC operations. We also note that the probabilistic state conversion task, which has already been studied for the pure-state case~\cite{vidal1999entanglement,alhambra2016fluctuating}, can be further explored using the results in this paper.

An important problem for future work is to investigate to what extent the bounds on the asymptotic distillable secret
key or entanglement can be approached. One potential approach is to explore the connections to distinguishability distillation (i.e., hypothesis testing)~\cite{Nagaoka2006,Hayashi2007,Audenaert2008a,Mosonyi2015,WW19}. 
It is also of interest to consider an extension to the resource theory of
$k$-unextendibility and entanglement dilution~\cite{Bennett1996c,Hayden2001,Vidal2002b,Wang2016d,Buscemi2011,WW18},
where the techniques applied in~\cite{xie2017approximate} might be useful here.
Moreover, we have only considered the extendibility on system $B$ when defining generalized unextendible 
entanglement measure, yielding an asymmetric entanglement measure.
It would be meaningful to further explore the extendibility on system $A$ and promote 
our measure to a symmetric one. 
Another interesting direction is to
develop the continuous-variable setting, where the results of \cite{LKAW19} could be helpful, 
as well as the dynamical setting of quantum channels.

\section*{Acknowledgement}

The authors thank Anurag Anshu, Kun Fang,  Debbie Leung, Vishal Singh, and Dave Touchette for discussions.
Part of this work was done when
KW was at the Southern University of Science and Technology and XW was at the University of Maryland.
XW was partially supported by the Start-up Fund from The Hong Kong University of Science and Technology (Guangzhou), 
the Guangdong Quantum Science and Technology Strategic Fund (Grant No. GDZX2303007), 
and the Education Bureau of Guangzhou Municipality.
MMW acknowledges support from NSF under grant no.~2315398.

\bibliographystyle{alpha}
\bibliography{Ref.bib}

\appendix

\section{Proof of Lemma~\ref{lem:faithfulness-renyi-rel-ents}}
\label{appx:faithfulness-renyi-rel-ents}

By utilizing the data-processing inequality and the faithfulness of the classical R\'enyi relative entropy, it follows that the statements above are true for all of the quantum R\'enyi relative entropies for the range of $\alpha$ for which data processing holds (see, e.g., \cite{wilde2014strong} for this kind of argument). To get outside of the range for $\alpha \in (0,1)$, we use the following inequality holding for $\alpha \in (0,1)$ \cite{IRS17}:
\begin{equation}
\alpha D_\alpha(\omega \Vert \tau) \leq \widetilde{D}_{\alpha}(\omega\Vert\tau).
\end{equation}
To get outside of the range for $\alpha \in (1,\infty)$, we use the following inequality holding for $\alpha \in (1,\infty)$:
\begin{equation}
\widetilde{D}_{\alpha}(\omega\Vert\tau) \leq D_{\alpha}(\omega\Vert\tau) .
\end{equation}
This inequality was proved in \cite{wilde2014strong,DL14limit} (following from the Araki--Lieb--Thirring inequality \cite{Araki1990,LT76}).
For the same range, we also make use of the following inequality:
\begin{equation}
\widetilde{D}_{\alpha}(\omega\Vert\tau)
\leq \widehat{D}_{\alpha}(\omega\Vert\tau).
\end{equation}
which also follows from the Araki--Lieb--Thirring inequality for $\alpha \in (1,\infty)$ because
\begin{align}
\widetilde{Q}_\alpha(\omega \Vert \tau) & = \tr[(\tau^{(1-\alpha)/2\alpha} \omega \tau^{(1-\alpha)/2\alpha})^{\alpha}]\\
& =
\tr[(\tau^{1/2\alpha} \tau^{-1/2} \omega \tau^{-1/2} \tau^{1/2\alpha} )^{\alpha}]\\
& \leq
\tr[(\tau^{1/2\alpha})^{\alpha} (\tau^{-1/2} \omega \tau^{-1/2})^{\alpha} (\tau^{1/2\alpha})^{\alpha}]\\
& = \tr[\tau (\tau^{-1/2} \omega \tau^{-1/2})^{\alpha} ]
\\
& = \widehat{Q}_\alpha(\omega \Vert \tau).
\end{align}
This concludes the proof.

\section{Proof of Theorem~\ref{thm:strong-2-ext-mono}}\label{appx:strong-2-ext-mono}

We first show the proof for \eqref{eq:full-monotone-ineq-sandwiched}.
Let $\rho_{AB_{1}B_{2}}$ be an arbitrary extension of $\rho_{AB_{1}}$. Then
define%
\begin{align}
p(y)  &  \coloneqq \operatorname{Tr}[\mathcal{E}_{AB_{1}B_{2}\rightarrow A^{\prime
}B_{1}^{\prime}B_{2}^{\prime}}^{y}(\rho_{AB_{1}B_{2}})],\label{eq:p(y)-pf}\\
\omega_{A^{\prime}B_{1}^{\prime}B_{2}^{\prime}}^{y}  &  \coloneqq \frac{1}%
{p(y)}\mathcal{E}_{AB_{1}B_{2}\rightarrow A^{\prime}B_{1}^{\prime}%
B_{2}^{\prime}}^{y}(\rho_{AB_{1}B_{2}}),\label{eq:omega-proof}
\end{align}
where $\mathcal{E}_{AB_{1}B_{2}\rightarrow A^{\prime}B_{1}^{\prime}%
B_{2}^{\prime}}^{y}$ is the operation extending $\mathcal{E}_{AB\rightarrow
A^{\prime}B^{\prime}}^{y}$. Due to the channel extension property in~\eqref{eq:marginal},
the values of $p(y)$ in \eqref{eq:p(y)-thm} and \eqref{eq:p(y)-pf} are equal
and $\omega_{A^{\prime}B_{1}^{\prime}B_{2}^{\prime}}^{y}$ defined in~\eqref{eq:omega-proof}
is an extension of $\omega_{A^{\prime}B^{\prime}}^{y}$, the latter defined in~\eqref{eq:omega-thm} for arbitrary $y$.

From the data-processing inequality for the sandwiched R\'enyi relative entropy
for $\alpha\in\lbrack1/2,1)\cup(1,\infty)$, we conclude the following
inequality:%
\begin{equation}
\widetilde{D}_{\alpha}(\rho_{AB_{1}}\Vert\rho_{AB_{2}})\geq\widetilde
{D}_{\alpha}(\mathcal{E}_{AB\rightarrow A^{\prime}B^{\prime}Y}(\rho_{AB_{1}%
})\Vert\mathcal{E}_{AB\rightarrow A^{\prime}B^{\prime}Y}(\rho_{AB_{2}})),
\end{equation}
where the quantum channel $\mathcal{E}_{AB\rightarrow A^{\prime}B^{\prime}Y}$
is defined as%
\begin{equation}
\mathcal{E}_{AB\rightarrow A^{\prime}B^{\prime}Y}(\cdot)\coloneqq \sum_{y}%
\mathcal{E}_{AB\rightarrow A^{\prime}B^{\prime}}^{y}(\cdot)\otimes
|y\rangle\!\langle y|_{Y}.
\end{equation}
Since the marginal operations $\mathcal{E}_{AB_{1}\rightarrow A^{\prime}%
B_{1}^{\prime}}^{y}$ and $\mathcal{E}_{AB_{2}\rightarrow A^{\prime}%
B_{2}^{\prime}}^{y}$ are each equal to the original operation $\mathcal{E}%
_{AB\rightarrow A^{\prime}B^{\prime}}^{y}$, we find that%
\begin{align}
\mathcal{E}_{AB\rightarrow A^{\prime}B^{\prime}Y}(\rho_{AB_{1}}) &  =\sum
_{y}\mathcal{E}_{AB_{1}\rightarrow A^{\prime}B_{1}^{\prime}}^{y}(\rho_{AB_{1}%
})\otimes|y\rangle\!\langle y|_{Y}\\
&  =\sum_{y}p(y)\omega_{A^{\prime}B_{1}^{\prime}}^{y}\otimes|y\rangle\!\langle
y|_{Y}\\
&  \eqqcolon \omega_{A^{\prime}B_{1}^{\prime}Y},\\
\mathcal{E}_{AB\rightarrow A^{\prime}B^{\prime}Y}(\rho_{AB_{2}}) &  =\sum
_{y}\mathcal{E}_{AB_{2}\rightarrow A^{\prime}B_{2}^{\prime}}^{y}(\rho_{AB_{2}%
})\otimes|y\rangle\!\langle y|_{Y}\\
&  =\sum_{y}p(y)\omega_{A^{\prime}B_{2}^{\prime}}^{y}\otimes|y\rangle\!\langle
y|_{Y}\\
&  \eqqcolon \omega_{A^{\prime}B_{2}^{\prime}Y},
\end{align}
which implies that%
\begin{align}
  \widetilde{D}_{\alpha}(\mathcal{E}_{AB\rightarrow A^{\prime}B^{\prime}%
Y}(\rho_{AB_{1}})\Vert\mathcal{E}_{AB\rightarrow A^{\prime}B^{\prime}Y}%
(\rho_{AB_{2}}))
& = \widetilde{D}_{\alpha}(\omega_{A^{\prime}B_{1}^{\prime}Y}\Vert
\omega_{A^{\prime}B_{2}^{\prime}Y})\\
& = \frac{1}{\alpha-1}\log\sum_{y:p(y)>0}p(y)\widetilde{Q}_{\alpha}%
(\omega_{A^{\prime}B_{1}^{\prime}}^{y}\Vert\omega_{A^{\prime}B_{2}^{\prime}}^{y})\\
& \geq \sum_{y:p(y)>0}p(y)\left[  \frac{1}{\alpha-1}\log\widetilde
{Q}_{\alpha}(\omega_{A^{\prime}B_{1}^{\prime}}^{y}\Vert\omega_{A^{\prime}%
B_{2}^{\prime}}^{y})\right]  \\
& = \sum_{y:p(y)>0}p(y)\widetilde{D}_{\alpha}(\omega_{A^{\prime}B_{1}^{\prime
}}^{y}\Vert\omega_{A^{\prime}B_{2}^{\prime}}^{y})\\
& \geq 2\sum_{y:p(y)>0}p(y)E_{\alpha}^{u}(\omega_{A^{\prime}B_{1}^{\prime}}^{y})\\
& = 2\sum_{y:p(y)>0}p(y)E_{\alpha}^{u}(\omega_{A^{\prime}B^{\prime}}^{y}).
\end{align}
In the above, the first inequality follows from concavity of the logarithm and the fact that $\alpha>1$. The second
inequality follows because $\omega _{A^{\prime}B_{1}^{\prime}B_{2}^{\prime}}^{y}$ is such that $\operatorname{Tr}%
_{B_{2}^{\prime}}[\omega_{A^{\prime}B_{1}^{\prime}B_{2}^{\prime}}^{y}%
]=\omega_{A^{\prime}B_{1}^{\prime}}^{y}$, so that we can then optimize over
every extension of $\omega_{A^{\prime}B_{1}^{\prime}}^{y}$.\ Since the
following inequality has been established for every extension $\rho
_{AB_{1}B_{2}}$ of $\rho_{AB}$:
\begin{equation}
\frac{1}{2}\widetilde{D}_{\alpha}(\rho_{AB_{1}}\Vert\rho_{AB_{2}})\geq
\sum_{y:p(y)>0}p(y)E_{\alpha}^{u}(\omega_{A^{\prime}B^{\prime}}^{y}),
\end{equation}
we conclude the inequality in \eqref{eq:full-monotone-ineq-sandwiched}.

The proof of \eqref{eq:full-monotone-ineq} follows along similar lines, but instead using data processing for the Petz--R\'enyi relative entropy and the quantity $Q_{\alpha}$ in \eqref{eq:quasi-Petz}.
In the same way, the proof of \eqref{eq:full-monotone-ineq-geometric} follows, but instead using data processing for the geometric R\'enyi relative entropy and the quantity $\widehat{Q}_{\alpha}$ in~\eqref{eq:quasi-geometric}.

The inequality in \eqref{eq:full-monotone-ineq-sandwiched} for $\alpha \in \{1,\infty\}$ follows by taking a limit. Alternatively, we can prove them directly.
To establish the inequality in \eqref{eq:full-monotone-ineq-sandwiched} for $\alpha=1$
(for quantum relative entropy), we use the same reasoning but the following
property of quantum relative entropy:%
\begin{equation}
D(\omega_{XB}\Vert\tau_{XB})=\sum_{x}p(x)D(\omega_{B}^{x}\Vert\tau_{B}%
^{x})+D(p\Vert q),
\end{equation}
where%
\begin{align}
\omega_{XB}  & \coloneqq \sum_{x}p(x)|x\rangle\!\langle x|_{X}\otimes\omega_{B}^{x},\\
\tau_{XB}  & \coloneqq \sum_{x}q(x)|x\rangle\!\langle x|_{X}\otimes\tau_{B}^{x}.
\end{align}
Similarly, the inequality in \eqref{eq:full-monotone-ineq-geometric} for $\alpha=1$ follows from the same reasoning and the following related property of Belavkin--Staszewski relative entropy:
\begin{equation}
\widehat{D}(\omega_{XB}\Vert\tau_{XB})=\sum_{x}p(x)\widehat{D}(\omega_{B}^{x}\Vert\tau_{B}%
^{x})+D(p\Vert q).
\end{equation}

For $\alpha = \infty$, suppose that the optimal extension state is $\rho_{AB_1B_2}$, which satisfies $\rho_{AB_1}=\rho_{AB}$ and $E^u_{\max}(\rho_{AB_1})=\frac12 D_{\max}(\rho_{AB_1}\|\rho_{AB_2})$. Also, suppose that $E^u_{\max}(\rho_{AB_1})=\frac12 \log  t$, which means that $\rho_{AB_1}\le t\rho_{AB_2}$ For every $y$ such that $p(y)>0$, due to complete positivity and two-extendibility of the map $\cE^{y}_{AB\rightarrow A^{\prime}B^{\prime}}$, we have that
\begin{align}
t\,\cE^{y}_{AB_2\rightarrow A^{\prime}B_2^{\prime}}(\rho_{AB_2})/p(y) \ge
 \cE^{y}_{AB_1\rightarrow A^{\prime}B_1^{\prime}}(\rho_{AB_1})/p(y)=\omega_{A'B'}^y.
\end{align}
Noting that $\cE^{y}_{AB_1B_2\rightarrow A^{\prime}B_1^{\prime}B_2^{\prime}}(\rho_{AB_1B_2})/p(y)$ is an extension of the state $\omega_{A'B'}^y$, it follows that
\begin{align}
\label{eq: omega y Emax}
E^u_{\max}(\omega_{A'B'}^y) \le \frac12 \log  t.
\end{align}
Therefore,
\begin{align}
      E^u_{\max}(\rho_{AB_1}) &= \frac{1}{2} \log  t\\
      & = \frac{1}{2} \log  \sum_{y : p(y)>0}p(y)t\\
      & \ge \frac{1}{2} \log  \sum_{y: p(y)>0}p(y) 2^{2E^u_{\max}(\omega_{A'B'}^y) } \label{eq: emax ineq 1}\\
      & \ge   \sum_{y: p(y)>0}p(y) {E^u_{\max}(\omega_{A'B'}^y) } \label{eq: emax ineq 2}
\end{align}
In the above, \eqref{eq: emax ineq 1} is due to \eqref{eq: omega y Emax}, while \eqref{eq: emax ineq 2} is due to the concavity of the logarithm.

\section{Proof of Lemma~\ref{lem:sandwiched-Renyi-MI-pure}}

\label{app:proof-lemma-sw-Rny-MI-pure}

Without loss of generality, let $\sigma_{B}$ be an arbitrary state with support equal to the support of the reduced state $\psi_B$. Then we have that
\begin{align}
  \widetilde{D}_{\alpha}(\psi_{AB}\Vert\psi_{A}\otimes\sigma_{B})
&  =\frac{1}{\alpha-1}\log\operatorname{Tr}\!\left[  \left(  \psi_{AB}%
^{1/2}\left(  \psi_{A}\otimes\sigma_{B}\right)  ^{\frac{1-\alpha}{\alpha}}%
\psi_{AB}^{1/2}\right)  ^{\alpha}\right]  \\
&  =\frac{1}{\alpha-1}\log\operatorname{Tr}\!\left[  \left(  \psi
_{AB}\left(  \psi_{A}\otimes\sigma_{B}\right)  ^{\frac{1-\alpha}{\alpha}}%
\psi_{AB}\right)  ^{\alpha}\right]  \\
&  =\frac{1}{\alpha-1}\log\operatorname{Tr}\!\left[  \left(  |\psi
\rangle\!\langle\psi|_{AB}\left(  \psi_{A}\otimes\sigma_{B}\right)
^{\frac{1-\alpha}{\alpha}}|\psi\rangle\!\langle\psi|_{AB}\right)  ^{\alpha
}\right]  \\
&  =\frac{1}{\alpha-1}\log\operatorname{Tr}\!\left[  \left(  \langle
\psi|_{AB}\left(  \psi_{A}\otimes\sigma_{B}\right)  ^{\frac{1-\alpha}{\alpha}%
}|\psi\rangle|\psi\rangle\!\langle\psi|_{AB}\right)  ^{\alpha}\right]  \\
&  =\frac{1}{\alpha-1}\log\operatorname{Tr}\!\left[  \left(
\operatorname{Tr}\!\left[  \psi_{AB}\left(  \psi_{A}\otimes\sigma_{B}\right)
^{\frac{1-\alpha}{\alpha}}\right]  |\psi\rangle\!\langle\psi|_{AB}\right)
^{\alpha}\right]  \\
&  =\frac{1}{\alpha-1}\log\left(  \operatorname{Tr}\!\left[  \psi
_{AB}\left(  \psi_{A}\otimes\sigma_{B}\right)  ^{\frac{1-\alpha}{\alpha}%
}\right]  \right)  ^{\alpha}\operatorname{Tr}\!\left[  |\psi\rangle\!\langle
\psi|_{AB}^{\alpha}\right]  \\
&  =\frac{\alpha}{\alpha-1}\log\operatorname{Tr}\!\left[  \psi_{AB}\left(
\psi_{A}\otimes\sigma_{B}\right)  ^{\frac{1-\alpha}{\alpha}}\right].
\end{align}
For every pure bipartite state $\psi_{AB}$, there exists an operator $X_{A}$
such that $\psi_{AB}=X_{A}\Gamma_{AB}X_{A}^{\dag}$ and $\operatorname{Tr}%
[X_{A}^{\dag}X_{A}]=1$, which implies that $\psi_{A}=X_{A}X_{A}^{\dag}$. Note that we define
\begin{equation}
\Gamma_{AB}\coloneqq |\Gamma\rangle\!\langle\Gamma|_{AB},\qquad|\Gamma\rangle_{AB}%
\coloneqq \sum_{i=1}^{d}|i\rangle_{A}|i\rangle_{B},
\end{equation}
where
$\left\{  |i\rangle_{A}\right\}  _{i}$ and $\left\{  |i\rangle_{B}\right\}
_{i}$ are orthonormal bases.
Furthermore, by taking the polar decomposition of $X_{A}$, there exists a
unitary $U_{A}$ and a density operator $\rho_{A}$ (having the same spectrum as $\psi_{A}$) such that
$X_{A}=U_{A}\sqrt{\rho_{A}}$. Then consider that the last line above is equal
to%
\begin{align}
&  \frac{\alpha}{\alpha-1}\log\operatorname{Tr}\!\left[  \psi_{AB}\left(
\psi_{A}\otimes\sigma_{B}\right)  ^{\frac{1-\alpha}{\alpha}}\right]
\nonumber\\
&  =\frac{\alpha}{\alpha-1}\log\operatorname{Tr}\!\left[  X_{A}\Gamma
_{AB}X_{A}^{\dag}\left(  X_{A}X_{A}^{\dag}\otimes\sigma_{B}\right)
^{\frac{1-\alpha}{\alpha}}\right]  \\
&  =\frac{\alpha}{\alpha-1}\log\operatorname{Tr}\!\left[  X_{A}\Gamma
_{AB}X_{A}^{\dag}\left(  \left(  X_{A}X_{A}^{\dag}\right)  ^{\frac{1-\alpha
}{\alpha}}\otimes\sigma_{B}^{\frac{1-\alpha}{\alpha}}\right)  \right]  \\
&  =\frac{\alpha}{\alpha-1}\log\operatorname{Tr}\!\left[  \Gamma_{AB}\left(
X_{A}^{\dag}\left(  X_{A}X_{A}^{\dag}\right)  ^{\frac{1-\alpha}{\alpha}}%
X_{A}\otimes\sigma_{B}^{\frac{1-\alpha}{\alpha}}\right)  \right]  \\
&  =\frac{\alpha}{\alpha-1}\log\operatorname{Tr}\!\left[  \Gamma_{AB}\left(
\sqrt{\rho_{A}}U_{A}^{\dag}\left(  U_{A}\rho_{A}U_{A}^{\dag}\right)
^{\frac{1-\alpha}{\alpha}}U_{A}\sqrt{\rho_{A}}\otimes\sigma_{B}^{\frac
{1-\alpha}{\alpha}}\right)  \right]  \\
&  =\frac{\alpha}{\alpha-1}\log\operatorname{Tr}\!\left[  \Gamma_{AB}\left(
\rho_{A}^{\frac{1}{\alpha}}\otimes\sigma_{B}^{\frac{1-\alpha}{\alpha}}\right)
\right]  \\
&  =\frac{\alpha}{\alpha-1}\log\operatorname{Tr}\!\left[  \Gamma_{AB}\left(
\rho_{A}^{\frac{1}{\alpha}}T(\sigma)_{A}^{\frac{1-\alpha}{\alpha}}\otimes
\1_{B}\right)  \right]  \\
&  =\frac{\alpha}{\alpha-1}\log\operatorname{Tr}\!\left[  \rho_{A}^{\frac
{1}{\alpha}}T(\sigma)_{A}^{\frac{1-\alpha}{\alpha}}\right].
\end{align}
In the above, $T(\sigma)$ denotes the transpose of the state $\sigma$.
Now taking a minimum over every state~$\sigma$ and applying \cite[Lemma~12]{muller2013quantum}, we find that%
\begin{align}
  \min_{\sigma}\frac{\alpha}{\alpha-1}\log\operatorname{Tr}\!\left[
\rho_A^{\frac{1}{\alpha}}T(\sigma)_{A}^{\frac{1-\alpha}{\alpha}}\right]
&  =\frac{\alpha}{\alpha-1}\log\left\Vert \rho_A^{\frac{1}{\alpha}%
}\right\Vert _{\alpha/\left(  2\alpha-1\right)  }\\
&  =\frac{\alpha}{\alpha-1}\frac{2\alpha-1}{\alpha}\log\operatorname{Tr}%
[(\rho_A^{\frac{1}{\alpha}})^{\alpha/\left(  2\alpha-1\right)  }]\\
&  =\frac{2\alpha-1}{\alpha-1}\log\operatorname{Tr}[\rho_A^{\frac{1}%
{2\alpha-1}}]\\
&  =\frac{2}{1-\left(  1/\left[  2\alpha-1\right]  \right)  }\log\operatorname{Tr}[\rho_A^{\frac{1}{2\alpha-1}}]\\
&  =2H_{\left[  2\alpha-1\right]  ^{-1}}(\rho_A).
\end{align}
The conclusion that $H_{\left[  2\alpha-1\right]  ^{-1}}(\rho_A)= H_{\left[  2\alpha-1\right]  ^{-1}}(\psi_A)$ follows because $\rho_A$ and $\psi_A$ have the same spectrum.

\section{Proof of Lemma~\ref{lem:geometric-Renyi-MI-pure}}

\label{app:proof-lemma-geometric-Rny-MI-pure}

Consider that, by similar reasoning employed in
Appendix~\ref{app:proof-lemma-sw-Rny-MI-pure},
\begin{equation}
\psi_{AB}=\psi_{A}^{1/2}U_{A}\Gamma_{AB}U_{A}^{\dag}\psi_{A}^{1/2},
\end{equation}
where $U_{A}$ is a unitary, $\psi_{A}$ is the marginal density operator of
$\psi_{AB}$ on system $A$,
\begin{equation}
\Gamma_{AB}\coloneqq|\Gamma\rangle\!\langle\Gamma|_{AB},\qquad|\Gamma
\rangle_{AB}\coloneqq\sum_{i=1}^{d}|i\rangle_{A}|i\rangle_{B},
\end{equation}
$\left\{  |i\rangle_{A}\right\}  _{i}$ and $\left\{  |i\rangle_{B}\right\}
_{i}$ are the orthonormal bases for the Schmidt decomposition of $\psi_{AB}$,
and $d$ is the Schmidt rank of $\psi_{AB}$. Then, by noting that it suffices
to optimize over every state~$\sigma_{B}$ having the same support as $\psi_{B}$, we
find that%
\begin{align}
\widehat{I}_{\alpha}(A;B)_{\psi} &  \coloneqq\min_{\sigma_{B}}\widehat
{D}_{\alpha}(\psi_{AB}\Vert\psi_{A}\otimes\sigma_{B})\\
&  =\min_{\sigma_{B}}\frac{1}{\alpha-1}\log\operatorname{Tr}\!\left[  \left(
\psi_{A}\otimes\sigma_{B}\right)  \left[  \left(  \psi_{A}\otimes\sigma
_{B}\right)  ^{-1/2}\psi_{AB}\left(  \psi_{A}\otimes\sigma_{B}\right)
^{-1/2}\right]  ^{\alpha}\right]
\end{align}
Consider that%
\begin{align}
&  \operatorname{Tr}\!\left[  \left(  \psi_{A}\otimes\sigma_{B}\right)
\left[  \left(  \psi_{A}^{-1/2}\otimes\sigma_{B}^{-1/2}\right)  \psi
_{AB}\left(  \psi_{A}^{-1/2}\otimes\sigma_{B}^{-1/2}\right)  \right]
^{\alpha}\right]  \nonumber\\
&  =\operatorname{Tr}\!\left[  \left(  \psi_{A}\otimes\sigma_{B}\right)
\left[  \left(  \psi_{A}^{-1/2}\otimes\sigma_{B}^{-1/2}\right)  \psi_{A}%
^{1/2}U_{A}\Gamma_{AB}U_{A}^{\dag}\psi_{A}^{1/2}\left(  \psi_{A}^{-1/2}%
\otimes\sigma_{B}^{-1/2}\right)  \right]  ^{\alpha}\right]  \\
&  =\operatorname{Tr}\!\left[  \left(  \psi_{A}\otimes\sigma_{B}\right)
\left[  \left(  U_{A}\otimes\sigma_{B}^{-1/2}\right)  \Gamma_{AB}\left(
U_{A}^{\dag}\otimes\sigma_{B}^{-1/2}\right)  \right]  ^{\alpha}\right]  \\
&  =\operatorname{Tr}\!\left[  \left(  \psi_{A}\otimes\sigma_{B}\right)
U_{A}\left[  \sigma_{B}^{-1/2}\Gamma_{AB}\sigma_{B}^{-1/2}\right]  ^{\alpha
}U_{A}^{\dag}\right]  \\
&  =\operatorname{Tr}\!\left[  \left(  U_{A}^{\dag}\psi_{A}U_{A}\otimes
\sigma_{B}\right)  \left[  \frac{\sigma_{B}^{-1/2}\Gamma_{AB}\sigma_{B}%
^{-1/2}}{\operatorname{Tr}[\sigma_{B}^{-1}]}\right]  ^{\alpha}\right]  \left(
\operatorname{Tr}[\sigma_{B}^{-1}]\right)  ^{\alpha}\\
&  =\operatorname{Tr}\!\left[  \left(  U_{A}^{\dag}\psi_{A}U_{A}\otimes
\sigma_{B}\right)  \frac{\sigma_{B}^{-1/2}\Gamma_{AB}\sigma_{B}^{-1/2}%
}{\operatorname{Tr}[\sigma_{B}^{-1}]}\right]  \left(  \operatorname{Tr}%
[\sigma_{B}^{-1}]\right)  ^{\alpha}\\
&  =\operatorname{Tr}\!\left[  \left(  U_{A}^{\dag}\psi_{A}U_{A}\otimes
\sigma_{B}\right)  \sigma_{B}^{-1/2}\Gamma_{AB}\sigma_{B}^{-1/2}\right]
\left(  \operatorname{Tr}[\sigma_{B}^{-1}]\right)  ^{\alpha-1}\\
&  =\operatorname{Tr}\!\left[  U_{A}^{\dag}\psi_{A}U_{A}\Gamma_{AB}\right]
\left(  \operatorname{Tr}[\sigma_{B}^{-1}]\right)  ^{\alpha-1}\\
&  =\left(  \operatorname{Tr}[\sigma_{B}^{-1}]\right)  ^{\alpha-1}.
\end{align}
In the fourth step, we are using the fact that $\sigma_{B}^{-1/2}%
|\Gamma\rangle_{AB}$ is a vector with norm%
\begin{align}
\left\Vert \sigma_{B}^{-1/2}|\Gamma\rangle_{AB}\right\Vert _{2} &
=\sqrt{\langle\Gamma|_{AB}\sigma_{B}^{-1/2}\sigma_{B}^{-1/2}|\Gamma
\rangle_{AB}}\\
&  =\sqrt{\langle\Gamma|_{AB}\sigma_{B}^{-1}|\Gamma\rangle_{AB}}\\
&  =\sqrt{\operatorname{Tr}[\sigma_{B}^{-1}]}.
\end{align}
Then it follows that%
\begin{align}
\widehat{I}_{\alpha}(A;B)_{\psi} &  =\min_{\sigma_{B}}\frac{1}{\alpha-1}%
\log\left(  \operatorname{Tr}[\sigma_{B}^{-1}]\right)  ^{\alpha-1}\\
&  =\min_{\sigma_{B}}\log\operatorname{Tr}[\sigma_{B}^{-1}].
\end{align}
The minimum value occurs when $\sigma_{B}=\1_{B}/d_{B}$ (the maximally mixed
state in system $B$, where we again note that $B$ is restricted to the support of $\psi_B$ so that $d_B$ is the dimension of the support of~$\psi_B$). To conclude this final step, one can use the Lagrange
multiplier method. So then the equality in the statement of
Lemma~\ref{lem:geometric-Renyi-MI-pure} follows.

\section{Proof of Proposition~\ref{prop:subadditivity}}
\label{appx:subadditivity}

Let $\sigma_{A_1B_1B_1'}$ and $\sigma_{A_2B_2B_2'}$ be arbitrary extensions of $\rho_{A_1B_1}$ and
$\rho_{A_2B_2}$, respectively. Consider the state
$\sigma_{A_1B_1B_1'}\ox\sigma_{A_2B_2B_2'}$. Then it follows that
\begin{align}
    \tr_{B_1'B_2'}[\sigma_{A_1B_1B_1'}\ox\sigma_{A_2B_2B_2'}]
    =  \tr_{B_1'}[\sigma_{A_1B_1B_1'}]\ox\tr_{B_2'}[\sigma_{A_2B_2B_2'}]
    = \rho_{A_1B_1}\ox\rho_{A_2B_2},
\end{align}
so that $\sigma_{A_1B_1B_1'}\ox\sigma_{A_2B_2B_2'}$ is an extension of
$\rho_{A_1B_1}\ox\rho_{A_2B_2}$. Thus
\begin{align}
    E_{\a}^u(\rho_{A_1B_1}\ox\rho_{A_2B_2})
&\leq \frac{1}{2} D_{\a}(\rho_{A_1B_1}\ox\rho_{A_2B_2}\|\sigma_{A_1B_1'}\ox\sigma_{A_2B_2'}) \\
&=    \frac{1}{2} D_{\a}(\rho_{A_1B_1}\|\sigma_{A_1B_1'})
    + \frac{1}{2} D_{\a}(\rho_{A_2B_2}\|\sigma_{A_2B_2'}).
\end{align}
Taking an infimum over all extensions $\sigma_{A_1B_1B_1'}$ and $\sigma_{A_2B_2B_2'}$ leads to the
inequality in \eqref{eq:petz-subadd}.
The inequalities in \eqref{eq:sandwiched-subadd} and \eqref{eq:geometric-subadd} are established similarly.

\section{Derivations of dual SDPs and strong duality for max- and min-unextendible entanglement}

\label{app:dual-SDP-derivations}

We first derive the dual SDP\ in \eqref{eq:max-dual}, by means of the Lagrange multiplier
method. Consider that%
\begin{align}
&  \sup_{\lambda,\sigma_{ABB^{\prime}}\geq0}\left\{  \lambda:\lambda\rho
_{AB}\leq\operatorname{Tr}_{B}[\sigma_{ABB^{\prime}}],\ \rho_{AB}%
=\operatorname{Tr}_{B^{\prime}}[\sigma_{ABB^{\prime}}]\right\}  \nonumber\\
&  =\sup_{\lambda,\sigma_{ABB^{\prime}}\geq0}\inf_{\substack{X_{AB}%
\geq0,\\Y_{AB}\in\text{Herm}}}\left\{
\begin{array}
[c]{c}%
\lambda+\operatorname{Tr}[X_{AB}(\operatorname{Tr}_{B}[\sigma_{ABB^{\prime}%
}]-\lambda\rho_{AB})]\\
+\operatorname{Tr}[Y_{AB}(\rho_{AB}-\operatorname{Tr}_{B^{\prime}}%
[\sigma_{ABB^{\prime}}])]
\end{array}
\right\}  \\
&  =\sup_{\lambda,\sigma_{ABB^{\prime}}\geq0}\inf_{\substack{X_{AB}%
\geq0,\\Y_{AB}\in\text{Herm}}}\left\{
\begin{array}
[c]{c}%
\operatorname{Tr}[Y_{AB}\rho_{AB}]+\lambda\left(  1-\operatorname{Tr}%
[X_{AB}\rho_{AB}]\right)  \\
+\operatorname{Tr}[\left(  X_{AB^{\prime}}\otimes \1_{B}-Y_{AB}\otimes
\1_{B^{\prime}}\right)  \sigma_{ABB^{\prime}}]
\end{array}
\right\}  \\
&  \leq\inf_{\substack{X_{AB}\geq0,\\Y_{AB}\in\text{Herm}}}\sup_{\lambda
,\sigma_{ABB^{\prime}}\geq0}\left\{
\begin{array}
[c]{c}%
\operatorname{Tr}[Y_{AB}\rho_{AB}]+\lambda\left(  1-\operatorname{Tr}%
[X_{AB}\rho_{AB}]\right)  \\
+\operatorname{Tr}[\left(  X_{AB^{\prime}}\otimes \1_{B}-Y_{AB}\otimes
\1_{B^{\prime}}\right)  \sigma_{ABB^{\prime}}]
\end{array}
\right\}  \\
&  =\inf_{\substack{X_{AB}\geq0,\\Y_{AB}\in\text{Herm}}}\left\{
\begin{array}
[c]{c}%
\operatorname{Tr}[Y_{AB}\rho_{AB}]:\operatorname{Tr}[X_{AB}\rho_{AB}]\geq1,\\
X_{AB^{\prime}}\otimes \1_{B}\leq Y_{AB}\otimes \1_{B^{\prime}}%
\end{array}
\right\}  \\
&  =\inf_{X_{AB},Y_{AB}\geq0}\left\{
\begin{array}
[c]{c}%
\operatorname{Tr}[Y_{AB}\rho_{AB}]:\operatorname{Tr}[X_{AB}\rho_{AB}]\geq1,\\
X_{AB^{\prime}}\otimes \1_{B}\leq Y_{AB}\otimes \1_{B^{\prime}}%
\end{array}
\right\}  .
\end{align}
The first equality follows by introducing the Lagrange multipliers $X_{AB}%
\geq0$ and $Y_{AB}\in$Herm. Indeed, the constraint $\lambda\rho_{AB}%
\leq\operatorname{Tr}_{B}[\sigma_{ABB^{\prime}}]$ does not hold if and only if
\begin{equation}
\inf_{X_{AB}\geq0}\operatorname{Tr}[X_{AB}(\operatorname{Tr}_{B}%
[\sigma_{ABB^{\prime}}]-\lambda\rho_{AB})]=-\infty,
\end{equation}
and the constraint
$\rho_{AB}=\operatorname{Tr}_{B^{\prime}}[\sigma_{ABB^{\prime}}]$ does not
hold if and only if
\begin{equation}
\inf_{Y_{AB}\in\text{Herm}}\operatorname{Tr}[Y_{AB}%
(\rho_{AB}-\operatorname{Tr}_{B^{\prime}}[\sigma_{ABB^{\prime}}])]=  -\infty.
\end{equation}
The
second equality follows from basic algebra. The inequality follows from the
max-min inequality. The third equality follows by intepreting $\lambda
,\sigma_{ABB^{\prime}}\geq0$ as Lagrange multipliers. Indeed, the constraint
$\operatorname{Tr}[X_{AB}\rho_{AB}]\geq1$ does not hold if and only if
\begin{equation}
\sup_{\lambda\geq0}\lambda\left(  1-\operatorname{Tr}[X_{AB}\rho
_{AB}]\right)  =+\infty,
\end{equation}
and the constraint $X_{AB^{\prime}}\otimes \1_{B}\leq
Y_{AB}\otimes \1_{B^{\prime}}$ does not hold if and only if
\begin{equation}
\sup
_{\sigma_{ABB^{\prime}}\geq0}\operatorname{Tr}[\left(  X_{AB^{\prime}}\otimes
\1_{B}-Y_{AB}\otimes \1_{B^{\prime}}\right)  \sigma_{ABB^{\prime}}]=+\infty.
\end{equation}
The final equality follows because the constraint $X_{AB^{\prime}}\otimes
\1_{B}\leq Y_{AB}\otimes \1_{B^{\prime}}$ implies that $Y_{AB}\geq0$ since
$X_{AB^{\prime}}\otimes \1_{B}\geq0$ and the partial trace is a completely
positive map.

We can actually conclude that strong duality holds (i.e., equality in all
steps above) because Slater's condition holds. Indeed, a feasible choice for
the primal SDP\ is to pick $\sigma_{ABB^{\prime}}=\rho_{AB}\otimes
\pi_{B^{\prime}}$, where $\pi_{B^{\prime}}$ is the maximally mixed state, and
$\lambda=\frac{1}{d_{B}^{2}}$. We are then guaranteed that $\lambda\rho
_{AB}\leq\operatorname{Tr}_{B}[\sigma_{ABB^{\prime}}]$ for these choices
because $\frac{1}{d_{B}^{2}}\rho_{AB}\leq\frac{1}{d_{B}^{2}}\sum_{i=1}%
^{d_{B}^{2}}U_{B}^{i}\rho_{AB}U_{B}^{i\dag}=\rho_{A}\otimes\pi_{B^{\prime}}$,
where $\left\{  U_{B}^{i}\right\}  _{i=1}^{d_{B}^{2}}$ is the set of
Heisenberg--Weyl unitaries. A strictly feasible choice for the dual SDP\ is
$X_{AB}=2\1_{AB}$ and $Y_{AB}=3\1_{AB}$.

Let us now derive the dual SDP\ in \eqref{eq:min-dual}, again by means of the Lagrange
multiplier method. Consider that%
\begin{align}
&  \sup_{\sigma_{ABB^{\prime}}\geq0}\left\{  \operatorname{Tr}[\Pi
_{AB^{\prime}}^{\rho}\operatorname{Tr}_{B}[\sigma_{ABB^{\prime}}%
]]:\operatorname{Tr}_{B^{\prime}}[\sigma_{ABB^{\prime}}]=\rho_{AB}\right\}
\nonumber\\
&  =\sup_{\sigma_{ABB^{\prime}}\geq0}\inf_{X_{AB}\in\text{Herm}}\left\{
\operatorname{Tr}[\Pi_{AB^{\prime}}^{\rho}\operatorname{Tr}_{B}[\sigma
_{ABB^{\prime}}]]+\operatorname{Tr}[X_{AB}(\rho_{AB}-\operatorname{Tr}%
_{B^{\prime}}[\sigma_{ABB^{\prime}}])]\right\}  \\
&  =\sup_{\sigma_{ABB^{\prime}}\geq0}\inf_{X_{AB}\in\text{Herm}}\left\{
\operatorname{Tr}[X_{AB}\rho_{AB}]+\operatorname{Tr}[\left(  \Pi_{AB^{\prime}%
}^{\rho}\otimes \1_{B}-X_{AB}\otimes \1_{B^{\prime}}\right)  \sigma
_{ABB^{\prime}}]\right\}  \\
&  \leq\inf_{X_{AB}\in\text{Herm}}\sup_{\sigma_{ABB^{\prime}}\geq0}\left\{
\operatorname{Tr}[X_{AB}\rho_{AB}]+\operatorname{Tr}[\left(  \Pi_{AB^{\prime}%
}^{\rho}\otimes \1_{B}-X_{AB}\otimes \1_{B^{\prime}}\right)  \sigma
_{ABB^{\prime}}]\right\}  \\
&  =\inf_{X_{AB}\in\text{Herm}}\left\{  \operatorname{Tr}[X_{AB}\rho_{AB}%
]:\Pi_{AB^{\prime}}^{\rho}\otimes \1_{B}\leq X_{AB}\otimes \1_{B^{\prime}%
}\right\}  \\
&  =\inf_{X_{AB}\geq0}\left\{  \operatorname{Tr}[X_{AB}\rho_{AB}%
]:\Pi_{AB^{\prime}}^{\rho}\otimes \1_{B}\leq X_{AB}\otimes \1_{B^{\prime}%
}\right\}  .
\end{align}
The first equality follows because the constraint $\operatorname{Tr}%
_{B^{\prime}}[\sigma_{ABB^{\prime}}]=\rho_{AB}$ does not hold if and only if
\begin{equation}
\inf_{X_{AB}\in\text{Herm}}\operatorname{Tr}[X_{AB}(\rho_{AB}%
-\operatorname{Tr}_{B^{\prime}}[\sigma_{ABB^{\prime}}])]=-\infty.
\end{equation}
The second
equality follows from basic algebra. The inequality follows from the max-min
inequality. The third equality follows because the constraint $\Pi
_{AB^{\prime}}^{\rho}\otimes \1_{B}\leq X_{AB}\otimes \1_{B^{\prime}}$ does not
hold if and only if
\begin{equation}
\sup_{\sigma_{ABB^{\prime}}\geq0}\operatorname{Tr}%
[\left(  \Pi_{AB^{\prime}}^{\rho}\otimes \1_{B}-X_{AB}\otimes \1_{B^{\prime}%
}\right)  \sigma_{ABB^{\prime}}]=+\infty.
\end{equation}
The final equality follows because
the constraint $\Pi_{AB^{\prime}}^{\rho}\otimes \1_{B}\leq X_{AB}\otimes
\1_{B^{\prime}}$ implies that $X_{AB}\geq0$ since $\Pi_{AB^{\prime}}^{\rho
}\otimes \1_{B}\geq0$ and the partial trace is a completely positive map.

Strong duality holds (i.e., equality in all steps above) because Slater's
condition holds. Indeed, a feasible choice for the primal SDP\ is
$\sigma_{ABB^{\prime}}=\rho_{AB}\otimes\pi_{B^{\prime}}$. A strictly feasible
choice for the dual SDP\ is $X_{AB}=2\1_{AB}$.

\section{Proof of Proposition~\ref{prop:max-dditivity}}
\label{appx:max-dditivity}

Subadditivity follows from investigating the primal SDP.
Suppose that $(\lambda_1,\sigma_{A_1B_1B_1^\prime})$ and $(\lambda_2,\sigma_{A_2B_2B_2^\prime})$
achieve $E_{\max}^u\left(\rho_{A_1B_1}\right)$ and $E_{\max}^u\left(\rho_{A_2B_2}\right)$, respectively.
As
\begin{align}
  \lambda_1\lambda_2\rho_{A_1B_1}\ox\rho_{A_2B_2}
= \lambda_1\rho_{A_1B_1}\ox\lambda_2\rho_{A_2B_2}
\leq \sigma_{A_1B_1'}\ox\sigma_{A_2B_2'},
\end{align}
one can check that $(\lambda_1\lambda_2,
\sigma_{A_1B_1B_1^\prime}\ox\sigma_{A_2B_2B_2^\prime})$ forms a feasible solution. Thus
\begin{align*}
      2^{-2E_{\max}^u\left(\rho_{A_1B_1}\ox\rho_{A_2B_2}\right)}
\geq \lambda_1\lambda_2
= 2^{-2E_{\max}^u\left(\rho_{A_1B_1}\right) - 2E_{\max}^u\left(\rho_{A_2B_2}\right)},
\end{align*}
which gives $E_{\max}^u\left(\rho_{A_1B_1}\ox\rho_{A_2B_2}\right) \leq
E_{\max}^u\left(\rho_{A_1B_1}\right) + E_{\max}^u\left(\rho_{A_2B_2}\right)$.

The superadditivity is shown by investigating the dual SDP.
Suppose that $(X_{A_1B_1},Y_{A_1B_1})$ and $(X_{A_2B_2},Y_{A_2B_2})$
achieve $E_{\max}^u\left(\rho_{A_1B_1}\right)$ and $E_{\max}^u\left(\rho_{A_2B_2}\right)$, respectively.
As
\begin{align}
\tr\left[\left(\rho_{A_1B_1}\ox\rho_{A_2B_2}\right)
        \left(X_{A_1B_1}\ox X_{A_2B_2}\right)\right]
=  \tr\left[\rho_{A_1B_1}X_{A_1B_1}\right]\tr\left[\rho_{A_2B_2}X_{A_2B_2}\right]  \geq 1
\end{align}
and $X_{A_1B_1'}\ox X_{A_2B_2'} \ox \1_{B_1B_2} \leq Y_{A_1B_1}\ox Y_{A_2B_2} \ox \1_{B_1'B_2'}$, the pair
$(X_{A_1B_1}\ox X_{A_2B_2}, Y_{A_1B_1}\ox Y_{A_2B_2})$ forms a dual feasible solution. Thus
\begin{align}
      2^{-2E_{\max}^u\left(\rho_{A_1B_1}\ox\rho_{A_2B_2}\right)}
\leq  \tr\left[\rho_{A_1B_1}Y_{A_1B_1}\right]\tr\left[\rho_{A_2B_2}Y_{A_2B_2}\right]
= 2^{-2E_{\max}^u\left(\rho_{A_1B_1}\right) - 2E_{\max}^u\left(\rho_{A_2B_2}\right)},
\end{align}
which gives $E_{\max}^u\left(\rho_{A_1B_1}\ox\rho_{A_2B_2}\right) \geq
E_{\max}^u\left(\rho_{A_1B_1}\right) + E_{\max}^u\left(\rho_{A_2B_2}\right)$.

\section{Proof of Proposition~\ref{prop:e-u-min-pure}}
\label{appx:e-u-min-pure}

This proposition follows from~\eqref{eq:e-alpha-unext-pure-states} by taking the limit $\alpha \to 0$. 
Alternatively, we briefly outline another proof as follows. For a pure state $\psi_{AB}$, 
an arbitrary extension of it has the form
$\sigma_{ABB'}\coloneqq \proj{\psi}_{AB}\ox\sigma_{B'}$, where $\sigma_{B'}$ is a state.
Then $\tr_{B}[\sigma_{ABB'}]=\psi_A\ox\sigma_{B'}$.
Suppose that a spectral decomposition of $\sigma_{B'}$ is $\sigma_{B'}=\sum_n p_n\proj{e_n}$. Then
\begin{align}
&\quad \frac{1}{2} D_{\min}(\psi_{AB}\|\psi_A\ox\sigma_{B'}) \notag\\
&=  -\frac{1}{2}\log\sum_{i,j,m,n}\sqrt{\alpha_i\alpha_j}\alpha_m p_n
    \bra{\psi_i}_A \bra{\psi_i}_B
    \left(\proj{\psi_m}_A \ox \proj{e_n}_B\right)
    \ket{\psi_j}_A\ket{\psi_j}_B  \\
&= -\frac{1}{2}\log\sum_{i,n}\alpha_i^2 p_n \vert\langle\psi_i\vert e_n\rangle\vert^2 \\
&\geq -\frac{1}{2}\log\alpha_1^2\sum_{i,n}p_n\vert\langle\psi_i\vert e_n\rangle\vert^2 \\
&= - \log\alpha_1.
\end{align}
The inequality is achieved by choosing $\sigma_{B'} = \proj{\psi_1}_B$. Thus $E_{\min}^u(\psi_{AB}) = - \log
\alpha_1$.

\section{Dual SDP of unextendible fidelity}
\label{app:dual-sdp-unextendible-fidelity}

Here we derive the dual SDP of the primal SDP in~\eqref{eq:fid-primal} for unextendible fidelity. We follow an
 argument similar to that given in~\cite{berta2016}. We first bring the primal program into standard form, which expresses the
primal problem as a maximization of the functional $\tr[AX]$ over $X \geq 0$, subject to the constraint $\Phi(X) = B$. Hence, we set
\begin{align}
    X = \begin{pmatrix}
        X_{11} & Z_{AB} & \cdot \\
        Z_{AB}^\dagger & X_{22} & \cdot \\
        \cdot & \cdot &  \sigma_{ABB'}
    \end{pmatrix},\quad
    A = \frac{1}{2}\begin{pmatrix}
        0 & \1_{AB} & 0 \\
        \1_{AB} & 0 & 0 \\
        0 & 0 & 0
    \end{pmatrix},\quad
    B = \begin{pmatrix}
        \rho_{AB} & 0 & 0 \\
        0 &  0 & 0 \\
        0 & 0 & \rho_{AB}
    \end{pmatrix},
\end{align}
and
\begin{align}\label{eq:phi-x}
    \Phi(X) \coloneqq
\begin{pmatrix}
  X_{11} & 0 & 0 \\
  0 & X_{22} - \tr_{B}[\sigma_{ABB'}] & 0 \\
  0 & 0 & \tr_{B'}[\sigma_{ABB'}]
\end{pmatrix} .
\end{align}
The variables with the placeholder `$\cdot$' are of no interest. The dual SDP is a minimization over self-adjoint
$Y$ of the functional $\tr[BY]$ subject to $\Phi^\dagger(Y) \geq A$. The dual variables and adjoint map can be
determined to be
\begin{align}
Y =
\begin{pmatrix}
    Y_{11} & \cdot & \cdot \\
    \cdot & Y_{22} & \cdot \\
    \cdot & \cdot & Y_{33}
\end{pmatrix}
\text{~~and~~}
\Phi^\dagger(Y) \coloneqq
\begin{pmatrix}
  Y_{11} & 0 & 0 \\
  0 & Y_{22} & 0 \\
  0 & 0 & Y_{33}\ox\1_{B'} - Y_{22}\ox\1_{B}
\end{pmatrix}.
\end{align}
This leads to the following dual problem:
\begin{equation}
\begin{split}
\text{minimize}   &\quad \tr\left[(Y_{11} + Y_{33})\rho_{AB}\right]  \\
\text{subject to} &\quad  \begin{pmatrix} Y_{11} & 0 \\ 0 & Y_{22}\end{pmatrix}
                    \geq \frac{1}{2}\begin{pmatrix} 0 & \1_{AB} \\ \1_{AB} & 0 \end{pmatrix} \\
            &\quad Y_{33}\ox\1_{B'} \geq Y_{22}\ox\1_{B} \\
            &\quad Y_{11}, Y_{22}, Y_{33} \in \operatorname{Herm}(AB).
\end{split}\end{equation}
The Slater condition for strong duality is satisfied, using the fact that the primal problem is feasible and the dual
problem is strictly feasible. To see this, let $\sigma_{B'}$ be a quantum state. Then the operator
\begin{align}
    \tilde{X} = \begin{pmatrix}
        \rho_{AB} & 0 & 0 \\
        0 & \rho_A\ox\sigma_{B'} & 0 \\
        0 & 0 &  \rho_{AB}\ox\sigma_{B'}
    \end{pmatrix}
\end{align}
is primal feasible since $\tilde{X}\geq 0$ and $\Phi(\tilde{X}) = B$. For the dual problem, the operator
\begin{align}
    \tilde{Y} = \begin{pmatrix}
        \1_{AB} & 0 & 0 \\
        0 & \1_{AB} & 0 \\
        0 & 0 &  2\1_{AB}
    \end{pmatrix},
\end{align}
is strictly feasible since
\begin{align}
    \Phi^\dagger(\tilde{Y}) =
\begin{pmatrix}
        \1_{AB} & 0 & 0 \\
        0 & \1_{AB} & 0 \\
        0 & 0 &  \1_{ABB'}
    \end{pmatrix} >
    \frac{1}{2}\begin{pmatrix}
        0 & \1_{AB} & 0 \\
        \1_{AB} & 0 & 0 \\
        0 & 0 & 0
    \end{pmatrix} \equiv A.
\end{align}
By performing the substitutions $Y_{11} \to \frac{1}{2} Y_{11}$ and $Y_{33} \to \frac{1}{2} Y_{33}$ and noting that
\begin{equation}
\begin{pmatrix} Y_{11} & 0 \\ 0 & Y_{22}\end{pmatrix}
                    \geq \begin{pmatrix} 0 & \1_{AB} \\ \1_{AB} & 0 \end{pmatrix} \quad \Leftrightarrow \quad 
                    \begin{pmatrix} Y_{11} & -\1_{AB} \\ -\1_{AB} & Y_{22}\end{pmatrix}
                    \geq 0,
\end{equation}
we finally obtain the dual stated in \eqref{eq:fid-dual}:
\begin{equation}\label{eq:dual1}
\begin{split}
\text{minimize}   &\quad \frac{1}{2}\tr\left[(Y_{11} + Y_{33})\rho_{AB}\right]  \\
\text{subject to} &\quad  \begin{pmatrix} Y_{11} & -\1_{AB} \\ -\1_{AB} & Y_{22}\end{pmatrix}
                    \geq 0 \\
            &\quad Y_{33}\ox\1_{B'} \geq Y_{22}\ox\1_{B} \\
            &\quad Y_{11}, Y_{22}, Y_{33} \geq 0.
\end{split}\end{equation}
In this final step, we also used the facts that
\begin{align}
\begin{pmatrix} Y_{11} & -\1_{AB} \\ -\1_{AB} & Y_{22}\end{pmatrix}
                    \geq 0 \quad & \Rightarrow \quad 
                    Y_{11} , Y_{22} \geq 0,\\
                    Y_{22} \geq 0, \quad  Y_{33}\ox\1_{B'} \geq Y_{22}\ox\1_{B} \quad & \Rightarrow \quad Y_{33}\geq 0.
\end{align}

The dual problem in~\eqref{eq:dual1} can be further simplified since the first matrix inequality therein holds if and only if
$Y_{11},Y_{22}> 0$ and $Y_{22} \geq Y_{11}^{-1}$~\cite{watrous2013}. Without loss of generality, we can choose $Y_{22} =
Y_ {11}^{-1}$, and the problem simplifies to
\begin{equation}\label{eq:dual3}
\begin{split}
\text{infimum}   &\quad \frac{1}{2}\tr\left[Y_{AB'}^{-1}\rho_{AB'}\right] + \frac{1}{2}\tr\left[Z_{AB}\rho_{AB}\right] \\
\text{subject to} &\quad Z_{AB}\ox\1_{B'} \geq Y_{AB'}\ox\1_{B} \\
            &\quad Y_{AB'}, Z_{AB} > 0.
\end{split}\end{equation}
By the
arithmetic-geometric mean inequality, it holds that
\begin{align}
   \frac{1}{2}\tr\left[Y_{AB'}^{-1}\rho_{AB'}\right] + \frac{1}{2}\tr\left[Z_{AB}\rho_{AB}\right]
\geq \sqrt{\tr[Y_{AB'}^{-1}\rho_{AB'}]\tr[Z_{AB}\rho_{AB}]},
\end{align}
for every $Y_{AB'} > 0$, with equality when the two terms are equal. For every feasible pair $(Y_{AB'}, Z_{AB})$, there
exists a constant $\lambda>0$ such that the two trace terms evaluated on $(\lambda Y_{AB'},\lambda Z_{AB})$ are equal.
Hence, we can restrict the optimization to such rescaled pairs of operators, resulting
\begin{equation}\label{eq:dual4}
\begin{split}
\text{infimum}   &\quad \sqrt{\tr[Y_{AB'}^{-1}\rho_{AB'}]\tr[Z_{AB}\rho_{AB}]}  \\
\text{subject to} &\quad Z_{AB}\ox\1_{B'} \geq Y_{AB'}\ox\1_{B} \\
            &\quad Y_{AB'}, Z_{AB} > 0.
\end{split}
\end{equation}
We remark that this equivalent representation of the dual problem is essential when proving the multiplicativity of
unextendible fidelity.

\section{Proof of Proposition~\ref{prop:fid-multiplicativity}}
\label{appx:fid-multiplicativity}
For the ``$\geq$'' part, suppose that $\rho_{A_1B_1B_1'}$ and $\rho_{A_2B_2B_2'}$ achieve $F^u\left(\rho_{A_1B_1}\right)$ and
$F^u\left(\rho_{A_2B_2}\right)$, respectively. Since
\begin{align}
  \tr_{B_1'B_2'}\!\left[\rho_{A_1B_1B_1'}\ox\rho_{A_2B_2B_2'}\right] = \rho_{A_1B_1}\ox\rho_{A_2B_2},
\end{align}
the state $\rho_{A_1B_1B_1'}\ox\rho_{A_2B_2B_2'}$ is a feasible solution for $F^u(\rho_{A_1B_1}\ox\rho_{A_2B_2})$.
So
\begin{align}
  F^u(\rho_{A_1B_1}\ox\rho_{A_2B_2})
& \geq F(\rho_{A_1B_1}\ox\rho_{A_2B_2}, \rho_{A_1B_1'}\ox\rho_{A_2B_2'}) \\
& = F(\rho_{A_1B_1}, \rho_{A_1B_1'})F(\rho_{A_2B_2}, \rho_{A_2B_2'}) \\
& = F^u(\rho_{A_1B_1})F^u(\rho_{A_2B_2}),
\end{align}
where the first equality follows from multiplicativity of fidelity.

For the ``$\leq$'' part, we employ the dual representation in \eqref{eq:dual-rep-unext-fid}.
Let $(Y_{A_1B_1'}, Z_{A_1B_1})$ and $(Y_{A_2B_2'}, Z_{A_2B_2})$ be feasible solutions for
$F^u\left(\rho_{A_1B_1}\right)$ and $F^u\left(\rho_{A_2B_2}\right)$, respectively. Set $Y_{A_1A_2B_1'B_2'} \equiv
Y_{A_1B_1'}\ox Y_{A_2B_2'}$ and $Z_{A_1A_2B_1B_2} \equiv Z_{A_1B_1}\ox Z_{A_2B_2}$. Since
\begin{align}
  Z_{A_1A_2B_1B_2} \ox \1_{B_1'B_2'}
& = (Z_{A_1B_1}\ox\1_{B_1'})\ox(Z_{A_2B_2}\ox\1_{B_2'}) \\
& \geq (Y_{A_1B_1'}\ox\1_{B_1})\ox(Y_{A_2B_2'}\ox\1_{B_2})
 = Y_{A_1A_2B_1'B_2'},
\end{align}
it follows that
$(Y_{A_1A_2B_1'B_2'}, Z_{A_1A_2B_1B_2})$ is a feasible solution. Then
\begin{align}
     F^u(\rho_{A_1B_1}\ox\rho_{A_2B_2})
& \leq \sqrt{\tr[Y_{A_1A_2B_1'B_2'}^{-1}(\rho_{A_1B_1}\ox\rho_{A_2B_2})]
        \tr[Z_{A_1A_2B_1B_2}(\rho_{A_1B_1}\ox\rho_{A_2B_2})]} \\
& =    \sqrt{\tr[Y_{A_1B_1'}^{-1}\rho_{A_1B_1}]\tr[Z_{A_1B_1}\rho_{A_1B_1}]}
        \sqrt{\tr[Y_{A_2B_2'}^{-1}\rho_{A_2B_2}]\tr[Z_{A_2B_2}\rho_{A_2B_2}]}.
\end{align}
Since the above inequality holds for all feasible solutions, we find that
\begin{align}
    F^u(\rho_{A_1B_1}\ox\rho_{A_2B_2}) \leq F^u(\rho_{A_1B_1})F^u(\rho_{A_2B_2}).
\end{align}
This concludes the proof.

\section{Proof of Proposition~\ref{prop:lower-bound}}
\label{appx:lower-bound}

Since a maximally entangled state $\Phi_{AB}\equiv \Phi^K_{AB}$
necessarily has an extension of the form $\Phi_{AB}\otimes\sigma_E$
for some quantum state $\sigma_E$,
the only possible extensions of $\gamma_{ABA^{\prime}B^{\prime}}$ have the following form:%
\begin{align}
\gamma_{ABEA^{\prime}B^{\prime}E^{\prime}}  & =U_{ABA^{\prime}B^{\prime}%
}\left(  \Phi_{AB}\otimes\sigma_{EA^{\prime}B^{\prime}E^{\prime}}\right)
U_{ABA^{\prime}B^{\prime}}^{\dag}\\
& =\frac{1}{d}\sum_{i,j}|i\rangle\!\langle j|_{A}\otimes|i\rangle\!\langle
j|_{B}\otimes U_{A^{\prime}B^{\prime}}^{ii}\sigma_{EA^{\prime}B^{\prime
}E^{\prime}}(U_{A^{\prime}B^{\prime}}^{jj})^{\dag},
\end{align}
where $\sigma_{EA^{\prime}B^{\prime}E^{\prime}}$ is an extension of
$\sigma_{A^{\prime}B^{\prime}}$ such that $E\simeq B$ and $E^{\prime}\simeq
B^{\prime}$. Observe that if we define a different twisting unitary
$V_{ABA^{\prime}B^{\prime}}$ as%
\begin{align}
V_{ABA^{\prime}B^{\prime}}  & \coloneqq \sum_{i,j}|i\rangle\!\langle i|_{A}%
\otimes|j\rangle\!\langle j|_{B}\otimes U_{A^{\prime}B^{\prime}}^{ii}\\
& =\sum_{i}|i\rangle\!\langle i|_{A}\otimes \1_{B}\otimes U_{A^{\prime}B^{\prime
}}^{ii},
\end{align}
then%
\begin{align}
V_{ABA^{\prime}B^{\prime}}\left(  \Phi_{AB}\otimes\sigma_{EA^{\prime}%
B^{\prime}E^{\prime}}\right)  V_{ABA^{\prime}B^{\prime}}^{\dag}  & =\frac
{1}{d}\sum_{i,j}|i\rangle\!\langle j|_{A}\otimes|i\rangle\!\langle j|_{B}\otimes
U_{A^{\prime}B^{\prime}}^{ii}\sigma_{EA^{\prime}B^{\prime}E^{\prime}%
}(U_{A^{\prime}B^{\prime}}^{jj})^{\dag}\\
& =\gamma_{ABEA^{\prime}B^{\prime}E^{\prime}}.%
\end{align}
We have%
\begin{align}
\gamma_{AEA^{\prime}E^{\prime}}  & =\operatorname{Tr}_{BB^{\prime}}%
[\gamma_{ABEA^{\prime}B^{\prime}E^{\prime}}]\\
& =\frac{1}{d}\sum_{i}|i\rangle\!\langle i|_{A}\otimes\operatorname{Tr}%
_{B^{\prime}}[U_{A^{\prime}B^{\prime}}^{ii}\sigma_{EA^{\prime}B^{\prime
}E^{\prime}}(U_{A^{\prime}B^{\prime}}^{ii})^{\dag}].
\end{align}
Now consider that%
\begin{align}
 \mathbf{D}(\gamma_{ABA^{\prime}B^{\prime}}\Vert\gamma_{AEA^{\prime}%
E^{\prime}}) 
& =\mathbf{D}(V_{ABA^{\prime}B^{\prime}}\left(  \Phi_{AB}\otimes
\sigma_{A^{\prime}B^{\prime}}\right)  V_{ABA^{\prime}B^{\prime}}^{\dag}%
\Vert\gamma_{AEA^{\prime}E^{\prime}})\\
& =\mathbf{D}(\Phi_{AB}\otimes\sigma_{A^{\prime}B^{\prime}}\Vert
V_{AEA^{\prime}E^{\prime}}^{\dag}\gamma_{AEA^{\prime}E^{\prime}}%
V_{AEA^{\prime}E^{\prime}})\\
& \geq\mathbf{D}(\Phi_{AB}\Vert\operatorname{Tr}_{A^{\prime}E^{\prime}%
}[V_{AEA^{\prime}E^{\prime}}^{\dag}\gamma_{AEA^{\prime}E^{\prime}%
}V_{AEA^{\prime}E^{\prime}}]).
\end{align}
where the inequality follows from monotonicity under partial trace. Observe
that%
\begin{align}
& V_{AEA^{\prime}E^{\prime}}^{\dag}\gamma_{AEA^{\prime}E^{\prime}%
}V_{AEA^{\prime}E^{\prime}}\nonumber\\
& =\left(  \sum_{i^{\prime}}|i^{\prime}\rangle\!\langle i^{\prime}|_{A}\otimes
\1_{E}\otimes U_{A^{\prime}E^{\prime}}^{i^{\prime}i^{\prime}\dag}\right)
\left(  \frac{1}{d}\sum_{i}|i\rangle\!\langle i|_{A}\otimes\operatorname{Tr}%
_{B^{\prime}}[U_{A^{\prime}B^{\prime}}^{ii}\sigma_{EA^{\prime}B^{\prime
}E^{\prime}}(U_{A^{\prime}B^{\prime}}^{ii})^{\dag}]\right)  \nonumber\\
& \qquad\times\left(  \sum_{i^{\prime\prime}}|i^{\prime\prime}\rangle\!\langle
i^{\prime\prime}|_{A}\otimes \1_{E}\otimes U_{A^{\prime}E^{\prime}}%
^{i^{\prime\prime}i^{\prime\prime}}\right)  \\
& =\frac{1}{d}\sum_{i}|i\rangle\!\langle i|_{A}\otimes U_{A^{\prime}E^{\prime}%
}^{ii\dag}\operatorname{Tr}_{B^{\prime}}[U_{A^{\prime}B^{\prime}}^{ii}%
\sigma_{EA^{\prime}B^{\prime}E^{\prime}}(U_{A^{\prime}B^{\prime}}^{ii})^{\dag
}]U_{A^{\prime}E^{\prime}}^{ii},
\end{align}
so that%
\begin{align}
& \operatorname{Tr}_{A^{\prime}E^{\prime}}[V_{AEA^{\prime}E^{\prime}}^{\dag
}\gamma_{AEA^{\prime}E^{\prime}}V_{AEA^{\prime}E^{\prime}}]
\notag \\
& =\operatorname{Tr}_{A^{\prime}E^{\prime}}\!\left[  \frac{1}{d}\sum
_{i}|i\rangle\!\langle i|_{A}\otimes U_{A^{\prime}E^{\prime}}^{ii\dag
}\operatorname{Tr}_{B^{\prime}}[U_{A^{\prime}B^{\prime}}^{ii}\sigma
_{EA^{\prime}B^{\prime}E^{\prime}}(U_{A^{\prime}B^{\prime}}^{ii})^{\dag
}]U_{A^{\prime}E^{\prime}}^{ii}\right]  \\
& =\frac{1}{d}\sum_{i}|i\rangle\!\langle i|_{A}\otimes\operatorname{Tr}%
_{A^{\prime}E^{\prime}}\!\left[  U_{A^{\prime}E^{\prime}}^{ii\dag}%
\operatorname{Tr}_{B^{\prime}}[U_{A^{\prime}B^{\prime}}^{ii}\sigma
_{EA^{\prime}B^{\prime}E^{\prime}}(U_{A^{\prime}B^{\prime}}^{ii})^{\dag
}]U_{A^{\prime}E^{\prime}}^{ii}\right]  \\
& =\frac{1}{d}\sum_{i}|i\rangle\!\langle i|_{A}\otimes\operatorname{Tr}%
_{A^{\prime}E^{\prime}}\!\left[  \operatorname{Tr}_{B^{\prime}}[U_{A^{\prime
}B^{\prime}}^{ii}\sigma_{EA^{\prime}B^{\prime}E^{\prime}}(U_{A^{\prime
}B^{\prime}}^{ii})^{\dag}]\right]  \\
& =\frac{1}{d}\sum_{i}|i\rangle\!\langle i|_{A}\otimes\operatorname{Tr}%
_{A^{\prime}B^{\prime}E^{\prime}}\!\left[  U_{A^{\prime}B^{\prime}}^{ii}%
\sigma_{EA^{\prime}B^{\prime}E^{\prime}}(U_{A^{\prime}B^{\prime}}^{ii})^{\dag
}\right]  \\
& =\frac{1}{d}\sum_{i}|i\rangle\!\langle i|_{A}\otimes\sigma_{E}\\
& =\pi_{A}\otimes\sigma_{E}.
\end{align}
So we find that%
\begin{align}
\mathbf{D}(\gamma_{ABA^{\prime}B^{\prime}}\Vert\gamma_{AEA^{\prime}E^{\prime}%
})  \geq\mathbf{D}(\Phi_{AB}\Vert\pi_{A}\otimes\sigma_{E})
\geq I_{\mathbf{D}}(A;B)_{\Phi}.
\end{align}
Since this bound holds for an arbitrary extension $\gamma_{ABEA^{\prime}B^{\prime
}E^{\prime}}$\ of $\gamma_{ABA^{\prime}B^{\prime}}$, we conclude
\eqref{eq:lower-bnd-priv-state} after normalizing by $1/2$.

\section{Proof of Theorem~\ref{thm:overhead-bnd-e-max-private}}

\label{appx:overhead-bnd-e-max-private}

We first show that an arbitrary generalized unextendible entanglement $\mathbf{E}^{u}$,
which satisfies selective two-extendible monotonicity,
normalization, and subadditivity, serves as a lower bound on the overhead. Let us suppose that the selective two-extendible
operation $\Lambda$ outputs $\gamma^k_{ABA'B'}$ from $\rho_{AB}^{\ox n}$ with probability $p$.
Without loss of generality, we can assume that
\begin{align}
\Lambda(\rho_{AB}^{\ox n})=p\proj 0_{X_A}\ox\gamma^k_{ABA'B'} + (1-p)\proj 1_{X_A} \ox \tau,
\end{align}
where $\tau$ is some bipartite state and $X_A$ is a flag system in Alice's possession indicating whether the conversion is successful. Since 1-LOCC operations are allowed for free, it is possible for Alice to communicate $X_A$ to Bob for free. We
have
\begin{align}
  n\mathbf{E}^{u}(\rho_{AB})
\geq \mathbf{E}^{u}(\rho_{AB}^{\ox n})
\geq p\mathbf{E}^{u}(\gamma^k_{ABA'B'})
\geq pk,\label{eq:proof-overhead-1-ent}
\end{align}
where the first inequality follows from subadditivity, the second inequality follows from selective two-extendible
monotonicity, and the last inequality follows from Corollary~\ref{cor:lower-bnd-unext-alpha-private}.
Since $n$ and $p$ are arbitrary, we conclude that the
following bound holds for every integer $n \geq 1$ and $p\in(0,1]$:
\begin{align}
\frac{n}{p}\ge  \frac{k}{\mathbf{E}^{u}(\rho_{AB})}.
\end{align}
As the relative-entropy-induced unextendible entanglement $E^{u}(\rho_{AB})$ satisfies all of these required properties,
we conclude the lower bound in \eqref{eq:lower-bound-private}.

\section{Proof of Theorem~\ref{thm:exact-KD}}
\label{appx:exact-KD}

For a given bipartite state $\rho_{AB}$, suppose that there is a two-extendible channel $\Lambda_{AB\to
\hat{A}\hat{B}}$ that transforms $\rho_{AB}$ to a private state $\gamma^k_{\hat{A}\hat{B}A'B'}$ with $k$ private bits.
By the monotonicity of the min-unextendible entanglement, the following inequality holds
\begin{align}
E_{\min}^u(\rho_{AB})
\ge E_{\min}^u(\Lambda_{AB\to \hat{A}\hat{B}}(\rho_{AB}) )
= E_{\min}^u(\gamma^k_{\hat{A}\hat{B}A'B'})
\geq k,
 \label{eq:proof-exact-ED-1}
\end{align}
where the last inequality follows from Corollary~\ref{cor:lower-bnd-unext-alpha-private}.
Therefore, by optimizing over all two-extendible protocols,
it follows that the one-shot exact distillable entanglement of $\rho_{AB}$ is bounded as
\begin{align}
K_{\operatorname{2-EXT}}^{(1)}(\rho_{AB}) \le E_{\min}^u(\rho_{AB}).
\end{align}
Applying the same reasoning to the tensor-power state $\rho_{AB}^{\otimes n}$, we find that
\begin{align}
K_{\operatorname{2-EXT}}(\rho_{AB})
=
\liminf_{n\to \infty} \frac1n K_{\operatorname{2-EXT}}^{(1)}(\rho_{AB}^{\ox n})
\le  \liminf_{n\to \infty} \frac1n E_{\min}^u(\rho_{AB}^{\ox n}) = E_{\min}^u(\rho_{AB}),
\end{align}
where the final equality is a consequence of Proposition~\ref{prop:add-min-unext-ent}.

\section{Proof of Proposition~\ref{prop:unext-erased-state}}
\label{appx:unext-erased-state}

Note that the erased state $\rho^\varepsilon_{A'B}$ can be written in the following direct-sum form:
\begin{align}
  \rho^\varepsilon_{A'B} = (1-\varepsilon)\Phi^2_{AB} \oplus \varepsilon\pi_B
= \begin{bmatrix}
    (1-\varepsilon)\Phi^2_{AB} & 0 \\
    0 & \varepsilon\pi_B
\end{bmatrix}.
\end{align}
From this fact, one can see that each extension $\rho_{A'BB'}$ of $\rho^\varepsilon_{A'B}$ has the form
\begin{align}\label{eq:sigma_ABB}
    \rho_{A'BB'} \coloneqq
    \begin{bmatrix}
    (1-\varepsilon)\sigma_{ABB'} & 0 \\
    0 & \varepsilon\sigma_{BB'}
\end{bmatrix}
\end{align}
such that $\tr_{B'}\sigma_{ABB'}=\Phi^2_{AB}$ and $\tr_{B'}\sigma_{BB'} = \pi_B$.
Furthermore, any extension of $\Phi^2_{AB}$ necessarily has the form
$\sigma_{ABB'} \coloneqq  \Phi^2_{AB}\ox\tau_{B'}$, where $\tau_{B'}$ is a state on system $B'$. Thus
\begin{align}
  E^u(\rho^\varepsilon_{A'B})
& =  \min_{\rho_{A'BB'} \text{~st. Eq.~\eqref{eq:sigma_ABB}}}
      \frac{1}{2}D\left(\rho^\varepsilon_{A'B} \Vert \rho_{A'B'}\right) \\
& =  \min_{\tau_{B'}, \sigma_{B'}}
      \frac{1}{2}D\left(
        \begin{bmatrix}
          (1-\varepsilon)\Phi^2_{AB} & 0 \\
          0 & \varepsilon\pi_B
        \end{bmatrix} \middle\Vert
        \begin{bmatrix}
          (1-\varepsilon)\pi_A\ox\tau_{B'} & 0 \\
          0 & \varepsilon \sigma_{B'}
        \end{bmatrix}
      \right) \\
& =  (1-\varepsilon)\min_{\tau_{B'}}\frac{1}{2}D\left(\Phi^2_{AB}\Vert\pi_A\ox\tau_{B'}\right)
    + \varepsilon\inf_{\sigma_{B'}}\frac{1}{2}D\left(\pi_B\Vert\sigma_{B'}\right) \\
& =  (1-\varepsilon)\frac{1}{2}I(A;B)_\Phi \\
& = (1-\varepsilon),
\end{align}
where $I(A;B)_{\rho}$ is the quantum mutual information of $\rho_{AB}$.

\section{Proof of Proposition~\ref{prop:erasure-state-bounds}}
\label{appx:erasure-state-bounds}

Since $1/E_R(\rho)$ is a lower bound on the overhead of distillation using selective LOCC, 
it follows from the optimality
of erased state distillation under one-way LOCC that $1/(1-\varepsilon)
\geq 1/E_R(\rho_{A'B}^\varepsilon)$. This gives $E_R(\rho_{A'B}^\varepsilon)\geq1-\varepsilon$.

It then suffices to construct a feasible state to achieve the lower bound, and here we follow an approach similar to that from \cite[Proposition~11]{tomamichel2016strong}. Consider the following state
\begin{align}
    \sigma_{A'B} \coloneqq  \frac{1-\varepsilon}{2}\left(\proj{00} + \proj{11}\right)
                  + \varepsilon|e\rangle\!\langle e|\otimes\pi_B.
\end{align}
The state $\sigma_{A'B}$ is a separable state. It then holds that
\begin{align}
    E_R(\rho_{A'B}^\varepsilon)
& \leq  D(\rho_{A'B}^\varepsilon\Vert\sigma_{A'B}) \\
& =     D\left((1-\varepsilon)\Phi^2_{AB} \oplus \varepsilon\pi_B\middle\Vert
          \frac{1-\varepsilon}{2}\left(\proj{00} + \proj{11}\right)
          \oplus\varepsilon\pi_B\right) \\
& = (1-\varepsilon)D(\Phi^2_{AB}\Vert \left(\proj{00} + \proj{11}\right)/2) \\
& = 1-\varepsilon.
\end{align}
This concludes the proof.

\end{document}